\newtheorem{Theo}{Theorem}
\newtheorem{Claim}{Claim}
\newtheorem{Fa}{Fact}
\newtheorem{Def}{Definition}
\newtheorem{Prop}{Property}
\newtheorem{Coro}{Corollary}
\newtheorem{Lem}{Lemma}
\newcommand{\CC}{\mathcal{C}}
\newcommand{\FF}{\mathcal{F}}
\newtcolorbox{open}[1]{colback=orange!30,colframe=orange!50, fonttitle=\bfseries,title=#1}
\newtcolorbox{combox}[1]{colback=orange!30,colframe=orange!50, fonttitle=\bfseries,title=#1}
\newcommand{\pnograph}{\textsc{No Graph}}
\newcommand{\ppath}{\textsc{Linear Forest}}
\newcommand{\pstar}{\textsc{Star}}
\newcommand{\pinterval}{\textsc{Interval}}
\newcommand{\psplit}{\textsc{Split}}
\newcommand{\pforest}{\textsc{Forest}}
\newcommand{\pbipartite}{\textsc{Bipartite}}
\newcommand{\pchordal}{\textsc{Chordal}}
\newcommand{\pcomparability}{\textsc{Comparability}}
\newcommand{\ptriangle}{\textsc{Triangle-Free}}
\newcommand{\mirror}{\text{{\normalfont mirror-}}}
\newcommand{\co}{\text{{\normalfont co-}}}
\newcommand{\esp}{{\normalfont \&}}
\newcommand{\family}[3]{{[#1]\textsubscript{\textcolor{gray}{#2}}: #3.}}
\title{Graph classes and forbidden patterns on three vertices}
\author{Laurent Feuilloley}
{DII, Universidad de Chile}%
{feuilloley@dii.uchile.cl}%
{https://orcid.org/0000-0002-3994-0898}%
{Additional funding: ANR DESCARTES, ANR ESTATE, Inria GANG and DELYS, MIPP and Jos\'e Correa's Amazon Research award.}
\author{Michel Habib}
{IRIF, Paris University}%
{habib@irif.fr}%
{}%
{Additional funding: ANR HOSIGRA, Inria GANG}
\authorrunning{L. Feuilloley and M. Habib}
\begin{document}
\maketitle{}

\begin{abstract}
 
This paper deals with the characterization and the recognition of graph classes. 
A popular way to characterize a graph class is to list a minimal set of forbidden induced subgraphs. 
Unfortunately, this strategy hardly ever leads to a very efficient recognition algorithm. 
On the other hand, many graph classes can be efficiently recognized by techniques that use some ordering of the nodes, such as the one given by a  traversal. 

We study specifically graphs that have an ordering avoiding some ordered structures. 
More precisely, we consider structures that we call \emph{patterns on three nodes}, and the complexity of recognizing the classes associated with such patterns. In this domain, there are three key previous works. 
In~\cite{Skrien82} and~\cite{Damaschke90} (independently), Skrien and Damashke noted that several graph classes such as chordal, bipartite, interval and comparability graphs have a characterization in terms of forbidden patterns.
On the algorithmic side, Hell, Mohar and Rafiey proved that any class defined by a set of forbidden  patterns on three nodes can be recognized in time $O(n^3)$, using an algorithm based on 2-SAT~\cite{HellMR14}. 
We improve on these two lines of works, by characterizing systematically all the classes defined by sets of forbidden patterns (on three nodes), and proving that among the 22 different classes (up to complement) that we find, 20 can actually be recognized in linear time. 

Beyond these results, we consider that this type of characterization is very useful from an algorithmic perspective, leads to a rich structure of classes, and generates many algorithmic and structural open questions worth investigating.  
\end{abstract}

\newpage{}



\section{Introduction}
\paragraph*{Forbidden structures in graph theory.} 
A class of graphs is hereditary if for any graph~$G$ in the class, every induced subgraph of~$G$ also belongs to the class. 
Given a hereditary class~$\CC$, there exists a family~$\FF$ of graphs, such that a graph~$G$ belongs to~$\CC$, if and only if, $G$ does not contain any graph of~$\FF$ as an induced subgraph.
Hence hereditary classes are defined by forbidden structures. 
For a given class $\CC$, a trivial family~$\FF$ is the set of all graphs not in $\CC$, but the interesting families are the minimal ones. 
If we were to replace the induced subgraph relation by the minor relation, a celebrated theorem of Robertson and Seymour states that these families are always finite, but here the family needs not be finite, as exemplified by bipartite graphs (where the set of forbidden structures is the set of odd cycles). 
There exist many characterizations of classes by forbidden subgraphs in the literature, ranging from easy to extremely difficult to prove, as for example the Strong Perfect Graph Theorem  \cite{ChudnovskyRST06}.

\paragraph*{Forbidden ordered structures.}
Another way of defining hereditary classes by forbidden structures is the following. 
Consider a graph~$H$ given with a fixed ordering on its vertices. 
Then a graph~$G$ belongs to the class associated with~$H$, if and only if, there exists an ordering of the vertices of~$G$, such that none of its subgraphs induces a copy of~$H$ with the given ordering. 
Let us illustrate such characterization with chordal graphs. 
Chordal graphs are usually defined as the graphs that do not contain any induced cycle of length at least $4$. 
However it is also well known \cite{Dirac61} that chordal graphs are exactly the graphs that admit a simplicial elimination ordering, that is an ordering on the vertices such that the neighbors of a vertex that are placed before it in the ordering induce a clique. 
In the framework described above, this characterization corresponds to $H$ being the path on $3$ vertices with the middle vertex placed last in the ordering. 
Similar characterizations are known for well-studied classes such as proper interval, interval, permutation and cocomparability graphs. 

We focus on such characterizations, and refer to forbidden ordered induced subgraphs (or more precisely an equivalent trigraph version of it) as \emph{patterns}.

\paragraph*{Forbidden patterns and algorithms.}

A motivation to study characterizations by patterns is that they are related to efficient recognition algorithms. 
That is, unlike forbidden subgraph characterizations, forbidden patterns characterization often translate into fast algorithms to recognize the class, most of the time \emph{linear-time} algorithms. 
Such algorithms compute an ordering avoiding the forbidden patterns, which means that in addition of deciding whether the graph is in the class or not, they provide a certificate for the positive cases.  
The most famous examples are for chordal graphs~\cite{TarjanY84}, proper interval graphs~\cite{Corneil04}, and interval graphs~\cite{CorneilOS09}. 

These algorithms are fast because they use simple graph searches, in particular the lexicographic breadth first search, LexBFS. 
Indeed the orderings given by searches have a special structure~\cite{CorneilK08}, that matches the one of forbidden patterns.

\paragraph*{Previous works on forbidden patterns.} 
The papers \cite{Skrien82} and \cite{Damaschke90} by Skiren and Damaschke respectively are, as far as we know, the first works to consider characterizations by patterns as a topic in itself.
From these seminal papers, one can derive that all the classes defined by one forbidden pattern on three node can be recognized in polynomial time. 
This is for example the case for the chordal graphs mentioned above.
More recently, it was proved that forbidding \emph{a set} of patterns on three nodes, still leads to a polynomially solvable problem~\cite{HellMR14}. This stands in striking contrast with the case of larger patterns, as it was shown in~\cite{DuffusGR95} that almost all classes defined by 2-connected patterns are NP-complete to recognize. In~\cite{HellMR14}, the authors conjectured a dichotomy on this type of recognition problem, but this statement has been recently challenged~\cite{Nesetril17}.  

More generally, forbidden ordered structures have been used recently in a variety of contexts.
Among others, they have been applied to characterize graphs with bounded asteroidal number~\cite{CorneilS15}, to study intersection graphs~\cite{Wood04}, to prove that the square of the line-graph of a chordal graph is chordal~\cite{BrandstadtH08}, and to study maximal induced matching algorithms~\cite{HabibM17}.

\paragraph*{Our results}
Forbidding only one pattern on three nodes already gives rise to a rich family of graph classes.  
Yet, despite the general algorithmic result of \cite{HellMR14}, little is known about the classes defined by \emph{a set} of patterns on three nodes. 
Our main contribution is an exhaustive list of all the classes defined this way. 
Along the way several interesting results and insights are gathered.
A corollary of this characterization is that almost all the classes considered can be recognized not only in polynomial time, but in linear time.
Beyond these technical contributions, our goal is to unify many results scattered in the literature, and show that this formalism can be useful and relevant. In this sense, the paper also serves as a survey of this type of hereditary classes of graphs.

\paragraph*{Outline of the paper.}

The paper is organized as follows. 
In Section~\ref{sec:appetizer}, we motivate the study of patterns by listing the well-known classes characterized by one pattern on three node.
In Section~\ref{sec:definitions}, we formally define the pattern characterizations, and prove structural properties about the classes defined this way.   
Section~\ref{sec:main-theorems} contains the main theorem of the paper, that is the complete characterization of all the classes defined by sets of patterns on three nodes.
The proof of this theorem is a long case analysis, and to make it nicer, we first highlight and prove some remarkable characterizations in Section~\ref{sec:special}, before completing the proof in Section~\ref{sec:main-proof}.
In Section~\ref{sec:algorithms}, we deal with the algorithmic aspects, in particular the linear-time recognition. 
Finally we discuss related topics and open questions in Section~\ref{sec:discussions}.


\section{Appetizer: classes defined by one pattern}
\label{sec:appetizer}

In this section, we introduce a few classes characterized by one pattern on three nodes. 
We do so before defining formally pattern characterizations, as an appetizer for the rest of the paper. 
All these characterizations are known, and in particular they are listed in \cite{Damaschke90}.\footnote{For completeness we prove these characterizations in Section~\ref{sec:special}, as no proof can be found in \cite{Damaschke90}.}
The main take-away of this short section is that essential graph classes can be defined by one pattern on three nodes, hence such characterizations are important.

Let us introduce a graphic intuitive representation of patterns.
Consider the drawing of Figure~\ref{fig:graphic-interval}.

\begin{figure}[!h]
\begin{center}
\scalebox{1.8}{
\begin{tikzpicture}
			[scale=0.7,auto=left,every node/.style=		
			{circle,draw,fill=black!5}]
			\node (a) at (0,0) {};
			\node (b) at (1,0) {};
			\node (c) at (2,0) {};
			\draw (a) to[bend left=50] (c);
			\draw[dashed] (a) to (b);
		\end{tikzpicture}}
\end{center}
\caption{\label{fig:graphic-interval} The graphic representation of the pattern associated with the class of interval graphs.}
\end{figure}
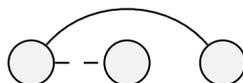

The class associated with this pattern is the class of graphs $G=(V,E)$ that have an ordering of their vertices such that: there is no ordered triplet of nodes, $a<b<c$, such that $(a,b)\notin E$ and $(a,c)\in E$. 
In other words, the forbidden configuration consists in a non-edge (dashed edge in the drawing) between the two first nodes, an edge (plain edge in the drawing) between the first and the last node, and there is no constraint (no edge in the drawing) on the edge between the second and the third node (that is, whether there is an edge or not, the configuration forbidden). 

\begin{Theo}(\cite{Damaschke90})\label{thm:Damaschke}
In Table~\ref{tab:one-pattern-3}, each class of the first column is characterized by the pattern  represented in the second column. (The third column is the name we give to this pattern.) 
\end{Theo}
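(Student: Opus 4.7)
The plan is to prove Theorem~\ref{thm:Damaschke} by case analysis, with one case for each row of Table~\ref{tab:one-pattern-3}. For each class $\CC$ together with its associated pattern $P$, I will establish separately (i) that every $G \in \CC$ admits an ordering of its vertex set avoiding $P$, and (ii) the converse. The forward direction consists in producing such an ordering from a known combinatorial certificate of membership in $\CC$ (a perfect elimination ordering, a linear extension of a transitive orientation, a bipartition, etc.), while the converse recovers the defining structural property of $\CC$ from the local constraint imposed by the absence of $P$.

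For several rows I would invoke a classical theorem and then simply verify that the ordering it delivers avoids the pattern drawn in the table. The chordal case reduces immediately to Dirac's theorem on perfect elimination orderings: an ordering in which each vertex's earlier neighbors form a clique is, term by term, an ordering with no triple $a < b < c$ inducing a $P_3$ with the middle vertex placed last. The interval case is the Olariu--Ramalingam--Rangan reformulation already suggested by Figure~\ref{fig:graphic-interval}. For comparability graphs, I would take any linear extension of the transitive orientation guaranteed by Gallai's theorem. For bipartite graphs, listing one side and then the other of a bipartition gives the ordering; for forests a DFS postorder works; for split graphs, placing a maximum clique before the remaining independent set suffices; for triangle-free, star, linear forest, and the trivial \pnograph\ class, the translation between pattern and combinatorial definition is direct.

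The converse directions are generally the easier half: in each case, an ordering locally avoiding the forbidden triple lets one reconstruct a global object (a simplicial vertex, a transitive orientation, a bipartition, a tree, etc.) by induction on $|V(G)|$, typically by peeling off the first or last vertex of the ordering and checking that the induced subgraph still satisfies the hypothesis. A single generic lemma to the effect that ``the class defined by $P$ is hereditary and closed under the peel-off operation'' could in principle factor out most of these inductive steps.

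The main obstacle I anticipate is not mathematical depth but careful bookkeeping: each pattern must be matched, edge by edge, against the combinatorial characterization in the exact order of the three vertices, distinguishing plain edges, dashed non-edges, and unconstrained pairs, and also distinguishing the $<$-role of each vertex (first, middle, or last in the ordering). To keep the proof readable, I would devote one short paragraph to each row of the table, citing the relevant classical result in one direction and giving a one-line direct argument in the other, rather than attempting a unified proof that would force the reader to juggle all ten characterizations simultaneously.
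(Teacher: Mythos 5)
Your overall strategy coincides with the paper's: a row-by-row case analysis in which the forward direction exhibits an ordering coming from a classical certificate of membership and the converse recovers the structural property directly from the local constraint. The paper does exactly this (citing the literature outright for the chordal, interval and comparability rows), so there is no methodological divergence. However, two of the concrete orderings you propose would fail as stated. For the forest row, the forbidden configuration of \pforest{} is a vertex with \emph{two neighbours placed before it}; a DFS \emph{postorder} of a rooted tree places every child before its parent, so any vertex with at least two children realizes the pattern. What you need is a discovery order (preorder, BFS, or, as the paper uses, a generic search order), in which each non-root vertex of a tree is preceded by exactly one of its neighbours, namely its parent. For the split row, the pattern \psplit{} as the paper orients it is an edge $(a,b)$ followed by a non-edge $(b,c)$, which forces the independent set to come \emph{before} the clique: placing the clique first realizes the pattern at any non-initial clique vertex having a non-neighbour in the independent part. (Your ordering avoids the mirror pattern, which defines the same class but is not the pattern of the table, so the forward direction must be stated for the correct orientation.) A further small point there: the complement of a maximum clique of a split graph need not be an independent set --- take $K$ to be an end edge of a $P_4$ --- so the ordering should be built from a split partition rather than from an arbitrary maximum clique.

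These are bookkeeping errors of precisely the kind you flag as the main risk, and both are locally repairable without changing the architecture of the proof. The rest of the plan (including the converse directions, which the paper likewise handles by short direct arguments such as the degree bound for \ppath{} and the ``last vertex of a cycle'' observation for \pforest{}) matches the paper's proof in substance.
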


\begin{table}[!h]
\begin{center}
\fbox{
\begin{tabular}{ccc}
Graph class & Pattern representation & Pattern name\\
\hline
\\[-0.15cm]
Linear forests & \begin{tikzpicture}[scale=0.7,auto=left, every node/.style=	{circle, draw, fill=black!5}]
	\node (a) at (0,0) {};
	\node (b) at (1,0) {};
	\node (c) at (2,0) {};
	\draw (a) to[bend left=50] (c);
\end{tikzpicture} & \ppath{}\vspace{0.27cm}\\
Stars & \begin{tikzpicture}[scale=0.7,auto=left,every node/.style=	 {circle, draw, fill=black!5}]
	\node (a) at (0,0) {};
	\node (b) at (1,0) {};
	\node (c) at (2,0) {};
	\draw (a) to (b);

\end{tikzpicture}& \pstar{}\\
Interval graphs & \begin{tikzpicture}
			[scale=0.7,auto=left,every node/.style=		
			{circle,draw,fill=black!5}]
			\node (a) at (0,0) {};
			\node (b) at (1,0) {};
			\node (c) at (2,0) {};
			\draw (a) to[bend left=50] (c);
			\draw[dashed] (a) to (b);
		\end{tikzpicture}& \pinterval{} \vspace{0.27cm}\\
Split graphs& \input{split.tex}& \psplit{}\\
Forest & \begin{tikzpicture}
			[scale=0.7,auto=left,every node/.style=		
			{circle,draw,fill=black!5}]
			\node (a) at (0,0) {};
			\node (b) at (1,0) {};
			\node (c) at (2,0) {};
			\draw (a) to[bend left=50] (c);
			\draw (b) to (c);
		\end{tikzpicture}& \pforest{}\vspace{0.27cm}\\
Bipartite graphs & \begin{tikzpicture}
	[scale=0.7,auto=left,every node/.style=		
	{circle,draw,fill=black!5}]
	\node (a) at (0,0) {};
	\node (b) at (1,0) {};
	\node (c) at (2,0) {};
	\draw (a) to (b);
	\draw (b) to (c);
\end{tikzpicture}& \pbipartite{}\\
Chordal graphs& \begin{tikzpicture}
	[scale=0.7,auto=left,every node/.style=		
	{circle,draw,fill=black!5}]
	\node (a) at (0,0) {};
	\node (b) at (1,0) {};
	\node (c) at (2,0) {};
	\draw (a) to[bend left=50] (c);
	\draw[dashed] (a) to (b);
	\draw (b) to (c);
\end{tikzpicture}& \pchordal{}\\
Comparability graphs & \begin{tikzpicture}
			[scale=0.7,auto=left,every node/.style=		
			{circle,draw,fill=black!5}]
			\node (a) at (0,0) {};
			\node (b) at (1,0) {};
			\node (c) at (2,0) {};
			\draw[dashed] (a) to[bend left=50] (c);
			\draw (a) to (b);
			\draw (b) to (c);
		\end{tikzpicture}& \pcomparability\\
Triangle-free graphs & \begin{tikzpicture}
			[scale=0.7,auto=left,every node/.style=		
			{circle,draw,fill=black!5}]
			\node (a) at (0,0) {};
			\node (b) at (1,0) {};
			\node (c) at (2,0) {};
			\draw (a) to[bend left=50] (c);
			\draw (a) to (b);
			\draw (b) to (c);
		\end{tikzpicture}& \ptriangle{}\\[0.27cm]
At most two nodes & \begin{tikzpicture}[scale=0.7,auto=left, every node/.style=	{circle, draw, fill=black!5}]
	\node (a) at (0,0) {};
	\node (b) at (1,0) {};
	\node (c) at (2,0) {};
\end{tikzpicture}& \pnograph 
\end{tabular}
}
\vspace{0.5cm}
\caption{\label{tab:one-pattern-3} The table of Theorem~\ref{thm:Damaschke}. For each row the class on the first column is characterized by the pattern represented in the second column, whose name is on the third column.}
\end{center}
\end{table}


\section{Definitions and structural properties}
\label{sec:definitions}

The current paper aims at doing a thorough study of the classes defined by patterns on three nodes, and of their relations. 
In this section we give the formal definitions and introduce structural properties of these classes. 
More precisely, we start in Subsection~\ref{subsec:def} by defining formally the  patterns and related concepts, and in Subsection~\ref{subsec:graph-classes} by listing the definitions of the graph classes we will use.
Then in Subsection~\ref{subsec:operations} we define some basic operations on patterns and in Subsections~\ref{subsec:basic}, \ref{subsec:pattern-split}, and~\ref{subsec:union-intersection}, we describe important structural properties of the classes defined by patterns.

\subsection{Definitions related to patterns}
\label{subsec:def}

In all the paper we deal with finite loopless undirected graphs, and multiple edges are not allowed.
For such a graph $G$, we denote by $V(G)$ the set of vertices and $E(G)$ its set of edges, with the usual $|V(G)|=n$ and $|E(G)|=m$ for complexity evaluations.
The graph we consider are not necessarily connected.

To define patterns, we use the vocabulary of trigraphs as, for example, in \cite{Chudnovsky06}.

\begin{Def}
A \emph{trigraph} $T$ is a $4$-tuple $(V(T),E(T),N(T),U(T))$ where $V(T)$ is the vertex set and every unordered pair of vertices belongs to one of the three disjoint sets $E(T)$, $N(T)$, and $U(T)$, called the \emph{edges}, \emph{non-edges} and \emph{undecided edges}, respectively. 
A graph $G=(V(G),E(G))$ is a \emph{realization} of a trigraph $T$ if $V(G)=V(T)$ and $E(G)=E(T)\cup U'$, where $U'\subset U(T)$.
\end{Def}

When representing a trigraph, we will draw plain lines for edges, dashed lines for non edges, and nothing for undecided edges. 
Also as $(E,N,U)$ is a partition of the unordered pairs, it is enough to give any two of these sets to define the trigraph, and we will often define a trigraph by giving only $E$ and $N$.

\begin{Def}
An \emph{ordered graph} is a graph given with a total ordering of its vertices. A \emph{pattern} is an ordered trigraph. An ordered graph is a \emph{realization} of a pattern if they have the same set of vertices, with the same linear ordering, and the graph is a realization of the trigraph. 
When, in an ordered graph, no ordered subgraph is the realization of given pattern, the ordered graph \emph{avoids} the pattern.
\end{Def}

In this formalism, the pattern for interval graphs represented in Figure~\ref{fig:graphic-interval} is $(E,N)=(\{(1,3)\},\{(1,2)\})$. 
For simplicity we give names to the patterns, related to the classes they characterize. These names are in capital letters to avoid confusion with the classes. For example the one of Figure~\ref{fig:graphic-interval} is called \pinterval{}.
More generally, the names follow from Theorem~\ref{thm:Damaschke}.
The list of the names of all the patterns is given in Figure~\ref{fig:27-patterns}. The pattern that have no undecided edge are called \emph{full patterns}.

\begin{Def}
Given a family of patterns $\mathcal{F}$, the class $\CC_{\FF}$ is the set of connected graphs that have the following property: there exists an ordering of the nodes that avoids all the patterns in $\mathcal{F}$. 
\end{Def}

To make these definitions more concrete, let us find out which classes are characterized by patterns on two nodes, $V=\{1,2\}$. 
Forbidding the pattern $(E,N)=(\{(1,2)\},\emptyset)$  means that the graphs we consider have a vertex ordering such that there is no pair of nodes $a<b$ with an edge $(a,b)$. 
This implies that the graph has actually no edge thus this is the class of independent sets.
Similarly forbidding the pattern $(E,N)=(\emptyset,\{(1,2)\})$ leads to the cliques.
Finally $(E,N)=(\emptyset,\emptyset)$ corresponds to a trivial class: only the graph with one node does not have two nodes that are either linked or not by an edge. 

Finally, if $\mathcal{F}$ consists of only one pattern $P$, we  write $P$ instead of $\{P\}$, thus $\CC_P$ instead of~$\CC_{\{P\}}$.

\subsection{Operations on patterns and families}
\label{subsec:operations}
We define a few operations on patterns and pattern families.

\begin{Def} The \emph{mirror} and \emph{complement} operations are the following:
\begin{itemize}
\item The \emph{mirror of a pattern} is the same pattern, except for the vertex ordering, which  is reversed. The \emph{mirror of a family} $\FF$ is the set of the mirrors of the patterns of the family, and is denoted by \mirror$\FF$.
\item The \emph{complement of a pattern} $(V,E,N)$ is the pattern $(V',E',N')$ with $V'=V$, $E'=N$, and $N'=E$, that is, the pattern where the edges and non-edges have been exchanged. The \emph{complement of a  family} $\FF$ is the set of the complements of the patterns of the family, and is denoted by \co$\FF$.  
\item A pattern $P_2$ is an \emph{extension} of a pattern $P_1$, if it can be obtained by taking $P_1$, and having the possibility to add nodes and to decide undecided edges. A family $\FF_2$ \emph{extends a family} $\FF_1$, if every pattern of $\FF_1$ has an extension in $\FF_2$, and every pattern in $\FF_2$ is an extension of a pattern in $\FF_1$. 
\end{itemize} 
\end{Def}

\subsection{Basic structural properties}
\label{subsec:basic}
We now list some basic structural properties of the classes defined by forbidden patterns. Most of them also appear in \cite{Damaschke90}.
We omit the proofs, as they follow directly from the definitions.

\begin{Prop}\label{prop:basic}
The following properties hold for any pattern family $\FF$.

\begin{enumerate}
\item (Vertex closure) \label{item:closure} The class $\CC_\FF$ is closed under vertex deletion, that is, is hereditary.
\item (Edge Closure) If for every pattern $P$ of the family, $N(P)=\emptyset$, then the class is closed under edge deletion.
\item(Mirror) \label{item:mirror} The family $\mirror\FF$ defines the same graph class as $\FF$, that is, ${\CC_{\mirror\FF}=\CC_\FF}$.
\item (Exchange--Complement) The class  $\CC_{\co\FF}$ defines the complement class of $\CC_\FF$. 
\item (Union) \label{item:union}Given two families $\FF_1$ and $\FF_2$,  $\CC_{\FF_1 \cup \FF_2} \subseteq \CC_{\FF_1} \cap  \CC_{\FF_2}$.
\item (Extension) \label{item:extension} If a family $\FF_2$ extends a family $\FF_1$ then $\CC_{\FF_1} \subseteq \CC_{\FF_2}$.
\end{enumerate}
\end{Prop}
%

Item~\ref{item:closure} states that the classes defined by patterns are hereditary. The converse is also true (if we allow infinite families). 
Indeed any hereditary family has a characterization by a family of forbidden induced subgraphs, and such characterization can automatically be translated into a characterization by patterns: just take the union of all the orderings of all the forbidden subgraphs. 


\subsection{Pattern names}
\label{subsec:pattern-names}

In order to designate patterns in an efficient way, we gave them names and numbers described in Figure~\ref{fig:27-patterns}. The names are inspired by Theorem~\ref{thm:Damaschke} and the basic operations of Subsection~\ref{subsec:basic}.
\begin{figure}[!h]
\hspace{-0.6cm}
\scalebox{0.7}{

\begin{tabular}{ccc}
\begin{tabular}{cc}
0: \ptriangle{}
&
\begin{tikzpicture}
	[scale=1,auto=left,every node/.style=		
	{circle,draw,fill=black!5}]
	\node (a) at (0,0) {};
	\node (b) at (1,0) {};
	\node (c) at (2,0) {};
	\draw (a) to (b);
	\draw (a) to[bend left=50] (c);
	\draw (b) to (c);
\end{tikzpicture}\\
1: mirror-\pchordal{}
&
\begin{tikzpicture}
	[scale=1,auto=left,every node/.style=		
	{circle,draw,fill=black!5}]
	\node (a) at (0,0) {};
	\node (b) at (1,0) {};
	\node (c) at (2,0) {};
	\draw (a) to (b);
	\draw (a) to[bend left=50] (c);
	\draw[dashed] (b) to (c);
\end{tikzpicture}\\
2: \pcomparability
&
\begin{tikzpicture}
	[scale=1,auto=left,every node/.style=		
	{circle,draw,fill=black!5}]
	\node (a) at (0,0) {};
	\node (b) at (1,0) {};
	\node (c) at (2,0) {};
	\draw (a) to (b);
	\draw[dashed] (a) to[bend left=50] (c);
	\draw (b) to (c);
\end{tikzpicture}\\
3 co-\pchordal{}
&
\begin{tikzpicture}
	[scale=1,auto=left,every node/.style=		
	{circle,draw,fill=black!5}]
	\node (a) at (0,0) {};
	\node (b) at (1,0) {};
	\node (c) at (2,0) {};
	\draw (a) to (b);
	\draw[dashed] (a) to[bend left=50] (c);
	\draw[dashed] (b) to (c);
\end{tikzpicture}\\
4: \pchordal{}
&
\begin{tikzpicture}
	[scale=1,auto=left,every node/.style=		
	{circle,draw,fill=black!5}]
	\node (a) at (0,0) {};
	\node (b) at (1,0) {};
	\node (c) at (2,0) {};
	\draw[dashed] (a) to (b);
	\draw (a) to[bend left=50] (c);
	\draw (b) to (c);
\end{tikzpicture}\\
5: co-\pcomparability
&
\begin{tikzpicture}
	[scale=1,auto=left,every node/.style=		
	{circle,draw,fill=black!5}]
	\node (a) at (0,0) {};
	\node (b) at (1,0) {};
	\node (c) at (2,0) {};
	\draw[dashed] (a) to (b);
	\draw (a) to[bend left=50] (c);
	\draw[dashed] (b) to (c);
\end{tikzpicture}\\
6: mirror-co-\pchordal{}
&
\begin{tikzpicture}
	[scale=1,auto=left,every node/.style=		
	{circle,draw,fill=black!5}]
	\node (a) at (0,0) {};
	\node (b) at (1,0) {};
	\node (c) at (2,0) {};
	\draw[dashed] (a) to (b);
	\draw[dashed] (a) to[bend left=50] (c);
	\draw (b) to (c);
\end{tikzpicture}\\
7: co-\ptriangle{}
&
\begin{tikzpicture}
	[scale=1,auto=left,every node/.style=		
	{circle,draw,fill=black!5}]
	\node (a) at (0,0) {};
	\node (b) at (1,0) {};
	\node (c) at (2,0) {};
	\draw[dashed] (a) to (b);
	\draw[dashed] (a) to[bend left=50] (c);
	\draw[dashed] (b) to (c);
\end{tikzpicture}\\
\end{tabular}

&

\begin{tabular}{cc}
8: \pforest{}
&
\begin{tikzpicture}
	[scale=1,auto=left,every node/.style=		
	{circle,draw,fill=black!5}]
	\node (a) at (0,0) {};
	\node (b) at (1,0) {};
	\node (c) at (2,0) {};
	\draw (a) to[bend left=50] (c);
	\draw (b) to (c);
\end{tikzpicture}\\
9: mirror-\pinterval{}
&
\begin{tikzpicture}
	[scale=1,auto=left,every node/.style=		
	{circle,draw,fill=black!5}]
	\node (a) at (0,0) {};
	\node (b) at (1,0) {};
	\node (c) at (2,0) {};
	\draw (a) to[bend left=50] (c);
	\draw[dashed] (b) to (c);
\end{tikzpicture}\\
10: mirror-co-\pinterval{}
&
\begin{tikzpicture}
	[scale=1,auto=left,every node/.style=		
	{circle,draw,fill=black!5}]
	\node (a) at (0,0) {};
	\node (b) at (1,0) {};
	\node (c) at (2,0) {};
	\draw[dashed] (a) to[bend left=50] (c);
	\draw (b) to (c);
\end{tikzpicture}\\
11: co-\pforest{}
&
\begin{tikzpicture}
	[scale=1,auto=left,every node/.style=		
	{circle,draw,fill=black!5}]
	\node (a) at (0,0) {};
	\node (b) at (1,0) {};
	\node (c) at (2,0) {};
	\draw[dashed] (a) to[bend left=50] (c);
	\draw[dashed] (b) to (c);
\end{tikzpicture}\\
\\
12: \pbipartite{}
&
\begin{tikzpicture}
	[scale=1,auto=left,every node/.style=		
	{circle,draw,fill=black!5}]
	\node (a) at (0,0) {};
	\node (b) at (1,0) {};
	\node (c) at (2,0) {};
	\draw (a) to (b);
	\draw (b) to (c);
\end{tikzpicture}\\
\\
13: \psplit{}
&
\begin{tikzpicture}
	[scale=1,auto=left,every node/.style=		
	{circle,draw,fill=black!5}]
	\node (a) at (0,0) {};
	\node (b) at (1,0) {};
	\node (c) at (2,0) {};
	\draw (a) to (b);
	\draw[dashed] (b) to (c);
\end{tikzpicture}\\
\\
14: mirror-\psplit{}=co-\psplit{}
&
\begin{tikzpicture}
	[scale=1,auto=left,every node/.style=		
	{circle,draw,fill=black!5}]
	\node (a) at (0,0) {};
	\node (b) at (1,0) {};
	\node (c) at (2,0) {};
	\draw[dashed] (a) to (b);
	\draw (b) to (c);
\end{tikzpicture}\\
\\
15: co-\pbipartite{}
&
\begin{tikzpicture}
	[scale=1,auto=left,every node/.style=		
	{circle,draw,fill=black!5}]
	\node (a) at (0,0) {};
	\node (b) at (1,0) {};
	\node (c) at (2,0) {};
	\draw[dashed] (a) to (b);
	\draw[dashed] (b) to (c);
\end{tikzpicture}\\
16: mirror-\pforest{}
&
\begin{tikzpicture}
	[scale=1,auto=left,every node/.style=		
	{circle,draw,fill=black!5}]
	\node (a) at (0,0) {};
	\node (b) at (1,0) {};
	\node (c) at (2,0) {};
	\draw (a) to (b);
	\draw (a) to[bend left=50] (c);
\end{tikzpicture}\\
17: co-\pinterval{}
&
\begin{tikzpicture}
	[scale=1,auto=left,every node/.style=		
	{circle,draw,fill=black!5}]
	\node (a) at (0,0) {};
	\node (b) at (1,0) {};
	\node (c) at (2,0) {};
	\draw (a) to (b);
	\draw[dashed] (a) to[bend left=50] (c);
\end{tikzpicture}\\
18: \pinterval{}
&
\begin{tikzpicture}
	[scale=1,auto=left,every node/.style=		
	{circle,draw,fill=black!5}]
	\node (a) at (0,0) {};
	\node (b) at (1,0) {};
	\node (c) at (2,0) {};
	\draw[dashed] (a) to (b);
	\draw (a) to[bend left=50] (c);
\end{tikzpicture}\\
19: mirror-co-\pforest{}
&
\begin{tikzpicture}
	[scale=1,auto=left,every node/.style=		
	{circle,draw,fill=black!5}]
	\node (a) at (0,0) {};
	\node (b) at (1,0) {};
	\node (c) at (2,0) {};
	\draw[dashed] (a) to (b);
	\draw[dashed] (a) to[bend left=50] (c);
\end{tikzpicture}\\
\end{tabular}

&

\begin{tabular}{cc}
20: mirror-\pstar{}
&
\begin{tikzpicture}
	[scale=1,auto=left,every node/.style=		
	{circle,draw,fill=black!5}]
	\node (a) at (0,0) {};
	\node (b) at (1,0) {};
	\node (c) at (2,0) {};
	\draw (b) to (c);
\end{tikzpicture}\\
\\
21: mirror-co-\pstar{}
&
\begin{tikzpicture}
	[scale=1,auto=left,every node/.style=		
	{circle,draw,fill=black!5}]
	\node (a) at (0,0) {};
	\node (b) at (1,0) {};
	\node (c) at (2,0) {};
	\draw[dashed] (b) to (c);
\end{tikzpicture}\\
22: \ppath{}
&
\begin{tikzpicture}
	[scale=1,auto=left,every node/.style=		
	{circle,draw,fill=black!5}]
	\node (a) at (0,0) {};
	\node (b) at (1,0) {};
	\node (c) at (2,0) {};
	\draw(a) to[bend left=50] (c);
\end{tikzpicture}\\
23: co-\ppath{}
&
\begin{tikzpicture}
	[scale=1,auto=left,every node/.style=		
	{circle,draw,fill=black!5}]
	\node (a) at (0,0) {};
	\node (b) at (1,0) {};
	\node (c) at (2,0) {};
	\draw[dashed] (a) to[bend left=50] (c);
\end{tikzpicture}\\
\\
24: \pstar{}
&
\begin{tikzpicture}
	[scale=1,auto=left,every node/.style=		
	{circle,draw,fill=black!5}]
	\node (a) at (0,0) {};
	\node (b) at (1,0) {};
	\node (c) at (2,0) {};
	\draw (a) to (b);
\end{tikzpicture}\\
\\
25: co-\pstar{}
&
\begin{tikzpicture}
	[scale=1,auto=left,every node/.style=		
	{circle,draw,fill=black!5}]
	\node (a) at (0,0) {};
	\node (b) at (1,0) {};
	\node (c) at (2,0) {};
	\draw[dashed] (a) to (b);
\end{tikzpicture}\\
\\
\\
\\
26: \pnograph{}
&
\begin{tikzpicture}
	[scale=1,auto=left,every node/.style=		
	{circle,draw,fill=black!5}]
	\node (a) at (0,0) {};
	\node (b) at (1,0) {};
	\node (c) at (2,0) {};
\end{tikzpicture}\\
\end{tabular}

\end{tabular}
}
\caption{\label{fig:27-patterns} The 27 patterns on three nodes. By convention since mirror-\psplit{}=co-\psplit{}, we will ignore the pattern mirror-\psplit{}.}
\end{figure}
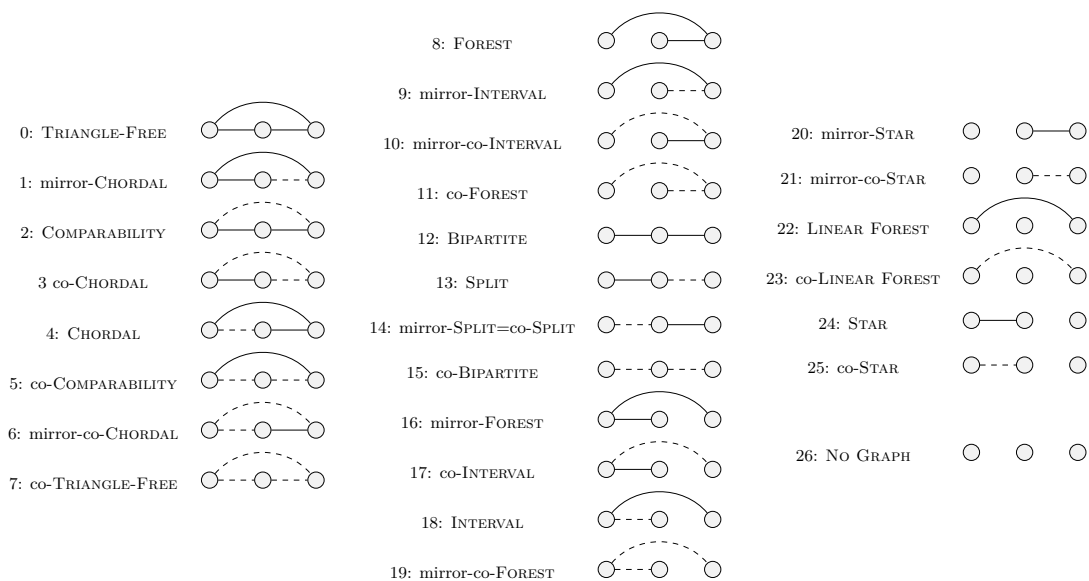

\subsection{Pattern split rule}
\label{subsec:pattern-split}

Let us now describe the \emph{pattern split rule}, which is basically a rewriting rule, that we will use extensively. 
It states that a pattern that has an undecided edge can be replaced by two patterns (that is, can be \emph{split} into two patterns): one where this edge is a (plain) edge, and one where it is a non-edge. 

\begin{Lem}\label{lem:pattern-split}
Let $\FF$ be a pattern family. Let $P=(V,E,N)$ be a pattern of $\FF$, and $e$ be an undecided edge of $P$. 
An ordered graph avoids the patterns of $\FF$, if and only if, it avoids the patterns of $\FF'$, where $\FF'$ is that same as $\FF$ except that $P$ that has been replaced by $P_1=(V,E\cup{e},N)$ and $P_2=(V,E,N\cup{e})$. 
\end{Lem}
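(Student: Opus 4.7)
The plan is to reduce the claim to a single-pattern statement: an ordered graph $G$ contains a realization of $P$ (on some ordered triple) if and only if it contains a realization of $P_1$ or of $P_2$ (on the same ordered triple). Once this local equivalence is established, the lemma follows immediately by taking the union over all triples and all patterns of $\FF\setminus\{P\}$.

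For the local equivalence, I would unfold the definition of realization. Suppose $G$ realizes $P$ via the ordered subgraph on vertices $v_1<\dots<v_k$ (where $k=|V(P)|$). Let $e=\{i,j\}$ be the undecided edge of $P$. In $G$ the pair $\{v_i,v_j\}$ is either an edge or a non-edge: in the first case, the same subgraph realizes $P_1=(V,E\cup\{e\},N)$; in the second case, it realizes $P_2=(V,E,N\cup\{e\})$. This gives the forward direction. Conversely, any realization of $P_1$ is a realization of $P$, because turning a decided edge into an undecided one only enlarges the set of realizations (the realization relation only requires $E(G)\supseteq E(T)$ and $E(G)\cap N(T)=\emptyset$, and $P$ is obtained from $P_1$ by moving one pair from $E$ to $U$); the symmetric statement holds for $P_2$.

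Combining these observations, the ordered triples on which $G$ realizes $P$ are exactly the union of those on which it realizes $P_1$ and those on which it realizes $P_2$. Hence $G$ avoids $P$ iff $G$ avoids both $P_1$ and $P_2$, and therefore $G$ avoids all patterns in $\FF$ iff it avoids all patterns in $\FF'=(\FF\setminus\{P\})\cup\{P_1,P_2\}$, which is what is claimed.

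There is no real obstacle here; the only thing to be careful about is the trigraph-to-graph realization convention, namely that an undecided edge of a pattern can be matched by either an edge or a non-edge of the ordered graph. Once this is spelled out, the argument is essentially a case split on the status of the pair $\{v_i,v_j\}$ in $G$.
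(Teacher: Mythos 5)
Your proposal is correct and follows essentially the same route as the paper: both reduce to the single-pattern case and observe that every occurrence of $P$ is an occurrence of $P_1$ or $P_2$ (by a case split on the status of the pair matching the undecided edge), and conversely. You simply spell out the realization definition more explicitly, which is fine.
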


\begin{proof}
It is sufficient prove the statement for the case where $\FF$ is restricted to $P$, as the other pattern do not interfere. 
Consider a graph $G$, with an ordering $\tau$.
If $(G,\tau)$ avoids the pattern $P$, then clearly $\tau$ avoids also the patterns $P_1, P_2$, since each occurence of a pattern $P_1$ or $P_2$ yields an occurence of pattern $P$.
Reciprocally, if $(G,\tau)$ avoids the patterns $P_1, P_2$ then it also avoids the pattern $P$, since each possible occurence of pattern $P$ in $(G,\tau)$ must corresponds to either an occurence of pattern $P_1$ or of pattern $P_2$.
\end{proof}

As a consequence, with the notations of the lemma, we have $C_{\FF}=C_{\FF'}$. 
By iterating the rule, one can always transform an arbitrary family of patterns into a family of full patterns. 
Actually the seminal papers on this topic, \emph{e.g.} \cite{Damaschke90}, use only full patterns. 
In this paper, we choose to use undecided edges because they allow for compact notations and provide additional insights. 

The notation $P= P_1 \& P_2$ denotes that $P$ can be split into $P_1$ and $P_2$. 
For example, \pinterval{} = \pchordal{} \& co-\pcomparability{}. 
A family is \emph{split-minimal} if there are no two patterns $P_1$ and $P_2$ in the family, such that there exists a third pattern $P$ with $P=P_1\&P_2$.

\subsection{Union-intersection property}
\label{subsec:union-intersection}

Item \ref{item:union} of Property~\ref{prop:basic} states that $\CC_{\FF_1 \cup \FF2} \subseteq \CC_{\FF_1} \cap  \CC_{\FF_2}$. When the equality holds, we say that these classes have the \emph{union-intersection property}. A trivial case of union-intersection property is when one family is included in the other. 
Here is a more interesting example. Using Lemma~\ref{lem:pattern-split} and Item \ref{item:union} of Property \ref{prop:basic}, we know that: 
\[
\CC_{\pinterval{}}
= \CC_{\pchordal \& \text{co-}\pcomparability}
\subseteq  \CC_{\pchordal} \cap \CC_{\text{co-}\pcomparability}.
\]
But it is known from the literature \cite{GilmoreH64} that the last inclusion is actually an equality: interval graphs are exactly the graphs that are both chordal and cocomparability. 
We will see several other cases where the union-intersection property holds.
From the example above, it is tempting to conjecture that any time the pattern split rule applies, the union-intersection property holds, but this is actually wrong. 
For example \ppath{}=\pforest{} \& \mirror\pinterval{} thus 
$\CC_{\pforest \& \mirror\pinterval}$ is the class of linear forests, but $\CC_{\pforest}\cap \CC_{\mirror\pinterval}$ is a larger class, as it contains any star (more details are given in Property~\ref{prop:not-union-intersection}). 

A useful special case of Item \ref{item:union} in Property~\ref{prop:basic} is the following:

\begin{Fa}\label{fact:inclusion}
Let ${\cal F}_1$, ${\cal F}_2$ be two sets of patterns, if ${\cal C}_{{\cal F}_1} \subseteq {\cal C}_{{\cal F}_2}$, then 
${\cal C}_{{\cal F}_1 \cup {\cal F}_2}\subseteq {\cal C}_{{\cal F}_1}$.
\end{Fa}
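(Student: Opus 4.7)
The plan is to invoke item~\ref{item:union} of Property~\ref{prop:basic} directly. That item states $\CC_{\FF_1 \cup \FF_2} \subseteq \CC_{\FF_1} \cap \CC_{\FF_2}$, and since the intersection of two sets is trivially contained in either component, the chain
\[
\CC_{\FF_1 \cup \FF_2} \subseteq \CC_{\FF_1} \cap \CC_{\FF_2} \subseteq \CC_{\FF_1}
\]
closes the argument in one step. The role of the hypothesis $\CC_{\FF_1} \subseteq \CC_{\FF_2}$ is to make explicit which inclusion is being exploited: under it, the intersection $\CC_{\FF_1} \cap \CC_{\FF_2}$ collapses to $\CC_{\FF_1}$ itself, so the item already pins down $\CC_{\FF_1 \cup \FF_2}$ as a subclass of $\CC_{\FF_1}$ without any further effort. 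This is the reason the statement is packaged separately as a convenient shortcut to be used repeatedly later when chaining inclusions between pattern families.

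There is essentially no obstacle here. I should however flag one subtle point: the hypothesis is not strictly required to derive the stated inclusion, since $A \cap B \subseteq A$ holds unconditionally and the application of item~\ref{item:union} alone already yields $\CC_{\FF_1 \cup \FF_2} \subseteq \CC_{\FF_1}$. The hypothesis would become essential only if one strengthened the statement to the equality $\CC_{\FF_1 \cup \FF_2} = \CC_{\FF_1}$: the reverse inclusion $\CC_{\FF_1} \subseteq \CC_{\FF_1 \cup \FF_2}$ requires, for every $G \in \CC_{\FF_1}$, exhibiting a \emph{single} linear ordering of $V(G)$ that simultaneously avoids all patterns of $\FF_1$ and $\FF_2$, not two separate orderings obtained independently from $G \in \CC_{\FF_1}$ and $G \in \CC_{\FF_2}$. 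Since the fact as stated only claims the easy inclusion, a direct appeal to item~\ref{item:union} of Property~\ref{prop:basic} is sufficient and no new argument is needed.
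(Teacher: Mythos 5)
Your proposal is correct and matches the paper, which gives no separate proof and simply presents this Fact as an immediate special case of Item~5 (Union) of Property~\ref{prop:basic}, exactly via the chain $\CC_{\FF_1 \cup \FF_2} \subseteq \CC_{\FF_1} \cap \CC_{\FF_2} \subseteq \CC_{\FF_1}$. Your side remark that the hypothesis $\CC_{\FF_1} \subseteq \CC_{\FF_2}$ is not needed for the inclusion as literally stated (only for upgrading it to an equality) is also accurate.
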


It should be noticed that  ${\cal C}_{{\cal F}_1} \subseteq {\cal C}_{{\cal F}_2}$ could derive from a structural graph theorem and not directly from the set of patterns. 
Furthermore, even in this restricted case, the union-intersection property does not always hold, as we will prove in Theorem~\ref{thm:mirror}, that \{\pforest{}, mirror-\pforest{}\} characterizes only paths.

Another special case, is when the patterns are stable by permutation of the ordering of the nodes. On three nodes only the patterns \pnograph{}, \ptriangle{} and co-\ptriangle{} have this property. 
If one of the families at hand contains only pattern with this property, then the union-intersection property holds.

\begin{Fa}\label{fact:order-invariant}
Let ${\cal F}_1$, ${\cal F}_2$ be two sets of patterns, if one of the two families contains only patterns that are stable by change of the ordering of the nodes, then $\CC_{\FF_1 \cup \FF_2}=\CC_{\FF_1}\cap \CC_{\FF_2}$.
\end{Fa}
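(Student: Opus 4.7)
The plan is to prove the nontrivial inclusion $\CC_{\FF_1}\cap\CC_{\FF_2}\subseteq \CC_{\FF_1\cup\FF_2}$; the reverse inclusion is immediate from Item~\ref{item:union} of Property~\ref{prop:basic}. Without loss of generality, assume $\FF_2$ is the family whose patterns are all stable under reordering of their nodes.

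The key observation I would first isolate is this: if $P$ is a pattern on three nodes invariant under every permutation of its vertex ordering, then membership of a graph $G$ in $\CC_P$ is independent of the chosen ordering. More precisely, either every ordering of $G$ avoids $P$, or none does. Indeed, suppose some ordering $\tau$ of $G$ contains a realization of $P$ on a triple $\{x,y,z\}$. Pick any other ordering $\tau'$ of $G$ and consider the same triple $\{x,y,z\}$ taken in the order induced by $\tau'$. The underlying trigraph on this triple is unchanged, and $P$ equals its own image under the permutation relating the two orderings, so this triple still realizes $P$ in $\tau'$. Thus avoidance of an order-invariant pattern is a property of the unordered graph alone.

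From this observation the fact follows immediately. Let $G\in \CC_{\FF_1}\cap\CC_{\FF_2}$. Since $G\in\CC_{\FF_2}$ and every pattern of $\FF_2$ is order-invariant, \emph{every} ordering of $G$ avoids $\FF_2$. On the other hand, since $G\in\CC_{\FF_1}$, there exists at least one ordering $\tau$ avoiding $\FF_1$. Combining the two, $\tau$ avoids $\FF_1\cup\FF_2$, so $G\in\CC_{\FF_1\cup\FF_2}$, as desired.

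There is no real obstacle here; the only point that requires a touch of care is the first observation, namely spelling out exactly what stability of $P$ under the six permutations of its three vertices entails for realizations inside an ordered graph. Once that is stated cleanly, the rest is bookkeeping around the definition of $\CC_\FF$.
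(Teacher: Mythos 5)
Your proof is correct, and it spells out exactly the reasoning the paper leaves implicit (the fact is stated without proof): avoidance of an order-invariant pattern is a property of the unordered graph, so any ordering witnessing $G\in\CC_{\FF_1}$ automatically also avoids $\FF_2$. The only cosmetic remark is that you phrase the key observation for patterns on three nodes, whereas the fact is stated for patterns of any size; your argument goes through verbatim in the general case.
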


\subsection{Graph classes}
\label{subsec:graph-classes}

In this section, we define the graph classes we use. 
References on this topic are \cite{Brandstadt1999, graphclasses}. 
In the remaining, we will use $P_i$ to refer to an induced path with $i$~nodes, and $C_i$ refers to an induced cycle on $i$ nodes.

For many classes we give several definitions. In addition to highlighting similarities between some of these classes, this will be helpful in the characterization proofs. For example, characterization by forbidden subgraph are helpful when proving that a set of patterns imply that the graph belong to a class. On the other hand an incremental of geometric construction usually provides a natural ordering to avoid the patterns at hand. 

\begin{Def}\label{def:graph-classes}
The definitions of the main graph classes we use are the following:
\begin{enumerate}
\item A \textbf{\emph{forest}} is a graph with no cycle.
\item A \textbf{\emph{linear forest}} is a disjoint union of paths.
\item A \textbf{\emph{star}} is a graph where at most one node has several neighbors. 
\item An \textbf{\emph{interval graph}} is the intersection graph of a set of intervals. That is, a graph on~$n$~vertices is an interval graph, if there exists a set of $n$~intervals that we can identify to the vertices such that two intervals intersect if and only if the associated vertices are adjacent. 
\item A graph is a \textbf{\emph{split graph}} if there exists a partition of the vertices such that the subgraph induced by the first part is a clique, and the subgraph induced by the second part is an independent set. Note that the split graphs are connected, except (possibly) for isolated nodes.
\item A graph is \textbf{\emph{bipartite}} if there exists a partition of the vertices in 2 parts  such that the 2 resulting induced subgraphs are independent sets.  
\item A graph is \textbf{\emph{chordal}} if it contains no induced cycle of length strictly greater than 3.
\item A graph is a \textbf{\emph{comparability graph}} if its edges represent a partial order. That is, a graph on $n$ vertices is a comparability graph if there exists a partial order with $n$ elements that we can identify with the vertices, such that two elements are comparable if and only if they are adjacent in the graph. 
\item A graph is \textbf{\emph{triangle-free}} if it contains no clique of size 3. 
\item A \textbf{\emph{permutation graph}} (\cite{EvenPL72}) is a graph whose vertices represent the elements of a permutation, and whose edges link pairs of elements that are reversed by the permutation. We will also consider \emph{bipartite permutation graphs}, the subclass of the permutation graphs that are bipartite.
\item A \textbf{\emph{threshold graphs}} is equivalently (\cite{ChvatalH77, MahadevP95}): 
\begin{enumerate}
\item \label{item-def:threshold-increment} 
a graph that can be constructed by incrementally adding isolated vertices and dominating vertices. 
\item \label{item-def:threshold-P4} 
a split graph without induced $P_4$.
\item \label{item-def:threshold-inclusion} 
a split graph where the neighborhoods of the nodes of the independent set and of the nodes of the clique are totally ordered. That is, a graph with vertex set $V=I\cup K$, with $I=\{i_1,...,i_p\}$ an independent set and $K=\{k_1,...,k_q\}$ a clique, such that $N(i_1) \subseteq  \dots \subseteq N(i_p)$  and $N(k_1) \supseteq \dots \supseteq N(k_q)$.
\end{enumerate}
\item A \textbf{\emph{proper interval graph}} is equivalently (\cite{Roberts69}):
\begin{enumerate}
\item the intersection graphs of a set of intervals, where no interval is included in another.
\item a \emph{unit interval graph} , that is interval graphs where all the intervals of the geometric representations have the same length. 
\item an \emph{indifference graph}, that is a graph 
where every node $v$ can be given a real number $k_v$ such that $(u,v)\in E$ if and only if $|k_u-k_v|\leq 1$. 
\end{enumerate}
\item \label{item-def:caterpillar}
A \textbf{\emph{caterpillar graph}} is equivalently:
\begin{enumerate}
\item A forest where each tree has a dominating path.
\item A $(T_2,cycle)$-free graph, where $T_2$ is the graph on seven nodes obtained by taking a 3-star, and appending an additional node on each leaf.
\end{enumerate} 
\item 
A \textbf{\emph{trivially perfect graph}} is equivalently (\cite{Golumbic78, Wolk62}):
\begin{enumerate}
\item a graph in which, for every induced subgraph, the size of a maximum independent set is equal to the number of maximal cliques.
\item \label{item:quasi-threshold} a \emph{quasi-threshold graph}, that is a graph that can be constructed recursively the following way: a single node is a quasi-threshold graph, the disjoint union of two quasi-threshold graphs is a quasi-threshold graph, adding one universal vertex to threshold graph gives a quasi-threshold graph.
\item \label{item:p4-c4} a $(C_4,P_4)$-free graph.
\item \label{item:comparability-of-tree} a comparability graph of an arborescence, that is the comparability graph of a partial order in which for every element $x$, the elements of $\{y|y<x\}$ can be linearly ordered.
\item the intersection graph of a set of nested intervals (that is of intervals such that for every intersecting pair, one interval is included in the other).
\end{enumerate}
\item A \textbf{\emph{bipartite chain graph}} is equivalently (\cite{Yannakakis82}): 
\begin{enumerate}
\item \label{item:bip-chain-neighborhood} a bipartite graph, for which, in each class, one can order the neighborhoods by inclusion. That is, with a partition, $A,B$, $A=a_1, \dots,  a_{|A|}$ satisfies $N(a_1) \supseteq N(a_2), \dots, \supseteq N(a_{|A|})$ and  $B=b_1, \dots,  b_{|B|}$ satisfies $N(b_1) \subseteq N(b_2) \dots \subseteq N(b_{|B|})$.  
\item a \emph{difference graph}, that is a graph 
where every node $v$ can be given a real number $k_v$ in $(-1,1)$ such that $(u,v)\in E$ if and only if $|k_u-k_v|\geq 1$. 
\item \label{item:bip-chain-forbidden} a $2K_2$-free bipartite graph, that is a bipartite graph that does not have two independent edges that are not  linked by a third edge (\emph{i.e.} no induced complement of a $C_4$).
\end{enumerate}
 \end{enumerate}
\end{Def}

There are 3 more classes that will appear for technical reasons  in our results which are variants of the classes above.

\begin{Def}
\begin{enumerate}
\item A \textbf{\emph{2-star}} is a connected caterpillar with a dominating path of length at most 2, plus possibly isolated nodes.
\item A \textbf{\emph{1-split}} is a graph which is either a clique or a clique minus one edge, or the complement of such a graph, that is an independent set, plus possibly an edge.
\item An \textbf{\emph{augmented clique}} is a clique, plus one additional node with arbitrary adjacency, plus possibly isolated nodes. 
\end{enumerate}
\end{Def}

\paragraph*{Connectivity issues.}
A subtlety in the definitions of the classes is the connectivity of the graphs. For our work, there are basically two cases:
\begin{itemize}
\item The classes that are stable by disjoint union: linear forests, interval graphs, bipartite graphs, chordal graphs, comparability graphs, triangle-free graphs, permutation graphs, proper interval, caterpillars, trivially perfect
\item The classes that are stable only by addition of isolated vertices: stars, split graphs, threshold graphs, bipartite chain graphs, 2-stars, augmented clique.
\end{itemize}

Note that this classification depends on the class we chose to consider. That is, in this work we always consider both a class and its complement class, and for the complement classes the classification above becomes : stable by the join operation and by addition of a universal clique. 
Also in some cases, the class considered does not allow for isolated nodes, this is the case for 1-split for example. 

\paragraph*{Trivial classes.}
We used the word \emph{trivial} for classes that are basically finite. Here is a more formal definition. 

\begin{Def}
A graph class $\cal G$  is called \emph{trivial} if there exists a finite family of connected graphs $F$ such that $\forall G \in \cal G$ every connected component of $G$ is isomorphic to some graph in~$F$.
\end{Def}

We will use the following fact about these classes. 

\begin{Fa}\label{trivialclasses}
Let $\cal C$ be a trivial graph class and $P$ a pattern. The subclass of $\cal C$ consisting of the graphs that have a vertex ordering avoiding $P$, is also a trivial graph class.
\end{Fa}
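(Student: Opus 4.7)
The plan is to show the statement essentially by unwinding the definitions; the key observation is that triviality is preserved under taking arbitrary subclasses, so the pattern $P$ will play no real role in the argument.

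First I would invoke the hypothesis that $\cal C$ is trivial to extract a finite family $F$ of connected graphs such that every connected component of every graph in $\cal C$ is isomorphic to some member of $F$. Next I would denote by $\cal C'$ the subclass described in the statement, namely the set of graphs $G \in \cal C$ admitting a vertex ordering that avoids $P$. By construction $\cal C' \subseteq \cal C$, so every $G \in \cal C'$ is in particular a graph of $\cal C$, and therefore every connected component of $G$ is still isomorphic to some member of $F$.

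This immediately exhibits the same finite family $F$ as a witness of triviality for $\cal C'$, which concludes the proof.

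The main ``obstacle'' is really just parsing the definition carefully and noticing that the pattern $P$ is a red herring: the argument would go through for \emph{any} subclass of a trivial class, no matter how it is carved out. The statement is nonetheless worth isolating, since in the later case analysis it will be convenient to deduce, once we know a class lies inside some trivial class (for instance inside the class of augmented cliques or of $2$-stars), that further restricting by a pattern keeps the number of possible connected components finite, so that the resulting class can be completely described by enumerating these components.
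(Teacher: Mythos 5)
Your proof is correct: with the paper's definition of triviality (a finite family $F$ of connected graphs covering all components), triviality is inherited by arbitrary subclasses, and the subclass in question is indeed a subclass of $\cal C$, so the same witness $F$ works. The paper states this as a Fact without proof, treating exactly this observation as immediate, so your argument is the intended one.
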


\section{Characterization theorem}
\label{sec:main-theorems}

We are now ready to state our main theorem. We refer to \emph{basic operations} for addition of isolated nodes, and restriction to connected component.

\begin{Theo}\label{thm:characterization}
Up to complement and basic operations, the non-trivial classes that can be characterized by a set of patterns on three vertices are the following.
\end{Theo}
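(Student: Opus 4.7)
The plan is to prove the theorem by a systematic enumeration over families of patterns on three vertices, aggressively using the structural tools from Section~\ref{sec:definitions} to cut the search space before any case analysis. First I would restrict attention to split-minimal families of full patterns: by iteratively applying the pattern split rule (Lemma~\ref{lem:pattern-split}), every family is equivalent to one containing only the full patterns numbered $0$--$7$ in Figure~\ref{fig:27-patterns}, so it suffices to classify subsets of $\{0,1,\ldots,7\}$. This yields at most $2^8=256$ families of full patterns to examine. Then the mirror and complement properties from Proposition~\ref{prop:basic} act on these subsets (mirror swaps patterns $1\leftrightarrow 4$ and $3\leftrightarrow 6$; complement reverses the bit pattern, swapping $0\leftrightarrow 7$, $1\leftrightarrow 6$, etc.), which cuts the number of families to consider by (roughly) a factor of four.

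Next I would organise the remaining families by the number of forbidden full patterns, from $|\FF|=0$ (the class of all graphs) upward. The case $|\FF|=1$ is already handled by Theorem~\ref{thm:Damaschke} together with the observation that the ``non-full'' single patterns (numbers $8$--$26$) each split into two full patterns. For each subsequent cardinality I would (a) identify the corresponding class $\CC_\FF$ by testing it against a small library of obstructions coming from Definition~\ref{def:graph-classes} (paths $P_k$, cycles $C_k$, $2K_2$, the tree $T_2$, etc.), and (b) use Fact~\ref{fact:inclusion} to prune: whenever $\CC_{\FF_1}\subseteq\CC_{\FF_2}$ is already known, the family $\FF_1\cup\FF_2$ collapses to $\FF_1$ and is skipped. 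The non-trivial subclasses identified this way (proper interval, permutation, bipartite permutation, threshold, trivially perfect, bipartite chain, caterpillar, $2$-star, $1$-split, augmented clique, \ldots) all admit natural pattern certificates coming from their incremental or geometric constructions (e.g.\ the ordering by left endpoint for interval-like classes, the ordering given by a chain-graph neighborhood nesting, or the construction order for threshold and trivially perfect graphs), which produces the ``$\CC_\FF\supseteq\cdot\cdot\cdot$'' direction; the reverse direction follows by exhibiting, for every forbidden induced subgraph in the classical characterization, an unavoidable occurrence of some pattern in $\FF$ in any vertex ordering.

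Wherever Section~\ref{sec:special} provides a ready-made characterization, I would invoke it rather than redo the argument, so that the bulk of Section~\ref{sec:main-proof} becomes a pure cross-off exercise over the table of $256$ candidate families. The last step is bookkeeping: show that every family not producing one of the listed non-trivial classes produces either a trivial class (by Fact~\ref{trivialclasses}, since it must then refuse all sufficiently large connected graphs with some fixed structure), or a class already obtained, up to the mirror/complement/basic operations.

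The main obstacle is not the enumeration itself but the handful of families where the union-intersection property fails (already flagged for $\{\pforest,\mirror\pforest\}$ and warned against in general in Subsection~\ref{subsec:union-intersection}): for these, one cannot read off $\CC_\FF$ from the individual $\CC_P$ and must construct the avoiding ordering by hand, typically by a case split on the first vertex of the ordering and an inductive argument along a simplicial-like elimination. A closely related subtlety is the connectivity caveat discussed before Fact~\ref{trivialclasses}: several of the classes are preserved only under adding isolated vertices (not arbitrary disjoint unions), so when verifying the ``forward'' direction I must be careful to allow the basic operations mentioned in the theorem's statement, and when verifying ``backward'' I must check that the orderings built on components can be concatenated without creating a forbidden pattern across components.
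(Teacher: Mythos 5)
Your plan is essentially the paper's own proof: reduce every family to full patterns $0$--$7$ via the pattern split rule, quotient the $2^8$ subsets by mirror and complement, then identify each surviving family's class using Theorem~\ref{thm:Damaschke}, the special characterizations of Section~\ref{sec:special}, extension/inclusion pruning, small obstructions ($P_4$, $C_4$, $2K_2$, $T_2$), and hand-built orderings where the union-intersection property fails --- the only procedural difference being that the paper mechanizes the family generation (and re-merges split pairs to reach $87$ split-minimal families) rather than working over the raw $256$ subsets. Be aware, though, that what you have written is the correct scaffolding and not the proof itself: essentially all of the mathematical content of Theorem~\ref{thm:characterization} resides in actually executing the $87$ individual case arguments (e.g.\ that $[4,13]$ yields augmented cliques or that $[0,3,6]$ yields bipartite chain graphs), which your proposal leaves unperformed.
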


\begin{multicols}{3}
\begin{enumerate}
\item forests
\item linear forests
\item stars
\item interval 
\item split 
\item bipartite 
\item chordal
\item comparability 
\item triangle-free  
\item permutation 
\item threshold
\item proper interval
\item caterpillar
\item trivially perfect 
\item bipartite chain 
\item 2-star
\item 1-split 
\item augmented clique

\item bipartite permutation

\item triangle-free \\$\cap$ co-chordal
\item clique

\item complete bipartite
\end{enumerate}
\end{multicols}

Let us first say a few words about the proof, and then comment on this theorem.

\paragraph*{Proof outline.} To prove the theorem, we first generate all the split-minimal pattern families. This is basically done by listing all the possible sets of patterns, and then simplifying it, with the help of a program (that is presented in Subsection~\ref{subsec:program}).
The simplification step consists of removing the complement, mirror, complement-mirror families, and applying the pattern split rule until we get split-minimal families. 
After this step, we get the following result.

\begin{Lem}\label{lem:87-classes}
Up to complementation and mirroring, there exist 87 split-minimal families of patterns on three vertices.
\end{Lem}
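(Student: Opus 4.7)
The plan is a computer-assisted enumeration. By iterated application of the pattern split rule (Lemma~\ref{lem:pattern-split}), every pattern family on three vertices is equivalent to a family consisting solely of \emph{full} patterns. On three vertices, a full pattern is an assignment of edge/non-edge to each of the three unordered pairs, so there are exactly $2^{3}=8$ full patterns, corresponding to patterns 0 through 7 in Figure~\ref{fig:27-patterns}. Consequently every class $\CC_{\FF}$ arises from some subset of these 8 full patterns, and there are at most $2^{8}=256$ such subsets to consider.

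For each such subset, I would compute a split-minimal representative by iteratively merging any two patterns that agree on all pairs except one (one having an edge and the other a non-edge, with all other edge/non-edge/undecided assignments identical) and replacing them by the single pattern with that pair made undecided. This is the reverse of the pattern split rule; the iteration terminates, yielding a split-minimal family for each input subset.

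Next, I would quotient the resulting families by the symmetries of Property~\ref{prop:basic}: the mirror operation, the complement operation, and their composition. Together with the identity these form a group of order $4$ acting on split-minimal families, with orbits of size at most $4$. A direct Burnside-type count gives $(256 + 64 + 16 + 16)/4 = 88$ orbits, where the summands are obtained by analyzing the cycle structures of the symmetries on the 8 full patterns (mirror fixes patterns $0,2,5,7$ and swaps the pairs $(1,4),(3,6)$; complement induces $(0,7)(1,6)(2,5)(3,4)$; and their composition induces $(0,7)(1,3)(2,5)(4,6)$). Subtracting one degenerate case (the empty family, which trivially defines the class of all graphs and is excluded) yields $87$.

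The main obstacle is not conceptual — each step (listing subsets, merging, and quotienting by symmetry) is routine — but rather the sheer volume of bookkeeping, which is why the enumeration is delegated to the program of Subsection~\ref{subsec:program}. The program confirms the count by exhaustive enumeration, producing 87 split-minimal families up to mirror and complement.
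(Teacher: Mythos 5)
Your proposal follows essentially the same route as the paper: reduce to the $2^3=8$ full patterns via the pattern split rule, enumerate the $2^8$ subsets, merge back to split-minimal representatives, and quotient by the order-$4$ group generated by mirror and complement (whose action on the full patterns you identify correctly, matching the paper's \emph{exchange} and vector-reversal operations). Your Burnside computation $(256+64+16+16)/4-1=87$ is a worthwhile addition, since it confirms by hand the count that the paper obtains only as the output of its program.
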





The proof of the theorem consists in a case analysis: for each set of patterns of the list given by Lemma~\ref{lem:87-classes}, we find the class that is characterized by this set.
Roughly our approach is in three steps. 
First, we check whether the set of patterns is known to characterize a class. This works for the patterns of Theorem~\ref{thm:Damaschke} for example. 
If not, the second step is to use the structural properties of Section~\ref{sec:definitions} along with the class already characterized to get a candidate class. Sometimes this is enough, and we can conclude. Third, we work on the set of patterns, to get the class. Interestingly, for this last step, knowing various characterizations of a same class, as presented in Subsection~\ref{subsec:graph-classes}, does help.

\paragraph*{Comments on Theorem~\ref{thm:characterization}} 
The first comment to make is that there are only 22 classes in the list. 
The naive upper bound on the number of classes is $2^{3^3}=2^{27}$, thus having only 22 non-trivial classes (up to complement and basic operations) at the end is a surprising outcome. 
This is first explained by the fact that many sets of patterns are equivalent (because of the mirror, complement, and pattern split rules) as witnessed by Lemma~\ref{lem:87-classes}. (As will be explained later, when looking for the number of classes only, one can restrict to a set of eight patterns, thus the upper bound already falls to $2^8=256$, and further refinements makes the list go down to $87$.)
Also many classes end up being trivial because they have patterns that are somehow conflicting. 
For example Theorem~\ref{thm:complement} will show that in most cases, having a pattern and its complement in a family directly lead to a trivial class.
Finally, some classes, like threshold graphs appear several times in the proof, as they correspond to various sets of patterns. 

A second surprising fact is the large majority of the classes listed are well-known classes. This supports the insight of Theorem~\ref{thm:Damaschke}, that characterizations by patterns arise naturally in the study of graph classes.

Third, we phrased the theorem as a list, but the result is somehow richer. 
Indeed these classes form an interesting web of inclusions, and there is a lot to say about the relations between the classes.
These inclusions are represented in Figure~\ref{fig:diagram_3_patterns}.
(The inclusions that are not known or do not follow directly from Property~\ref{prop:basic}, are proved within the proof of the theorem.) 

\begin{figure}[!h]
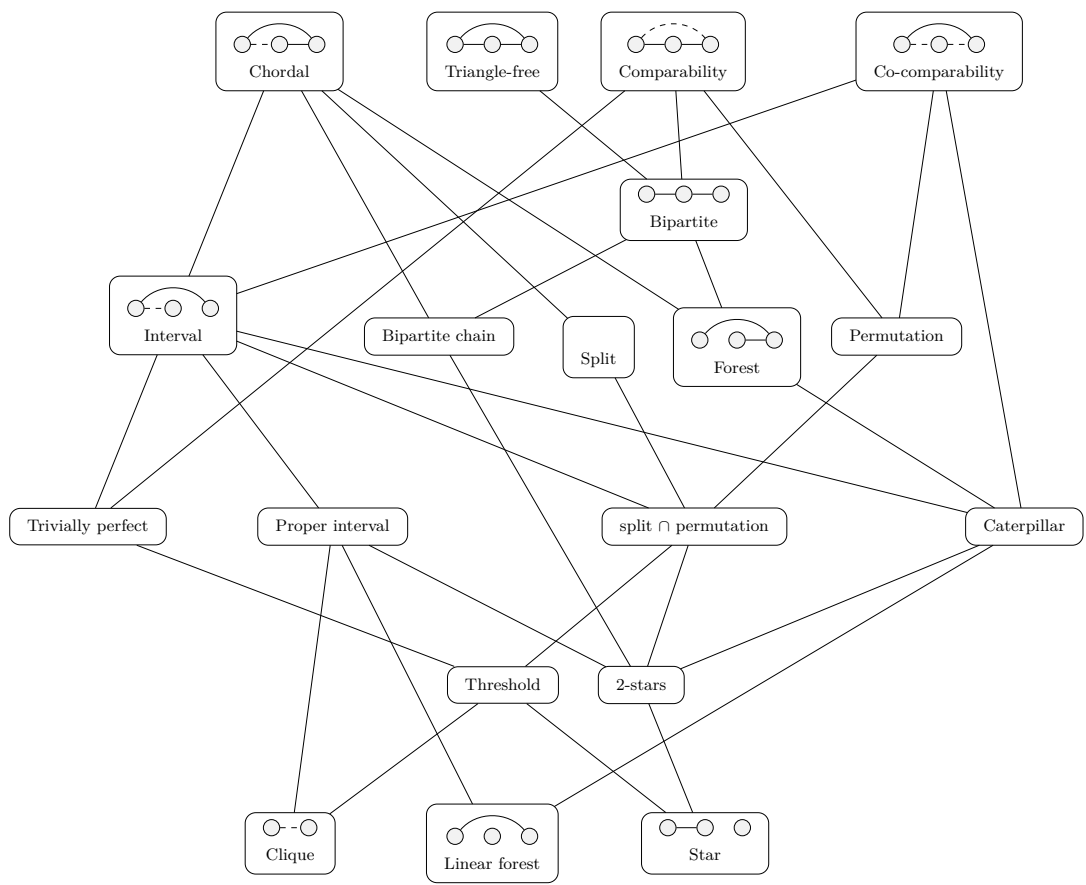

	\begin{center}
	\begin{minipage}{14.5cm}
		\scalebox{0.7}{

\begin{tikzpicture}   
[scale=2.0,auto=left,every node/.style={draw}]  

\clip (0.3,-1.4) rectangle (10.7,7.1);
 

\node[rectangle with rounded corners,rectangle corner radius=0.2cm]%
(caterpillar) at (10,2) {
	\begin{tabular}{c}
		Caterpillar
	\end{tabular}	
	};

\node[rectangle with rounded corners,rectangle corner radius=0.2cm]%
(bipartite-chain) at (4.5,3.8) {
	\begin{tabular}{c}
	Bipartite chain	
	\end{tabular}
};

\node[rectangle, rounded corners=0.2cm]%
(threshold) at (5.1,0.5){
	\begin{tabular}{c}
	Threshold
	\end{tabular}
};

\node[rectangle with rounded corners,rectangle corner radius=0.2cm]%
(twostars) at (6.4,0.5){
	\begin{tabular}{c}
	2-stars
	\end{tabular}
};

\node[rectangle with rounded corners,rectangle corner radius=0.2cm]%
(path) at (5.0,-1) 
	{\begin{tabular}{c}
		\begin{tikzpicture}[scale=0.7,auto=left, every node/.style=	{circle, draw, fill=black!5}]
	\node (a) at (0,0) {};
	\node (b) at (1,0) {};
	\node (c) at (2,0) {};
	\draw (a) to[bend left=50] (c);
\end{tikzpicture}\\
		Linear forest\\
	\end{tabular}
	}; 
	  
\node[rectangle with rounded corners,rectangle corner radius=0.2cm]%
(truc) at (6.9,2) 
	{\begin{tabular}{c}
		split $\cap$ permutation
	\end{tabular}
	};
	
\node[rectangle with rounded corners,rectangle corner radius=0.2cm]%
(star) at (7,-1) 
	{\begin{tabular}{c}
		\begin{tikzpicture}[scale=0.7,auto=left,every node/.style=	 {circle, draw, fill=black!5}]
	\node (a) at (0,0) {};
	\node (b) at (1,0) {};
	\node (c) at (2,0) {};
	\draw (a) to (b);

\end{tikzpicture}\\
		Star\\
	\end{tabular}
	};
	      
\node[rectangle with rounded corners,rectangle corner radius=0.2cm]%
(split) at (6.0,3.7)
  	{\begin{tabular}{c}
		\input{split.tex}\\
		Split\\
	\end{tabular}
	};    
	    
\node[rectangle with rounded corners,rectangle corner radius=0.2cm]%
(interval) at (2,4)
  	{\begin{tabular}{c}
		\begin{tikzpicture}
			[scale=0.7,auto=left,every node/.style=		
			{circle,draw,fill=black!5}]
			\node (a) at (0,0) {};
			\node (b) at (1,0) {};
			\node (c) at (2,0) {};
			\draw (a) to[bend left=50] (c);
			\draw[dashed] (a) to (b);
		\end{tikzpicture}\\
		Interval\\
	\end{tabular}
	};  
	
\node[rectangle with rounded corners,rectangle corner radius=0.2cm] (propinterval) at (3.5,2)
  	{	\begin{tabular}{c}
  		Proper interval\\
  	\end{tabular}
		
	};      

\node[rectangle with rounded corners,rectangle corner radius=0.2cm]%
(tree) at (7.3,3.7) 
	{\begin{tabular}{c}
		\begin{tikzpicture}
			[scale=0.7,auto=left,every node/.style=		
			{circle,draw,fill=black!5}]
			\node (a) at (0,0) {};
			\node (b) at (1,0) {};
			\node (c) at (2,0) {};
			\draw (a) to[bend left=50] (c);
			\draw (b) to (c);
		\end{tikzpicture}\\
		Forest\\
	\end{tabular}
	};        
	
\node[rectangle with rounded corners,rectangle corner radius=0.2cm]%
(bipartite) at (6.8,5)
	{\begin{tabular}{c}
		\begin{tikzpicture}
	[scale=0.7,auto=left,every node/.style=		
	{circle,draw,fill=black!5}]
	\node (a) at (0,0) {};
	\node (b) at (1,0) {};
	\node (c) at (2,0) {};
	\draw (a) to (b);
	\draw (b) to (c);
\end{tikzpicture}\\
		Bipartite\\
	\end{tabular}
	}; 
	
\node[rectangle with rounded corners,rectangle corner radius=0.2cm] (permutation) at (8.8,3.8)
	{\begin{tabular}{c}
		Permutation\\
	\end{tabular}
	};        
	
\node[rectangle, rounded corners=0.2cm] (trivially-perfect) at (1.2,2)
	{\begin{tabular}{c}
		Trivially perfect\\
	\end{tabular}
	};   
	     
\node[rectangle with rounded corners,rectangle corner radius=0.2cm] (chordal) at (3,6.5) 
	{\begin{tabular}{c}
		\begin{tikzpicture}
	[scale=0.7,auto=left,every node/.style=		
	{circle,draw,fill=black!5}]
	\node (a) at (0,0) {};
	\node (b) at (1,0) {};
	\node (c) at (2,0) {};
	\draw (a) to[bend left=50] (c);
	\draw[dashed] (a) to (b);
	\draw (b) to (c);
\end{tikzpicture}\\
		Chordal\\
	\end{tabular}
	};	    
		   
\node[rectangle with rounded corners,rectangle corner radius=0.2cm] (trianglefree) at (5,6.5)
 	{\begin{tabular}{c}
		\begin{tikzpicture}
			[scale=0.7,auto=left,every node/.style=		
			{circle,draw,fill=black!5}]
			\node (a) at (0,0) {};
			\node (b) at (1,0) {};
			\node (c) at (2,0) {};
			\draw (a) to[bend left=50] (c);
			\draw (a) to (b);
			\draw (b) to (c);
		\end{tikzpicture}\\
		Triangle-free\\
	\end{tabular}
	};      
	 
\node[rectangle with rounded corners,rectangle corner radius=0.2cm] (comparability) at (6.7,6.5)
  	{\begin{tabular}{c}
		\begin{tikzpicture}
			[scale=0.7,auto=left,every node/.style=		
			{circle,draw,fill=black!5}]
			\node (a) at (0,0) {};
			\node (b) at (1,0) {};
			\node (c) at (2,0) {};
			\draw[dashed] (a) to[bend left=50] (c);
			\draw (a) to (b);
			\draw (b) to (c);
		\end{tikzpicture}\\
		Comparability\\
	\end{tabular}
	};
	      
\node[rectangle with rounded corners,rectangle corner radius=0.2cm] (cocomparability) at (9.2,6.5)
  	{\begin{tabular}{c}	\input{cocomparability.tex}\\
		Co-comparability\\
	\end{tabular}
	}; 
	
\node[rectangle with rounded corners,rectangle corner radius=0.2cm] (clique) at (3.1,-1)
  	{\begin{tabular}{c}	\input{clique.tex}\\
		Clique\\
	\end{tabular}
	};

\draw (cocomparability) to (permutation);
\draw (comparability) to (permutation);

\draw (trianglefree) to (bipartite);
\draw (comparability) to (bipartite);

\draw (chordal) to (tree);

\draw (chordal) to (interval);
\draw (cocomparability) to (interval);

\draw (chordal) to (split);


\draw (interval) to (truc);

\draw (permutation) to (truc);
\draw (split) to (truc);

\draw (trivially-perfect) to (threshold);

\draw (chordal) to (bipartite-chain) ;
\draw (bipartite) to (bipartite-chain) ;

\draw (tree) to (caterpillar) ;
\draw (cocomparability) to (caterpillar) ;


\draw (interval) to (propinterval);
\draw (propinterval) to (path);
\draw (propinterval) to (twostars);

\draw (caterpillar) to (twostars) ; 
\draw (bipartite) to (tree) ;
\draw (twostars) to (star) ;
\draw (caterpillar) to (path) ;
\draw (threshold) to (star) ;
\draw (truc) to (twostars) ;	
\draw (bipartite-chain) to (twostars) ;
\draw (interval) to (caterpillar) ;
\draw (truc) to (threshold);
\draw (interval) to (trivially-perfect);
\draw (comparability) to (trivially-perfect);

\draw (threshold) to (clique);
\draw (propinterval) to (clique);

\end{tikzpicture}
 		}
 	\end{minipage}
 	\end{center}	 	
\caption{\label{fig:diagram_3_patterns}
Partial inclusion diagram of the classes that appear in Theorem~\ref{thm:characterization}. More refined diagrams can be found in Figures~\ref{fig:diagram-1-pattern}, \ref{fig:diagram-co-mirror}.}
\end{figure}

\begin{figure}
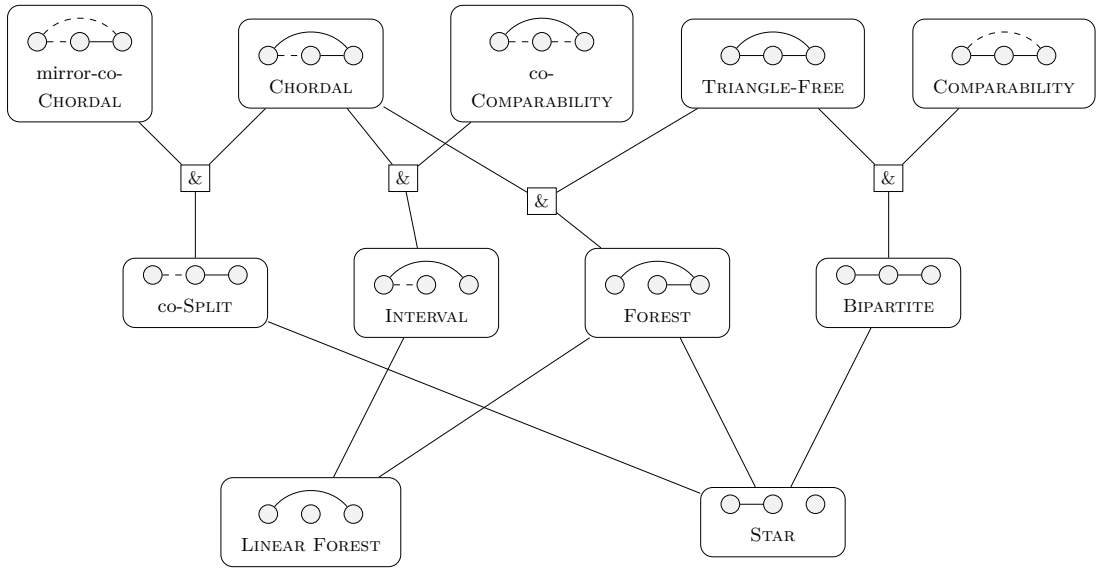

\begin{center}
\scalebox{0.8}{
\begin{tikzpicture}   
[scale=1.9,auto=left,every node/.style={draw}]  


\node[rectangle, rounded corners=0.2cm](mi-co-chordal) at (0,4) 
{\begin{tabular}{c}
	\input{reverse_cochordal.tex}\\
		mirror-co-\\
		\pchordal{} \\
	\end{tabular}
	}; 
	
\node[rectangle, rounded corners=0.2cm](chordal) at (2,4) 
{\begin{tabular}{c}
	\begin{tikzpicture}
	[scale=0.7,auto=left,every node/.style=		
	{circle,draw,fill=black!5}]
	\node (a) at (0,0) {};
	\node (b) at (1,0) {};
	\node (c) at (2,0) {};
	\draw (a) to[bend left=50] (c);
	\draw[dashed] (a) to (b);
	\draw (b) to (c);
\end{tikzpicture}\\
		\pchordal{} \\
	\end{tabular}
	}; 

\node[rectangle, rounded corners=0.2cm](co-comparability) at (4,4) 
{\begin{tabular}{c}
	\input{cocomparability.tex}\\
		co-\\
		\pcomparability{} \\
	\end{tabular}
	}; 

\node[rectangle, rounded corners=0.2cm](triangle-free) at (6,4) 
{\begin{tabular}{c}
	\begin{tikzpicture}
			[scale=0.7,auto=left,every node/.style=		
			{circle,draw,fill=black!5}]
			\node (a) at (0,0) {};
			\node (b) at (1,0) {};
			\node (c) at (2,0) {};
			\draw (a) to[bend left=50] (c);
			\draw (a) to (b);
			\draw (b) to (c);
		\end{tikzpicture}\\
		\ptriangle{} \\
	\end{tabular}
	}; 
	
	\node[rectangle, rounded corners=0.2cm](comparability) at (8,4) 
{\begin{tabular}{c}
	\begin{tikzpicture}
			[scale=0.7,auto=left,every node/.style=		
			{circle,draw,fill=black!5}]
			\node (a) at (0,0) {};
			\node (b) at (1,0) {};
			\node (c) at (2,0) {};
			\draw[dashed] (a) to[bend left=50] (c);
			\draw (a) to (b);
			\draw (b) to (c);
		\end{tikzpicture}\\
		\pcomparability{} \\
	\end{tabular}
	}; 

\node[rectangle, rounded corners=0.2cm](cosplit) at (1,2) 
{\begin{tabular}{c}
	\input{cosplit.tex}\\
		co-\psplit{} \\
	\end{tabular}
	}; 
	
\node[rectangle, rounded corners=0.2cm](interval) at (3,2) 
{\begin{tabular}{c}
	\begin{tikzpicture}
			[scale=0.7,auto=left,every node/.style=		
			{circle,draw,fill=black!5}]
			\node (a) at (0,0) {};
			\node (b) at (1,0) {};
			\node (c) at (2,0) {};
			\draw (a) to[bend left=50] (c);
			\draw[dashed] (a) to (b);
		\end{tikzpicture}\\
		\pinterval{} \\
	\end{tabular}
	}; 
	
\node[rectangle, rounded corners=0.2cm](forest) at (5,2) 
{\begin{tabular}{c}
	\begin{tikzpicture}
			[scale=0.7,auto=left,every node/.style=		
			{circle,draw,fill=black!5}]
			\node (a) at (0,0) {};
			\node (b) at (1,0) {};
			\node (c) at (2,0) {};
			\draw (a) to[bend left=50] (c);
			\draw (b) to (c);
		\end{tikzpicture}\\
		\pforest{} \\
	\end{tabular}
	};
	
\node[rectangle, rounded corners=0.2cm](bipartite) at (7,2) 
{\begin{tabular}{c}
	\begin{tikzpicture}
	[scale=0.7,auto=left,every node/.style=		
	{circle,draw,fill=black!5}]
	\node (a) at (0,0) {};
	\node (b) at (1,0) {};
	\node (c) at (2,0) {};
	\draw (a) to (b);
	\draw (b) to (c);
\end{tikzpicture}\\
		\pbipartite{} \\
	\end{tabular}
	}; 
	
\node[rectangle, rounded corners=0.2cm](path) at (2,0) 
{\begin{tabular}{c}
	\begin{tikzpicture}[scale=0.7,auto=left, every node/.style=	{circle, draw, fill=black!5}]
	\node (a) at (0,0) {};
	\node (b) at (1,0) {};
	\node (c) at (2,0) {};
	\draw (a) to[bend left=50] (c);
\end{tikzpicture}\\
		\ppath{} \\
	\end{tabular}
	}; 
	
\node[rectangle, rounded corners=0.2cm](star) at (6,0) 
{\begin{tabular}{c}
	\begin{tikzpicture}[scale=0.7,auto=left,every node/.style=	 {circle, draw, fill=black!5}]
	\node (a) at (0,0) {};
	\node (b) at (1,0) {};
	\node (c) at (2,0) {};
	\draw (a) to (b);

\end{tikzpicture}\\
		\pstar{} \\
	\end{tabular}
	};

\node (u-split) at (1,3) {\footnotesize{\&}};
\draw (mi-co-chordal) to (u-split);
\draw (chordal) to (u-split);
\draw (u-split) to (cosplit);

\node (u-interval) at (2.8,3) {\footnotesize{\&}};
\draw (chordal) to (u-interval);
\draw (co-comparability) to (u-interval);
\draw (u-interval) to (interval);

\node (u-forest) at (4,2.8) {\footnotesize{\&}};
\draw (chordal) to (u-forest);
\draw (triangle-free) to (u-forest);
\draw (u-forest) to (forest);

\node (u-bipartite) at (7,3) {\footnotesize{\&}};
\draw (triangle-free) to (u-bipartite);
\draw (comparability) to (u-bipartite);
\draw (u-bipartite) to (bipartite);

\draw (cosplit) to (star);
\draw (forest) to (star);
\draw (bipartite) to (star);
\draw (interval) to (path);
\draw (forest) to (path);

\end{tikzpicture}}
\end{center}
\caption{\label{fig:diagram-1-pattern} Refinement of Figure \ref{fig:diagram_3_patterns} in which we represent the cases where  $P=P_1\& P_2$ and the union-intersection property holds by a label \& link to $P_1$ and $P_2$ above, and $P$ below.}
\end{figure}

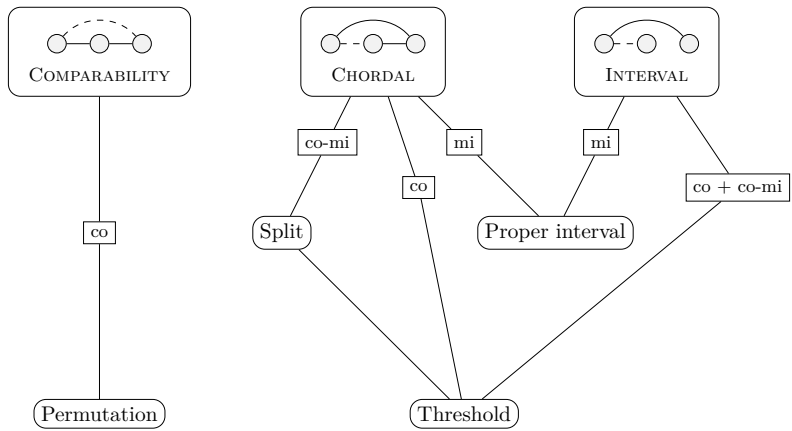
\begin{figure}
\begin{center}
\scalebox{0.8}{
{}\begin{tikzpicture}   
[scale=1.5,auto=left,every node/.style={draw}]  


\node[rectangle, rounded corners=0.2cm](comparability) at (0,4) 
{\begin{tabular}{c}
		\begin{tikzpicture}
			[scale=0.7,auto=left,every node/.style=		
			{circle,draw,fill=black!5}]
			\node (a) at (0,0) {};
			\node (b) at (1,0) {};
			\node (c) at (2,0) {};
			\draw[dashed] (a) to[bend left=50] (c);
			\draw (a) to (b);
			\draw (b) to (c);
		\end{tikzpicture}\\
		\pcomparability{} \\
	\end{tabular}
	}; 
	
\node[rectangle, rounded corners=0.2cm](chordal) at (3,4) 
{\begin{tabular}{c}
		\begin{tikzpicture}
	[scale=0.7,auto=left,every node/.style=		
	{circle,draw,fill=black!5}]
	\node (a) at (0,0) {};
	\node (b) at (1,0) {};
	\node (c) at (2,0) {};
	\draw (a) to[bend left=50] (c);
	\draw[dashed] (a) to (b);
	\draw (b) to (c);
\end{tikzpicture}\\
		\pchordal{} \\
	\end{tabular}
	};

\node[rectangle, rounded corners=0.2cm](interval) at (6,4) 
{\begin{tabular}{c}
		\begin{tikzpicture}
			[scale=0.7,auto=left,every node/.style=		
			{circle,draw,fill=black!5}]
			\node (a) at (0,0) {};
			\node (b) at (1,0) {};
			\node (c) at (2,0) {};
			\draw (a) to[bend left=50] (c);
			\draw[dashed] (a) to (b);
		\end{tikzpicture}\\
		\pinterval{} \\
	\end{tabular}
	};
	
\node[rectangle, rounded corners=0.2cm](split) at (2,2) {Split};

\node[rectangle, rounded corners=0.2cm](proper) at (5,2) {Proper interval};

\node[rectangle, rounded corners=0.2cm](threshold) at (4,0) {Threshold};

\node[rectangle, rounded corners=0.2cm](permutation) at (0,0) {Permutation};

\node (co-mi-chordal) at (2.5,3) {\footnotesize{co-mi}};
\draw (chordal) to (co-mi-chordal);
\draw (co-mi-chordal) to (split);

\node (co-chordal) at (3.5,2.5) {\footnotesize{co}};
\draw (chordal) to (co-chordal);
\draw (co-chordal) to (threshold);

\node (mi-chordal) at (4,3) {\footnotesize{mi}};
\draw (chordal) to (mi-chordal);
\draw (mi-chordal) to (proper);

\node (mi-interval) at (5.5,3) {\footnotesize{mi}};
\draw (interval) to (mi-interval);
\draw (mi-interval) to (proper);

\node (co-interval) at (7,2.5) {\footnotesize{co + co-mi}};
\draw (interval) to (co-interval);
\draw (co-interval) to (threshold);

\draw (split) to (threshold);

\node (co-comparability) at (0,2) {\footnotesize{co}};
\draw (comparability) to (co-comparability);
\draw (co-comparability) to (permutation);


\end{tikzpicture}}
\end{center}
\caption{\label{fig:diagram-co-mirror} Representation of the non-trivial characterizations of Theorems~\ref{thm:complement} and~\ref{thm:mirror}. The edges labeled with ``co'' mean: the family made by taking the pattern on the top endpoint and its complement characterize the class below. The edges labeled with ``mi'' mean the same but with mirror instead of complement. And ``co-mi'' designate both operations. The + means that the edge has both labels.}
\end{figure}


\section{Highlighted classes and characterizations}
\label{sec:special}

Among the sets of patterns on three nodes, some stand out because of their particular structures. 
These are, for example, the pairs of patterns that are mirror of one another. 
In this section, we study such special cases, anticipating on Theorem~\ref{thm:characterization}.
We start in Subsection~\ref{subsec:one-pattern}, with families restricted to one pattern and how they are related by the pattern split rule. 
Then, in Subsection~\ref{subsec:complementary}, we study pairs of complementary patterns, and in Subsection~\ref{subsec:mirror}, we study pairs of mirror (and mirror-complement) patterns. 

\subsection{Classes defined by one pattern}	
\label{subsec:one-pattern}

In this subsection, we consider the classes defined by one pattern. There are exactly $3^3=27$ different patterns on three nodes, as listed in Figure~\ref{fig:27-patterns}. 

We first prove Theorem~\ref{thm:Damaschke}.

\begin{proof}[Proof of Theorem~\ref{thm:Damaschke}]
For each class, when the characterization is not explicitly known, we first prove that it satisfies the forbidden pattern property by exhibiting the ordering, and then we show the other inclusion. Note that most of the graphs considered can be disconnected.
	
\textbf{\ppath{}.} ($\rightarrow$) If the graph is a linear forest that is a disjoint union of paths, then the natural ordering avoids the pattern. That is, placing the paths one after the other, from one endpoint to the other, avoids any jump over a node.

In the other direction, we first claim that if an ordering avoids the pattern, then every vertex has degree at most $2$. 
Indeed if a node has three or more neighbors, there must be at least two in the same direction (to the right or to the left), and then the pattern appears. Second, it is clearly not possible to avoid the pattern with a cycle. Thus we are left with the paths.

\textbf{\pstar{}.} In this paper, a star is a graph where at most one node has neighbors. This basically means that the nodes of a star can be partitioned into two sets $L$ and $I$ and a special node $c$, and the edge set is the union of the $(c,\ell)$ for any $\ell\in L$. 
($\rightarrow$) Any ordering with the nodes of $I$, then the nodes of $L$ and then $c$ avoids the pattern.
($\leftarrow$) Only the rightmost node can have an edge going left. Thus if we remove this node the graph is has no edges. This node is $c$ in our characterization, and its neighbors and non-neighbors define $L$ and $I$ respectively.

\textbf{\pinterval{}.}
This characterization is well-known for interval graphs, and an ordering avoiding this pattern is sometimes called a left-endpoint ordering, see \cite{RamalingamR88, Olariu91}. 
	
\textbf{\psplit{}.} Remember that a split graph is a graph that can be partitioned into an independent set and a clique. Note that there might be isolated nodes, belonging to the independent set.
($\rightarrow$) Any ordering of the following form avoid the pattern: first the isolated nodes, then the nodes from the independent set, and then the nodes from the clique.
($\leftarrow$)~Consider the node~$v$ that is the first node in the ordering to be the right-end of an edge, and let $u$ be the left end-point of such an edge.  
In order to avoid the pattern, the graph must contain all the edges~$(v,w)$ with $v<w$. Iteratively, we deduce that the graph contains all the edges $(a,b)$ with $v\leq a<b$. Therefore $v<...<n$ is a clique. Also by minimality, the nodes $1<...<v-1$ form an independent set. Hence the graph has a split partition. 
	
\textbf{\pforest{}.}
	($\rightarrow$) Consider the ordering $\tau$ given by any generic search applied on $G$, as defined in~\cite{CorneilK08}.
Suppose such $\tau$ contains the forbidden pattern on $a <_{\tau}b <_{\tau} c$ with $ac, bc \in E(G)$. Then using the four points condition of generic search, there must exist a vertex $d <_{\tau}b$, with $db \in E(G)$ and a path joining $a$ to $d$ with vertices before $d$ in $\tau$. This implies that there is a cycle in the graph, which is impossible as we started with a forest. Thus the pattern is not present.
	($\leftarrow$)~Consider a graph and any ordering $\tau$  of the vertices. Every cycle of the graph has a last vertex $x$ with respect to $\tau$. This vertex $x$ has necessarily two neighbours to its left in $\tau$, which corresponds to the forbidden pattern.
	
\textbf{\pbipartite{}.}   
	($\rightarrow$) Consider an ordering where the different connected components are placed one after the other. The pattern can only appear inside a component. Then for each component (that is bipartite) the vertices of one independent set are all placed before the vertices of the other independent set. 
All the vertices of the first set have all their edges pointing to the right, and all the vertices of the second set have all their edges pointing to the left. As a consequence no vertex has edges pointing both to the left and to the right, therefore the pattern does not appear. 
	($\leftarrow$) Consider an ordering of a graph avoiding the pattern. If the graph has no cycle, then it is bipartite. Consider now an induced cycle. Because of the forbidden pattern, the nodes of this cycle can be partitioned into two sets: the ones that are adjacent (in the cycle) to two nodes on their left, and the nodes that are adjacent (in the cycle) to two nodes on their right. Because any edge must have an endpoint in each set, the two sets must have the same size, and the cycle must have even length. Thus the graph is bipartite.\footnote{A generalization of this result appears in section \ref{sec:discussions}.}
	
\textbf{\pchordal{}.} This characterization is well-known, and the ordering is usually called \emph{simplicial elimination ordering} \cite{FulkersonG65}.\footnote{Note that it is perhaps more common to consider the reverse ordering, but this is equivalent, as we will see in Section~\ref{sec:definitions}.} 

\textbf{\pcomparability{}.} By definition an ordered graph avoids the forbidden pattern, if and only if, the ordering is a linear extension of a partial order. The fact that the complement pattern (sometimes called \emph{umbrella}) defines cocomparability graphs has been noted in \cite{KratschS93}.
		
\textbf{\ptriangle{}.} For patterns that are stable by any change of the ordering of the vertices, such as triangles, the forbidden pattern characterization boils down to the associated forbidden induced subgraph characterization. 

\textbf{At most two nodes.}
The pattern is made of three nodes, with no plain or dashed edge. Then every ordered graph with three or more nodes contains the pattern.
Thus the class is trivial: it consist only in the graphs with one or two nodes.
\end{proof}

The following corollary states that, up to complement, the classes defined by one pattern are exactly the ones listed in Theorem~\ref{thm:Damaschke}.

\begin{Coro}\label{coro:Damaschke-is-all}
Up to complementation the ten graph classes described in Theorem~\ref{thm:Damaschke}, are the only ones that can be defined with exactly one forbidden pattern on three nodes. 
\end{Coro}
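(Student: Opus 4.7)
The plan is to enumerate all patterns on three nodes and show that the equivalence induced by mirroring and complementation collapses them onto the ten classes of Theorem~\ref{thm:Damaschke}. Concretely, since a pattern on three nodes is specified by assigning one of three values (edge, non-edge, undecided) to each of the three pairs of vertices, there are exactly $3^3 = 27$ patterns. These are all depicted in Figure~\ref{fig:27-patterns}, so the enumeration is already done; it only remains to group them.

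First, I would invoke the Mirror item of Property~\ref{prop:basic}: a pattern and its mirror define the same class. Then I would apply the Exchange--Complement item: a pattern and its complement define complementary classes, so up to complementation they may be identified. Together, the group generated by mirror and complement partitions the 27 patterns into orbits. Reading off the names assigned in Figure~\ref{fig:27-patterns}, the orbits are exactly: $\{\ptriangle{},\ \co\ptriangle\}$; $\{\pchordal,\ \mirror\pchordal,\ \co\pchordal,\ \mirror\co\pchordal\}$; $\{\pcomparability,\ \co\pcomparability\}$; $\{\pforest,\ \mirror\pforest,\ \co\pforest,\ \mirror\co\pforest\}$; $\{\pinterval,\ \mirror\pinterval,\ \co\pinterval,\ \mirror\co\pinterval\}$; $\{\pbipartite,\ \co\pbipartite\}$; $\{\psplit,\ \co\psplit\}$ (recalling $\mirror\psplit = \co\psplit$, which is the one slightly delicate coincidence); $\{\pstar,\ \mirror\pstar,\ \co\pstar,\ \mirror\co\pstar\}$; $\{\ppath,\ \co\ppath\}$; and the singleton $\{\pnograph\}$. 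Summing the orbit sizes yields $2+4+2+4+4+2+2+4+2+1 = 27$, confirming the partition is exhaustive.

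Then it only remains to match the ten orbits with the ten rows of Table~\ref{tab:one-pattern-3}: each orbit contains a representative pattern whose class is identified in Theorem~\ref{thm:Damaschke}, so up to complementation these are the only possible classes. Strictly speaking, orbits mixing mirrors and complements (such as the \pchordal{} orbit) contain two complementary classes, but the statement of the corollary is up to complementation, so this is exactly what we want.

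The only mildly subtle points are bookkeeping: one must verify that the 27 patterns are indeed distributed as above (in particular the \psplit{} coincidence $\mirror\psplit = \co\psplit$, which explains why the orbit has size $2$ rather than $4$ and why the corresponding class is self-complementary), and that no two of the named classes in Theorem~\ref{thm:Damaschke} accidentally coincide up to complementation. The latter is immediate by listing small graphs that distinguish them (for instance, $C_4$ separates chordal from forests and from bipartite, $K_3$ separates triangle-free from interval, etc.), so this is the anticipated main obstacle but remains routine.
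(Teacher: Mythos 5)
Your proposal is correct and takes essentially the same approach as the paper: both arguments reduce to checking that the orbits of the ten listed patterns under mirror and complement exhaust all $27$ patterns on three nodes (the paper grows the list from $10$ to $19$ to $23$ to $27$, while you sum the orbit sizes $2+4+2+4+4+2+2+4+2+1=27$, correctly accounting for the self-mirror patterns and the coincidence $\mirror\psplit=\co\psplit$). The extra remark about distinguishing the ten classes pairwise is a harmless addition not present in the paper.
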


\begin{proof}
We prove the statement in the following way: we count the number of different patterns obtained from the ones in Table~\ref{tab:one-pattern-3} by complementation and mirror, and check that we reach the total number, which is~27.
Hence we start with the 10 patterns of Table~\ref{tab:one-pattern-3}. All the patterns except the pattern \pnograph{}, have a complement that is not already in the list, thus we get to 19 patterns.
Now for the mirror, four patterns are self-mirrors (\pcomparability{}, \ptriangle{}, \pbipartite{} and \ppath{}), thus these do not add new patterns. The pattern \psplit{} has a mirror that is the same as its complement. Therefore we add only four mirror patterns, and get 23. Finally, these four patterns have complements that are not in the list yet. And this completes the landscape of 27 patterns. 

\end{proof}


Now, something interesting is the relations of these classes. It happens that in many cases the union-intersection property (as described in Subsection~\ref{subsec:union-intersection}) holds, which implies a neat hierarchy of inclusions represented in Figure~\ref{fig:diagram-1-pattern}.
The following lemma lists the cases of pattern split rule, and highlights the ones where the union-intersection property holds. 
In some cases we know it holds from the literature, and in some others, it is folklore.

\begin{Lem}\label{lem:1-split}
The following equalities hold:
\begin{enumerate}
\item \pforest{} = \pchordal{} \esp{} \ptriangle{},\\ and furthermore forests = chordal $\cap$ triangle-free.
\item \pbipartite{} = \pcomparability{} \esp{} \ptriangle{}{},\\ and furthermore  bipartite  = triangle-free $\cap$ comparability.
\item \psplit{} = \mirror\pchordal{} \esp{} \co\pchordal{},\\ and furthermore  split=chordal $\cap$ co-chordal  \cite{FoldesH77, HammerS81}.
\item \pinterval{} = \pchordal{} \esp{} \co\pcomparability{},\\ and furthermore  interval = chordal $\cap$ co-comparability \cite{GilmoreH64}.
\item \ppath{} = \pinterval{} \esp{} \mirror\pforest{} = \pforest{} \esp{} \mirror\pinterval{}.
\item \pstar{} = \pbipartite{} \esp{} \psplit{} = \mirror\pforest{} \esp{} \co\pinterval{}.
\end{enumerate}
\end{Lem}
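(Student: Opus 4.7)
The six items naturally split into (a) pattern-split equalities, which hold for all six items, and (b) additional class-level equalities asserted only in items 1--4. For (a), the plan is to apply Lemma~\ref{lem:pattern-split} mechanically: in each left-hand pattern I locate the undecided edge(s), and verify by inspection of Figure~\ref{fig:27-patterns} that resolving the undecided edge to ``edge'' and to ``non-edge'' yields exactly the two patterns named on the right. For instance, \pforest{} has plain edges $(a,c)$ and $(b,c)$ with $(a,b)$ undecided; forcing $(a,b)$ to be an edge gives \ptriangle{}, forcing it to be a non-edge gives \pchordal{}. Items~3 and~4 each split their unique undecided edge analogously. For items 5 and 6, \ppath{} and \pstar{} each have two undecided edges, so there are two legitimate one-step splits (one for each undecided edge), and I would write both out explicitly. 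Nothing beyond Lemma~\ref{lem:pattern-split} is needed here.

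For (b), items 3 and 4 are classical theorems attributed in the statement to F{\"o}ldes--Hammer and Gilmore--Hoffman, and I would simply invoke them. That leaves items 1 and 2. Item 1, \emph{forests $=$ chordal $\cap$ triangle-free}, I would prove by the standard observation that triangle-freeness rules out an induced $C_3$ and chordality rules out an induced $C_k$ for $k\ge 4$; hence the intersection has no induced cycle at all, and since every graph containing a cycle contains an induced cycle (take a shortest one), the graph is a forest. The converse inclusion is trivial.

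The one genuine, though short, argument is item 2: \emph{bipartite $=$ triangle-free $\cap$ comparability}. The forward inclusion is immediate, since a bipartition $(A,B)$ admits the transitive orientation that sends every edge from $A$ to $B$ (vacuously transitive, as there are no directed paths of length two), and bipartite graphs are obviously triangle-free. For the reverse inclusion, I would fix a transitive orientation of a triangle-free comparability graph $G$ and show that no vertex $v$ can be simultaneously the head of some arc $u\to v$ and the tail of some arc $v\to w$: transitivity would force $u\to w$, and then $\{u,v,w\}$ would span a triangle in $G$. Consequently every vertex is either a source or a sink, and these two sets form a bipartition of $G$. I expect this transitive-orientation step to be the only place in the entire lemma where a non-immediate graph-theoretic argument is needed; everything else is either a direct verification against Figure~\ref{fig:27-patterns}, a citation, or a one-line consequence of the forbidden-subgraph definitions in Definition~\ref{def:graph-classes}.
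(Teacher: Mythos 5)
Your proposal is correct and follows essentially the route the paper intends: the paper treats the six pattern equalities as mechanical instances of Lemma~\ref{lem:pattern-split} (checked against Figure~\ref{fig:27-patterns}) and defers the class-level equalities to the literature for items~3--4 and to folklore for items~1--2, giving no written proof of its own. Your explicit verification of the splits, the shortest-cycle-is-induced argument for forests $=$ chordal $\cap$ triangle-free, and the source/sink bipartition extracted from a transitive orientation of a triangle-free graph are exactly the standard arguments and are all sound.
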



For the four first items, the union-intersection property holds, and for completeness we prove that it does not hold for the last two items. For these items, the intersection of the classes is strictly larger than the class with the `unsplit' pattern.

\begin{Prop}\label{prop:not-union-intersection}
The following equalities hold.
\begin{enumerate}
\item The cycle-free interval graphs are the caterpillars. 
\item The bipartite split graphs are the 2-stars.
\item The cycle-free co-interval graphs are the 2-stars.
\end{enumerate}
\end{Prop}

\begin{proof}
We prove the three items.

\begin{enumerate}
\item From Item~\ref{item-def:caterpillar} of Definition~\ref{def:graph-classes}, caterpillars are the $(T_2, \text{cycle})$-free graphs. It is known that interval graphs also have a characterization by forbidden subgraphs \cite{LekkeikerkerB62}, and the only cycle-free subgraph in the list is $T_2$. Thus cycle-free interval graphs are exactly the caterpillars.
\item Consider a split graph and its partition into a clique $K$ and an independent set $I$. If the graph is bipartite, then $K$ has size at most two, as otherwise there would be a triangle. 
Then every node of the independent set can be connected to at most one of these clique nodes, for the same reason. This corresponds to a 2-star. The reverse inclusion is trivial, using the same partition. 
\item By definition a 2-star is cycle-free. We show that it is also a co-interval graph. 
Let $(a,b)$ be the dominating edge of the 2-star, $S_a$ be the leaves of $a$, and $S_b$ be the leaves of $b$. Also let $I$ be the set of isolated nodes.
Now we complement the graph. It has the following shape. The nodes of $I$ are connected to all the nodes, there is a complete bipartite graph between $S_a$ and $S_b$, $a$ is connected to every node of $S_b$, $b$ is connected to every node of $S_a$, and finally $S_a$ and $S_b$ are cliques. 
This can be represented by a set of closed intervals: $b$ is $[0,1]$, all the nodes of $S_a$ are $[1,2]$, all the nodes of $S_b$ are $[2,3]$, $a$ is $[3,4]$, and the nodes of $I$ are $[1,4]$. 
Thus a 2-star is a co-interval graph. 

Now we prove that cycle-free co-interval graphs are 2-stars. First note that a graph of the class cannot have two independent edges (that is a $2K_2$). Indeed the complement of two independent edges is a 4-cycle, and interval graphs are $C_4$-free. Thus, except for isolated nodes, the graph can have only one connected component. Also, as in a tree if the diameter is more than 3, then there are two independent edges. Therefore the connected component must be a tree of diameter at most 3, and this matches the definition of a 2-star.   
\end{enumerate}
\end{proof}

\subsection{Complementary patterns}
\label{subsec:complementary}

We now consider families of the type $\{P_1, P_2\}$, where $P_1$ and $P_2$ are complementary patterns.

\begin{Theo}\label{thm:complement}
The following characterizations hold:
\begin{enumerate}
\item \ptriangle{} $\cup$ \co\ptriangle{} defines  a trivial class.
\item \label{item:comp-co-comp}  \pcomparability{} $\cup$ \co\pcomparability{} defines the permutation graphs.
\item \pchordal{} $\cup$ \co\pchordal{} defines the threshold graphs.
\item \pinterval{} $\cup$ \co\pinterval{}  defines the threshold graphs.
\item \psplit{} $\cup$ \co\psplit{} defines the 1-split.
\item \pforest{} $\cup$ \co\pforest{}  defines a trivial class.
\item \pbipartite{} $\cup$ \co\pbipartite{} defines  a trivial class. 
\item \ppath{} $\cup$ \co\ppath{}  defines 
a trivial class.
\item \pstar{} $\cup$ \co\pstar{}  defines a trivial class.
\end{enumerate}
\end{Theo}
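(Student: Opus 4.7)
The plan is to prove the nine items case by case, grouping them by difficulty. The ``trivial'' items (1) and (6)--(9) follow from the observation that the combined avoidance constraints force the graph to be very small. For item (1), both patterns are invariant under reordering, so by Fact~\ref{fact:order-invariant} the class is simply $\{K_3, \overline{K_3}\}$-free, and Ramsey's theorem ($R(3,3)=6$) bounds the number of vertices by $5$. For item (8), avoiding \ppath{} says that for every triple $a<_\tau b<_\tau c$ the pair $ac$ is a non-edge, while avoiding \co\ppath{} says that it must be an edge; both can hold only if no such triple exists, i.e.\ $n \leq 2$. Item (9) is symmetric on the pair $ab$. For item (6), avoiding \pforest{} caps the left-neighborhood of every vertex at size one and avoiding \co\pforest{} caps the left-non-neighborhood at size one, so every vertex has at most two predecessors and $n \leq 3$. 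Item (7) is the only trivial case requiring real work: the combined constraint forces every ``middle'' vertex $b$ to satisfy either $N(b) = \{a : a <_\tau b\}$ or $N(b) = \{c : c >_\tau b\}$; the types alternate along $\tau$, and a consistency check on the edge $v_i v_{i+2}$ rules out $n \geq 5$.

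The heart of the proof is item (3), the threshold characterization. I would start from the observation, already implicit in the proof of Theorem~\ref{thm:Damaschke} for \pchordal{}, that avoiding \pchordal{} means that for every vertex $c$ the set $\{a <_\tau c : ac \in E\}$ is a clique, and by the complementary argument avoiding \co\pchordal{} means that $\{a <_\tau c : ac \notin E\}$ is an independent set. I would then show that any graph admitting such an ordering is $\{P_4, C_4, 2K_2\}$-free, and hence threshold. For each of these three forbidden subgraphs $H$, I pick the rightmost vertex $x$ of $H$ in $\tau$ and check the four cases for which role $x$ plays in $H$: in every case either the left-neighborhood of $x$ within $H$ contains a non-edge of $H$ (contradicting the clique condition) or the left-non-neighborhood contains an edge of $H$ (contradicting the independence condition). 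For the converse inclusion, I use the reverse of the construction order $\tau$ of a threshold graph: for $a <_\tau c$, the vertex $a$ was added later than $c$ in the construction and either as universal or as isolated, so if $a$ is a left-neighbor of $c$ then $a$ was universal at addition and hence adjacent to every other left-neighbor of $c$, and symmetrically for left-non-neighbors.

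Item (4) then reduces to item (3) via the pattern split rule (Lemma~\ref{lem:pattern-split}) combined with Lemma~\ref{lem:1-split}: $\pinterval \cup \co\pinterval$ defines the same class as $\{\pchordal, \co\pcomparability, \co\pchordal, \pcomparability\}$, which is at least as restrictive as $\{\pchordal, \co\pchordal\}$, so $\CC_{\pinterval \cup \co\pinterval} \subseteq$ threshold; for the reverse I verify that the same reverse construction ordering also avoids \pcomparability{} and \co\pcomparability{}, because the leftmost vertex $a$ of any triple was added last and thus satisfies $ab \in E \iff ac \in E$. For item (2), Fact~\ref{fact:inclusion} and Theorem~\ref{thm:Damaschke} give $\CC_{\pcomparability \cup \co\pcomparability} \subseteq \CC_{\pcomparability} \cap \CC_{\co\pcomparability}$, which is comparability $\cap$ co-comparability, i.e., the permutation graphs, by the classical characterization. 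The other inclusion follows from the identity ordering of a permutation realization of $G$: if $ij \in E \iff i<j$ and $\sigma(i) > \sigma(j)$, then $a<b<c$ with $ab, bc \in E$ gives $\sigma(a) > \sigma(b) > \sigma(c)$, forcing $ac \in E$, and the non-edge case is symmetric. For item (5), the two patterns combine into the single equivalence ``$ab \in E \iff bc \in E$ for every triple $a <_\tau b <_\tau c$''; hence every vertex with both a predecessor and a successor is either universal or isolated, and two such middle vertices cannot have opposite types, so they all share the same type. A brief case distinction on that type together with the status of the free edge $v_1 v_n$ yields exactly the four 1-split types, and the converse is immediate by placing any special pair at positions $(1, n)$. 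The main obstacle throughout is the forbidden-subgraph case analysis in item (3): exhibiting the avoiding ordering for threshold graphs is short, but ruling out every non-threshold graph requires handling three forbidden subgraphs and, for each, the four possible positions of the rightmost vertex; once (3) is in hand, item (4) is a direct reduction and all other items are short structural or counting arguments.
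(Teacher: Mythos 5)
Your proposal is correct, and its overall architecture mirrors the paper's (item (3) carries the weight, item (4) reduces to it via the pattern split rule, the remaining items are short), but several individual arguments take a genuinely different route. For item (3), the paper first places the class inside split graphs and then excludes $P_4$ by exhaustively checking its twelve orderings (Figure~\ref{fig:P4}); you instead reformulate the two patterns as ``every left-neighborhood is a clique and every left-non-neighborhood is independent'' and rule out $P_4$, $C_4$ and $2K_2$ directly by inspecting the rightmost vertex of the copy. This is cleaner and avoids the exhaustive figure, but it silently invokes the Chv\'atal--Hammer forbidden-subgraph characterization of threshold graphs as the $\{2K_2,C_4,P_4\}$-free graphs, which is not among the equivalent definitions listed in Definition~\ref{def:graph-classes}; you should either cite it or note that it follows from split $=$ chordal $\cap$ co-chordal together with Item~\ref{item-def:threshold-P4}. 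For the converse of (3) you use the isolated/dominating construction order where the paper peels off a vertex simplicial in both $G$ and $\overline{G}$ (Claim~\ref{claim:threshold-x}); these are equivalent, and your observation that the leftmost vertex of any triple satisfies $ab\in E \iff ac\in E$ makes the verification for item (4) noticeably shorter than the paper's reversal argument. For item (2) you replace the citation of Dushnik--Miller's non-separating linear extensions by a direct check on the natural ordering of a permutation realization, which is self-contained. Items (6)--(9) are handled by direct local adjacency bounds on the ordering rather than the paper's containment-in-an-already-trivial-intersection arguments; both work, and your bound $n\le 3$ for item (6) is even slightly tighter than the paper's edge-counting bound. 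I see no gap.
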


Diagram~\ref{fig:diagram-co-mirror} illustrates the non-trivial characterizations of Theorem~\ref{thm:complement} (and of Theorem~\ref{thm:mirror}).


\begin{proof}
\begin{enumerate}
\item
First notice that the patterns \ptriangle{} and co-\ptriangle{} are invariant by permutation of the vertices, thus excluding these patterns is a matter of induced subgraphs more than a matter of patterns.
It is known that any graph with at least six nodes has either a triangle or an independent set of size three (because the Ramsey number for these parameters is 6). Therefore only graphs with at most 5 vertices can belong to the class. The class is then trivial.

\item
It is known that the intersection of the classes of comparability and co-comparability graphs is the class of permutation graphs \cite{DushnikM41}. Thus the class of \pcomparability{} \& co-\pcomparability{} is included in the class of permutation graphs, by Item~\ref{item:union} of Property~\ref{prop:basic}. 
We use the literature to show that this inclusion is an equality. The 2-dimensional orders, which are transitive orientations of permutation graphs, were characterized in \cite{DushnikM41}, by the existence of a \emph{non-separating linear extension}. Such extension happens to be exactly an ordering avoiding the patterns \pcomparability{} and co-\pcomparability{}. 

\item \label{item:chordal-co-chordal} This result appears without proof in \cite{Damaschke90}. We provide a proof for completeness.

As written in Lemma~\ref{lem:1-split}, the intersection of chordal graphs and co-chordal graphs, is the class of split graphs. But the class of the union of the two patterns is smaller. 
Indeed, a split graph can have a $P_4$, and it is easy to check that no ordering of a $P_4$ can avoid the two patterns (see Figure~\ref{fig:P4}). 
The split graphs that have no $P_4$ are the threshold graphs (Item~\ref{item-def:threshold-P4} in Definition~\ref{def:graph-classes}). Thus the direct inclusion holds.


%

\begin{figure}[!h]
\begin{center}
\scalebox{1}{
\begin{tabular}{ccc}
\begin{tikzpicture}
	[scale=1,auto=left,every node/.style=		
	{circle,draw,fill=black!5}]
	\node (a) at (0,0) {};
	\node (b) at (1,0) {};
	\node (c) at (2,0) {};
	\node(d) at (3,0) {};
	\draw (a) to (b);
	\draw (b) to  (c);
	\draw (c) to (d);
	\draw[dashed] (a) to [bend left=50]  (d);
	\draw[dashed] (a) to [bend left=25](c);
\draw[dashed] (b) to [bend left=25](d);
\end{tikzpicture}

& 
\begin{tikzpicture}
	[scale=1,auto=left,every node/.style=		
	{circle,draw,fill=black!5}]
	\node (a) at (0,0) {};
	\node (b) at (1,0) {};
	\node (c) at (2,0) {};
	\node(d) at (3,0) {};
	\draw (a) to (b);
	\draw (b) to [bend left=50]  (d);
	\draw (c) to (d);
	\draw[dashed] (a) to [bend left=50]  (d);
	\draw[dashed] (a) to [bend left=25]  (c);
	\draw[dashed] (b) to  (c);
\end{tikzpicture}

&
\begin{tikzpicture}
	[scale=1,auto=left,every node/.style=		
	{circle,draw,fill=black!5}]
	\node (a) at (0,0) {};
	\node (b) at (1,0) {};
	\node (c) at (2,0) {};
	\node(d) at (3,0) {};
	\draw (a) to  [bend left=50](c);
	\draw (b) to  (c);
	\draw (b) to [bend left=50] (d);
	\draw[dashed] (a) to [bend left=50]  (d);
	\draw[dashed] (a) to (b);
	\draw[dashed] (c) to  (d);
\end{tikzpicture}\\

\begin{tikzpicture}
	[scale=1,auto=left,every node/.style=		
	{circle,draw,fill=black!5}]
	\node (a) at (0,0) {};
	\node (b) at (1,0) {};
	\node (c) at (2,0) {};
	\node(d) at (3,0) {};
	\draw (a) to  [bend left=50](c);
	\draw (c) to  (d);
	\draw (b) to [bend left=50] (d);
	\draw[dashed] (a) to [bend left=50]  (d);
	\draw[dashed] (a) to (b);
	\draw[dashed] (b) to  (c);
\end{tikzpicture}
&
\begin{tikzpicture}
	[scale=1,auto=left,every node/.style=		
	{circle,draw,fill=black!5}]
	\node (a) at (0,0) {};
	\node (b) at (1,0) {};
	\node (c) at (2,0) {};
	\node(d) at (3,0) {};
	\draw (a) to  [bend left=50](d);
	\draw (b) to  (c);
	\draw (b) to [bend left=50] (d);
	\draw[dashed] (a) to [bend left=25]  (c);
	\draw[dashed] (a) to (b);
	\draw[dashed] (c) to  (d);
\end{tikzpicture}

&
\begin{tikzpicture}
	[scale=1,auto=left,every node/.style=		
	{circle,draw,fill=black!5}]
	\node (a) at (0,0) {};
	\node (b) at (1,0) {};
	\node (c) at (2,0) {};
	\node(d) at (3,0) {};
	\draw (a) to  [bend left=50](d);
	\draw (b) to  (c);
	\draw (c) to (d);
	\draw[dashed] (a) to [bend left=25]  (c);
	\draw[dashed] (a) to (b);
	\draw[dashed] (b) to [bend left=25]  (d);
\end{tikzpicture}\\

\begin{tikzpicture}
	[scale=1,auto=left,every node/.style=		
	{circle,draw,fill=black!5}]
	\node (a) at (0,0) {};
	\node (b) at (1,0) {};
	\node (c) at (2,0) {};
	\node(d) at (3,0) {};
	\draw (a) to  [bend left=50](d);
	\draw (a) to  (b);
	\draw (c) to (d);
	\draw[dashed] (a) to [bend left=25]  (c);
	\draw[dashed] (b) to (c);
	\draw[dashed] (b) to [bend left=25]  (d);
\end{tikzpicture}

&

\begin{tikzpicture}
	[scale=1,auto=left,every node/.style=		
	{circle,draw,fill=black!5}]
	\node (a) at (0,0) {};
	\node (b) at (1,0) {};
	\node (c) at (2,0) {};
	\node(d) at (3,0) {};
	\draw (a) to  [bend left=50](d);
	\draw (a) to [bend left=25](c);
	\draw (b) to [bend left =50](d);
	\draw[dashed] (a) to  (b);
	\draw[dashed] (b) to (c);
	\draw[dashed] (c) to (d);

\end{tikzpicture}
&
\begin{tikzpicture}
	[scale=1,auto=left,every node/.style=		
	{circle,draw,fill=black!5}]
	\node (a) at (0,0) {};
	\node (b) at (1,0) {};
	\node (c) at (2,0) {};
	\node(d) at (3,0) {};
	\draw (a) to  (b);
	\draw (a) to [bend left=50](c);
	\draw (c) to (d);
	\draw[dashed] (a) to [bend left =50] (d);
	\draw[dashed] (b) to (c);
	\draw[dashed] (b) to [bend left=25] (d);

\end{tikzpicture}\\

\begin{tikzpicture}
	[scale=1,auto=left,every node/.style=		
	{circle,draw,fill=black!5}]
	\node (a) at (0,0) {};
	\node (b) at (1,0) {};
	\node (c) at (2,0) {};
	\node(d) at (3,0) {};
	\draw (a) to [bend left=50](d);
	\draw (a) to [bend left=25](c);
	\draw (b) to (c);
	\draw[dashed] (a) to  (b);
	\draw[dashed] (b) to [bend left=25] (d);
	\draw[dashed] (c) to (d);

\end{tikzpicture}
&
\begin{tikzpicture}
	[scale=1,auto=left,every node/.style=		
	{circle,draw,fill=black!5}]
	\node (a) at (0,0) {};
	\node (b) at (1,0) {};
	\node (c) at (2,0) {};
	\node(d) at (3,0) {};
	\draw (a) to  (b);
	\draw (a) to [bend left=50](c);
	\draw (b) to [bend left=25](d);
	\draw (a) to (b);
	\draw[dashed] (b) to  (c);
	\draw[dashed] (a) to [bend left=50] (d);
	\draw[dashed] (c) to (d);
\end{tikzpicture}
&
\begin{tikzpicture}
	[scale=1,auto=left,every node/.style=		
	{circle,draw,fill=black!5}]
	\node (a) at (0,0) {};
	\node (b) at (1,0) {};
	\node (c) at (2,0) {};
	\node(d) at (3,0) {};
	\draw (a) to  (b);
	\draw (a) to [bend left=50](d);
	\draw (a) to (b);
	\draw (b) to (c);
	\draw[dashed] (a) to [bend left=25] (c);
		\draw[dashed] (b) to [bend left=25] (d);
	\draw[dashed] (c) to (d);
\end{tikzpicture}
\end{tabular}
}
\end{center}
\caption{The 12 different orderings of a $P_4$.\label{fig:P4}}
\end{figure}
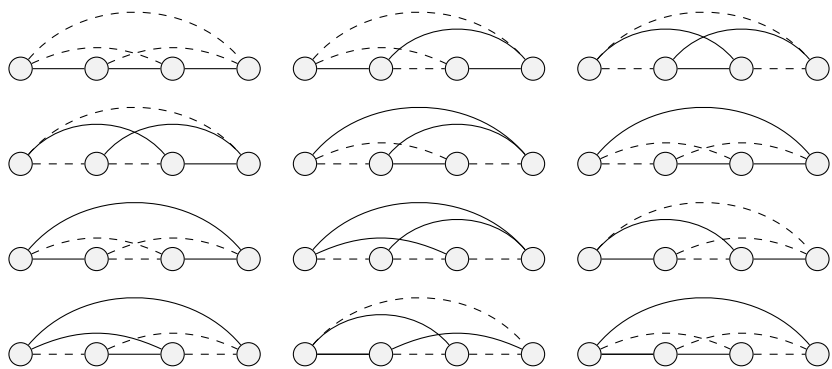

%

We now build an ordering of any threshold graph, that avoids both patterns. 
Note that an ordering $\tau$ avoids both patterns if and only if the reverse ordering of $\tau$ is a simplicial elimination ordering for both  $G$ and $\overline{G}$. 
We will build such an ordering, using the following claim.

\begin{Claim}\label{claim:threshold-x}
For any threshold graph there exists a partition of the vertices $V={x}\cup I \cup K$, such that $I$ is an independent set, $K$ is a clique, and $x$ is adjacent to every node in $K$ and no node in $I$.
\end{Claim}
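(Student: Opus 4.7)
The plan is to prove the claim by induction on $n = |V|$, using the incremental construction characterization of threshold graphs (Item~\ref{item-def:threshold-increment} of Definition~\ref{def:graph-classes}): every threshold graph on at least two vertices has a vertex $v$ that is either isolated (no neighbors in $G$) or dominating (adjacent to every other vertex of $G$). The base case $n=1$ is trivial: take $x$ to be the unique vertex and $I = K = \emptyset$.

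For the inductive step, pick such a vertex $v$ and apply the induction hypothesis to the threshold graph $G - v$ to obtain a partition $V(G - v) = \{x'\} \cup I' \cup K'$ satisfying the claim. If $v$ is isolated in $G$, I would set $x = x'$, $I = I' \cup \{v\}$, and $K = K'$. Independence of $I$, clique-ness of $K$, and both adjacency conditions on $x$ transfer immediately from the induction hypothesis, using only that $v$ has no neighbors in $G$ (so $I'\cup\{v\}$ is still independent and $x'$ is still non-adjacent to every vertex of the new $I$).

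If $v$ is dominating, the tempting move is to set $x = v$, but this fails as soon as $I' \neq \emptyset$, since $v$ is adjacent to every vertex of $I'$; this is the main conceptual point of the proof. The correct move is instead to keep the distinguished vertex and enlarge the clique: set $x = x'$, $I = I'$, and $K = K' \cup \{v\}$. Since $v$ is adjacent to everything in $G - v$, $K' \cup \{v\}$ is still a clique, and $x'$ is adjacent to every vertex of $K' \cup \{v\}$ (to $K'$ by the induction hypothesis, to $v$ because $v$ is dominating). The non-adjacency of $x'$ to the vertices of $I = I'$ is inherited directly from the induction hypothesis, completing the proof.
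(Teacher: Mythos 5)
Your proof is correct, but it takes a genuinely different route from the paper's. You induct on the number of vertices using the incremental characterization of threshold graphs (repeated addition of isolated or dominating vertices), peeling off one such vertex $v$, applying the induction hypothesis to the threshold graph $G-v$, and then showing that $v$ can always be absorbed into $I$ (if isolated) or $K$ (if dominating) while keeping the same distinguished vertex $x'$ --- and your remark that the naive choice $x=v$ fails in the dominating case is exactly the right thing to flag. The paper instead argues non-inductively: it starts from a split partition $V=K\cup I$ and uses the nested-neighborhood characterization (the neighborhoods of the vertices of $I$ are totally ordered by inclusion, as are those of the vertices of $K$) to exhibit $x$ directly, taking either the $I$-vertex of inclusion-maximal neighborhood when that neighborhood is all of $K$, or otherwise a $K$-vertex with no neighbor in $I$. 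Your version is more elementary in that it only needs the construction sequence (plus heredity of the class, or simply choosing $v$ to be the last vertex added so that $G-v$ is threshold by definition), at the cost of an induction; the paper's version is shorter and identifies $x$ explicitly, but leans on the equivalence between two of the listed definitions of threshold graphs. Both are complete proofs of the claim.
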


\begin{proof}
A partition $V=K \cup I$ exists for any threshold graphs, as they are split graphs. It is sufficient to find a proper node $x$. Consider the node of $I$ that has the largest neighborhood inclusion-wise (it is well-defined, thanks to Item~\ref{item-def:threshold-inclusion} of Definition~\ref{def:graph-classes}). If this neighborhood is $K$, then this node can be taken as~$x$. Otherwise there exists a node of $K$ that has no neighbor in $I$ and it can be taken as~$x$.  
\end{proof}

The node $x$ of Claim~\ref{claim:threshold-x} is simplicial in both $G$ and $\overline{G}$. We can take it as the first vertex. Since threshold graphs form a hereditary class, we can repeat the argument on $G-x$. Therefore, every threshold graph admits a vertex ordering which is a simplicial elimination scheme of both $G, \overline{G}$. The reverse of this ordering avoids both patterns.
\item The class defined by \pinterval{} \& co-\pinterval{} is included in the class defined by \pchordal{} \& co-\pchordal{} because of Item~\ref{item:extension} of Property \ref{prop:basic}. Therefore this class is included in the threshold graphs.

For the other direction, we show that any threshold graph admits an ordering that avoids the patterns \pinterval{} and co-\pinterval{}. Using the pattern split rule, it is equivalent to find an ordering avoiding the patterns \pchordal{}, co-\pchordal{}, \pcomparability{} and co-\pcomparability{}. The ordering of the previous item avoids \pchordal{}, co-\pchordal{}. We show that it also avoids \pcomparability{} and co-\pcomparability{}. 
By complement, it is sufficient to show that the ordering avoids \pcomparability{}. 
Consider the reverse ordering where we added the nodes $x$, from left to right (this is harmless for the comparability pattern, as it is symmetric).
We remark that in this ordering, the nodes that are part of the independent set in the original partition, have no neighbor to their left.
Indeed when such a node is taken in the ordering, its full neighborhood is still present in the graph. 
Thus if the pattern appears, the first node must be part of the independent set, and the other nodes part of the clique. But this is impossible, as the nodes of the clique that are to the right of a node of the independent set must both be linked to this node.
\item
The only vertex orderings of a split graph that satisfy the \psplit{} pattern start with an independent set and finish with a clique. Analogously, for the complement pattern, we must have first a clique, and then an independent set. 
Consider the two first nodes of an ordering the avoids both patterns.  
Suppose they are linked by an edge. Then they cannot be both in an independent set, and then in the first partition, the independent set is reduced to one node. (We can always have at least one node in the independent set: if there is none, simply take a node from the clique to be in the independent set.)
Thus the rest of the nodes form a clique, and only the last node can be taken in the independent set of the second partition. 
As a consequence, all the pairs of nodes are linked, except possibly the first and last node. 
Thus the graph is a 1-split. 
By complement, if the first nodes are not linked then the graph is an independent set plus possibly one edge, which is also a 1-split. 
For the other direction, the ordering described above is sufficient. 
\item We show that the intersection of the classes of forests and co-forests is trivial (which is enough because of Item~\ref{item:union} in Property~\ref{prop:basic}).
A forest has at most $n-1$ edges, thus the union of the edges in the graph (which is a forest) and in the complement (which is also a forest) is at most $2n-2$. Thus $n(n-1)/2\leq 2n-2$, which holds only if $n\leq 4$. Thus the class is trivial.

\item Again we show that the intersection is trivial.
Consider a bipartite graph with more than two vertices is one of the parts. Its complement necessarily contains a triangle formed by nodes of this part. As a triangle prevents the complement from being bipartite, we know that no part of the original graph has more than two nodes. Thus no graph with more than four nodes belongs to the intersection.
\item Paths are special cases of forests, thus the intersection of paths and co-paths is trivial.
\item Stars are special cases of forests, thus the intersection of stars and co-stars is trivial.
\end{enumerate}
\end{proof}

A corollary extracted directly from the proof is the following.

\begin{Coro} \label{coro:order-threshold}
A graph $G$ is a threshold graph if and only if it admits a vertex ordering which is a simplicial elimination scheme for both $G$ and $\overline{G}$.
Furthermore this vertex ordering yields a transitive orientation of both $G$ and $\overline{G}$.
\end{Coro}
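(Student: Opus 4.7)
The plan is to read the corollary as a direct consequence of Items~3 and~4 of Theorem~\ref{thm:complement}, combined with the pattern-theoretic interpretations of ``simplicial elimination scheme'' and ``transitive orientation'' established in the proof of Theorem~\ref{thm:Damaschke}. The first observation I would make is a dictionary translation: an ordering $\sigma$ of $V(G)$ is a simplicial elimination scheme for $G$ if and only if, in $\sigma$, every vertex's later-neighbors form a clique; equivalently, the reverse ordering $\sigma^R$ avoids the pattern \pchordal{}, since \pchordal{} is precisely the forbidden configuration ``two non-adjacent earlier neighbors of a vertex''. Hence $\sigma$ is a simplicial elimination scheme of both $G$ and $\overline{G}$ if and only if $\sigma^R$ avoids both \pchordal{} and \co\pchordal{}.

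With this translation in hand, the first half of the statement follows immediately from Item~3 of Theorem~\ref{thm:complement}: such an ordering exists exactly when $G \in \CC_{\pchordal \cup \co\pchordal}$, and this class is precisely the threshold graphs.

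For the \emph{furthermore} part, my plan is to invoke Item~4 of Theorem~\ref{thm:complement}, whose proof constructs, for any threshold graph, a single ordering avoiding the full family $\{\pchordal,\ \co\pchordal,\ \pcomparability,\ \co\pcomparability\}$ (the last two coming from the pattern split rule \pinterval{} = \pchordal{} \esp{} \co\pcomparability{} applied to both $G$ and $\overline{G}$). By Theorem~\ref{thm:Damaschke}, an ordering avoiding \pcomparability{} is a linear extension of a partial order, so orienting every edge of $G$ from its smaller endpoint to its larger endpoint yields a transitive orientation of $G$; symmetrically, avoiding \co\pcomparability{} yields a transitive orientation of $\overline{G}$. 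Thus the same ordering that witnesses the simplicial elimination schemes (up to reversal) witnesses the transitive orientations of $G$ and $\overline{G}$.

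The only subtlety---and the main thing to be careful about---is bookkeeping the reversal between the ``elimination'' reading of $\sigma$ and the ``pattern'' reading of $\sigma^R$. Since \pcomparability{} and \co\pcomparability{} are symmetric under vertex relabeling in the sense that reversing an ordering preserves avoidance (it just swaps the orientation of the resulting partial order), the same ordering, read in either direction, gives the required joint witness, and no additional verification beyond what is already contained in the proof of Theorem~\ref{thm:complement} is needed.
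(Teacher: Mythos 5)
Your proposal is correct and follows essentially the same route as the paper: the paper presents this corollary as "extracted directly from the proof" of Items~3 and~4 of Theorem~\ref{thm:complement}, which is exactly the extraction you spell out (the equivalence between simplicial elimination schemes and reversed orderings avoiding \pchordal{}/\co\pchordal{}, plus the fact that the constructed ordering also avoids \pcomparability{} and \co\pcomparability{}, hence yields transitive orientations). Your care with the reversal, justified by \pcomparability{} being a self-mirror pattern, matches the implicit bookkeeping in the paper's proof.
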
 

\subsection{Mirror patterns}
\label{subsec:mirror}

We now study to the classes defined by a pattern and its mirror pattern, or its mirror-complement pattern. Note that the number of patterns to consider is rather small, because many patterns on three nodes are symmetric. 
Also note that the mirror of \psplit{} is co-\psplit{}, thus the associated class has already been considered in Theorem~\ref{thm:complement}.
Again see Diagram~\ref{fig:diagram-co-mirror} for illustration.  

\begin{Theo}\label{thm:mirror}
The following characterizations hold:
\begin{enumerate}
\item \pchordal{} $\cup$ \mirror\pchordal{} defines the proper interval graphs.
\item \pchordal{} $\cup$ \mirror\co\pchordal{}  defines the split graphs.
\item \label{item:interval-mirror-interval} \pinterval{} $\cup$ \mirror\pinterval{}  defines the proper interval graphs.
\item \pinterval{} $\cup$ \mirror\co\pinterval{} defines the threshold graphs.
\item \pstar{} $\cup$ \mirror\pstar{}  defines a trivial class.
\item \pstar{} $\cup$ \mirror\co\pstar{}  defines a trivial class.
\item \label{item:forest-mirror-forest} \pforest{} $\cup$  \mirror\pforest{}  defines paths.
\item \pforest{} $\cup$ \mirror\co\pforest{}  defines a trivial class.
\end{enumerate}
\end{Theo}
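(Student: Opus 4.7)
The plan is to treat each of the eight items by the same two-step recipe: translate the pair of forbidden patterns into a simple implication that must hold on every ordered triple $a<b<c$, then match this implication to a known characterization of the target class. Throughout I use the mirror convention, namely that reversing the order of a three-vertex pattern swaps the labels of positions $(1,2)$ and $(2,3)$ while fixing $(1,3)$.

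Items 5, 6, 7 and 8 reduce to short combinatorial arguments. For items 5 and 6, \pstar{} forces every edge to end at $v_n$, while \mirror\pstar{} (resp.\ \mirror\co\pstar{}) imposes a second boundary restriction; combining the two leaves at most one edge (resp.\ gives a direct contradiction as soon as $n\geq 4$). For item 8 the combined constraint reads ``each vertex has at most one backward neighbor and at most one forward non-neighbor'', and evaluating this at the second vertex of any valid ordering forces $n \leq 3$. For item 7, \pforest{} bounds backward neighbors by one and \mirror\pforest{} bounds forward neighbors by one, so the maximum degree is $2$; the leftmost vertex of any cycle would have two forward neighbors, so the graph is acyclic, and by the connectedness convention a single path.

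For items 2 and 4, the combined patterns yield implications of the form ``$(b,c)\in E$ forces certain edges to the left''. In item 2, the implication is $(b,c)\in E \Rightarrow (a,b)\in E$: I define $K$ to be the set of vertices having at least one right-neighbor in the ordering, prove $K$ is a clique (any $v\in K$ exhibits, via its right-neighbor, that every earlier vertex is adjacent to $v$) and that $V\setminus K$ is independent, yielding a split partition; the converse orders a split graph's clique first, then its independent set. For item 4 the stronger implication $(b,c)\in E \Rightarrow (a,b), (a,c)\in E$ lets me exclude every induced $P_4$, $C_4$ and $2K_2$ by a direct inspection of all their vertex orderings, so the graph is threshold; the converse uses the inclusion-ordered split representation of Definition~\ref{def:graph-classes} (\ref{item-def:threshold-inclusion}), placing the clique part first in decreasing order of neighborhood and then the independent part in decreasing order. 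Item 3 unfolds directly into the umbrella-free condition ``$(a,c)\in E$ implies $(a,b), (b,c)\in E$ for every $a<b<c$'', which is the classical Roberts characterization of proper interval graphs.

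Item 1 is the main obstacle. Forbidding \pchordal{} and \mirror\pchordal{} is equivalent to requiring that both the backward- and the forward-neighborhood of every vertex form cliques, i.e., the ordering is a simplicial elimination scheme for both $G$ and its reverse. One direction is immediate via the pattern split rule: since \pinterval{} $=$ \pchordal{} \esp{} \co\pcomparability{} by Lemma~\ref{lem:pattern-split}, item 3's class is contained in item 1's, so every proper interval graph qualifies. The converse is subtler because the ordering itself need not be umbrella-free (for instance, in an edge together with an isolated vertex, the isolated vertex can be placed between the endpoints, destroying the umbrella-free property while keeping both one-sided neighborhoods trivially cliques). My plan is to verify that this doubly-simplicial condition forbids every minimal obstruction to proper interval graphs. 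Claw-freeness and chordality follow from the ``leftmost-vertex'' argument (a claw's center or a cycle's leftmost vertex exposes two non-adjacent neighbors on one side of the ordering, contradicting the clique condition); the remaining obstructions, namely the nets, tents and the $3$-sun, are handled by applying the same technique to the vertex of highest degree in each obstruction, and this is where the technical bulk of the proof will lie.
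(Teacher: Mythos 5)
Your proposal is correct in outline but takes a genuinely different, more self-contained route than the paper. The paper dispatches most items by citation or by reduction to earlier results: item 1 is the known equivalence between reversible elimination schemes and proper interval graphs, item 3 is the Looges--Olariu ordering characterization, item 2 follows in one line from the pattern split rule (\psplit{} $=$ \mirror\pchordal{} \esp{} \co\pchordal{} in Lemma~\ref{lem:1-split}, so the mirrored family is exactly $\CC_{\psplit}$), item 4 reuses Figure~\ref{fig:P4} and the $k_1,\dots,k_q,i_p,\dots,i_1$ ordering, and items 6 and 8 are reduced to the triviality results of Theorem~\ref{thm:complement}. You instead rewrite each pair of patterns as an implication on ordered triples and argue directly. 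This buys a proof that does not lean on external ordering characterizations, at the cost of redoing known work (essentially all of item 1) and of importing the forbidden-subgraph characterization of proper interval graphs (claw, net, tent and holes), which the paper never needs. Your items 2, 3, 4, 5, 6 and 7 are sound, and your observation that the converse of item 1 cannot be read off item 3 ordering-by-ordering (the edge-plus-isolated-vertex example) identifies exactly the right difficulty.

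Two concrete issues remain. First, in item 8 the claim that the constraint at the second vertex forces $n\leq 3$ is false: take $V=\{1,2,3,4\}$ in the natural order with edges $12$, $13$, $24$ (an induced $P_4$, namely $3\!-\!1\!-\!2\!-\!4$). Every vertex has at most one backward neighbor ($2$ and $3$ see only $1$, and $4$ sees only $2$) and at most one forward non-neighbor ($1$ misses only $4$, $2$ misses only $3$, $3$ misses only $4$), so this ordering avoids both \pforest{} and \mirror\co\pforest{}. The class is still trivial, but the correct bound is $n\leq 4$ and it requires counting the edges from $\{v_1,v_2\}$ into $\{v_3,\dots,v_n\}$ (at least $2(n-3)$ from the forward non-neighbor bounds, at most $n-2$ from the backward neighbor bounds); the constraint at $v_2$ alone bounds nothing. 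Second, the converse of item 1 is only a plan: you still owe the finite verifications that the net and the tent admit no ordering in which both one-sided neighborhoods of every vertex are cliques. These do go through (for the net, the middle triangle vertex in the ordering would need its pendant on both sides at once; for the tent, the degree-two vertex adjacent to the two extreme triangle vertices would have to lie both before and after the triangle), but they are the substance of the hard direction and must be written out; note also that ``tent'' and ``$3$-sun'' name the same obstruction, so your list of remaining cases is shorter than you state.
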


Note that proper interval graphs is an example of a class that can be defined via two different (split-minimal) pairs of forbidden patterns. 
Also, the patterns \pchordal{} and \pinterval{} illustrates the complex interaction that can exists between mirror, complement and mirror-complement patterns.

\begin{proof}
We prove the items one by one.
\begin{enumerate}
\item  An ordering that avoids both patterns is called \emph{reversible elimination scheme}, and it is known that the graphs that have such orderings are exactly the proper interval graphs~\cite{HabibL09}.
\item As already seen in Lemma~\ref{lem:1-split}, this class is the split graphs (by application of the Pattern split rule).
\item This characterization appears in \cite{LoogesO93}.
\item \label{item:interval-mirror-co-interval} The intersection of interval and co-interval graphs is included in the classes of split graphs (as a subclass of chordal and co-chordal graphs). 
Moreover, the graphs that avoid both patterns have no $P_4$ (see Figure~\ref{fig:P4}), thus, the class is included in threshold graphs. 
For the reverse inclusion, consider a threshold graph with the ordering $\tau=k_1, \dots k_q, i_p, \dots i_1$, using notations of Item~\ref{item-def:threshold-inclusion} in Definition~\ref{def:graph-classes}.
Suppose this ordering contains the pattern \pinterval{} on three nodes $a<b<c$. Necessarily $a\in K$ and $b,c\in I$ (using again the notation of the definition). 
Then by definition of $\tau$ we have $b=i_{\alpha}$ and $c=i_{\beta}$ with $\alpha>\beta$. Now because of the pattern  $N(i_\beta) \not\subseteq N(i_{\alpha})$ which is a contradiction. 
Similarly $\tau$ containing the pattern mirror-co-\pinterval{} draws a contradiction with the definition of the ordering.

\item  The only vertex orderings of a star graph that satisfies the \pstar{} pattern are forced to end with its center. For the mirror pattern we need to start from the center. Therefore only a star graph reduced to one edge can satisfy the two patterns.
\item The intersection of the classes defined by the patterns is trivial, as proved in Theorem~\ref{thm:complement}, thus the class is trivial.
\item Clearly the natural ordering for linear forests avoids these two patterns. Now suppose that a tree that is not a path avoids both patterns. This graph must have a claw ($K_{1, 3}$). But one can easily check that no ordering of a claw can avoid the two patterns. Therefore this class is reduced to paths.
\item The intersection of the classes is trivial,  as proved in Theorem~\ref{thm:complement}. 
\end{enumerate}

\end{proof}

\section{Proof of Theorem~\ref{thm:characterization}}
\label{sec:main-proof}


\subsection{Program generating the list of classes}
\label{subsec:program}

In this subsection, we describe and prove the algorithm used for generating all the pattern families of Theorem~\ref{thm:characterization}. Note that the task of listing the relevant pattern families could also be carried out by hand, but we think that writing and proving a program produces a more reliable output, reducing the risk of missing a family.

Let us first provide some notations and intuition to understand the algorithm that follows. 
First, we will encode the families of patterns by bit vector of 27 cells (following the numbering of the 27 patterns of Figure~\ref{fig:27-patterns}). (In the algorithm the numbering of the cells starts at 0.)
Actually we will start with 8 bits vectors, because we start with only full patterns. 
Second the algorithm basically consists in generating all the families of full patterns (line 1), then simplifying it by removing equivalent families (lines 2-4), and then we further simplify by using the pattern split rule (lines 6-8).
Third, we use the following notations: $complement(V)$ designates the fact of reversing the vector (the first bit becomes the last bit etc.), and $exchange(V)$ is the vector $V$ where the bits at position 1 and 4 have been exchanged, as well as the ones at position 3 and 6.   

\begin{algorithm}
\KwResult{A list of patterns}

$D \leftarrow$ all 8-bit vectors except [0,0,0,0,0,0,0,0], \\

\For{ all pairs $p, q \in D, p \neq q$}{
	\If{ $(p=complement(q))\vee(p=exchange(q))\vee (p=exchange(complement(q))$)}{Remove $q$ from $D$}}	
	
Append 19 zeros after each bit vector in $D$ \{\% to upgrade them as 27-bit vectors\},\\

\For{all $p\in D$}{
 \For{every triplet $(a,b,c)$ of Lists~$A, B, C$ in Figure~\ref{fig:triplets}}{ if the $a^{\text{th}}$ and the $b^{\text{th}}$ bit of $p$ are 1s  then transform them into 0s, and make the $c^{\text{th}}$ bit of $p$ a 1. }}
Return $D$.
\caption{Computing split-minimal patterns}	
\end{algorithm}

\begin{figure}[!h]
\begin{description}
\item{List A:} $(0,1,16)$, $(0,2,12)$, $(0,4,8)$, $(1,3,13)$, $(1,5,9)$, $(2,3,17)$, $(2,6,10)$, $(3,7,11)$, $(4,5,18)$, $(4,6,14)$, $(5,7,15)$, $(6,7,19)$.
\item{List B:} $(8,9,22)$, $(8,10,20)$, $(9,11,21)$, $(10,11,23)$, $(12,13,24)$, $(12,14,20)$, $(13,15,21)$, $(14,15,25)$, $(16,17,24)$, $(16,18,22)$, $(17,19,23)$, $(18,19,25)$.
\item List C: $(20,21,26)$, $(22,23,26)$, $(24,25,26)$.
\end{description}
\caption{\label{fig:triplets}Lists of triplets describing the pattern split rule.}
\end{figure}

We now prove the correctness of the algorithm.

\begin{claim}
All the classes characterized by patterns on three nodes are captured by a family described by a bit vector of $D$ at line 5 (up to complement).
\end{claim}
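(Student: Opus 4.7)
The plan is to show that any class $\CC$ defined by a family of three-node patterns is represented, up to complement, by some non-zero $8$-bit vector that survives the deduplication loop at lines~2--4.

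First I would apply the pattern split rule of Lemma~\ref{lem:pattern-split} iteratively to the given family $\FF$, decomposing each undecided edge into its two decided alternatives, until only full patterns remain. This produces an equivalent family $\FF'$ with $\CC_{\FF'} = \CC$ that is a subset of the eight full patterns (numbered $0$--$7$ in Figure~\ref{fig:27-patterns}), hence is encoded by a single $8$-bit vector $v$. If $\FF' = \emptyset$ then $\CC$ is the class of all graphs and is excluded as a degenerate case; otherwise $v \neq 0$, so $v$ appears in the initial $D$ built at line~1.

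Second I would verify that the bit-level operations $complement$ and $exchange$ of the algorithm correctly realize the pattern-level complement and mirror on full families. Direct inspection of Figure~\ref{fig:27-patterns} shows that the complement pairs among the eight full patterns are $(0,7)$, $(1,6)$, $(2,5)$, $(3,4)$, so reversing the $8$-bit vector indeed implements the pattern-complement operation on full families. Likewise, the only non-self-mirror full patterns form the mirror pairs $(1,4)$ and $(3,6)$, so the prescribed swap of positions implements the mirror operation. Combining this with the Mirror and Exchange--Complement items of Property~\ref{prop:basic}, each of the vectors $complement(v)$, $exchange(v)$, and $exchange(complement(v))$ encodes a family whose class is either $\CC$ itself or its complement $\overline{\CC}$.

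Third, the condition tested in line~3 is precisely whether $v$ coincides with one of these three alternative encodings of another vector $q$ already enumerated in the loop, so removing $v$ never eliminates the last witness of $\CC$ or of $\overline{\CC}$ from $D$. Hence at line~5 some vector of $D$ still encodes a family whose class equals $\CC$ or $\overline{\CC}$, which is exactly the statement of the claim. The main obstacle, though conceptually minor, is the bookkeeping in the previous paragraph: one must check that mirror and complement commute and are involutions, so that the four-element orbit generated by them truly accounts for every encoding of $\CC$ or $\overline{\CC}$ produced by these symmetries. Both facts are immediate from the definitions in Subsection~\ref{subsec:operations}.
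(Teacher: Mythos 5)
Your proposal is correct and follows essentially the same route as the paper: reduce to full patterns via the split rule, check that vector reversal and the bit swap $(1\leftrightarrow 4, 3\leftrightarrow 6)$ realize complement and mirror on the eight full patterns, and conclude that the deduplication at lines 2--4 only removes vectors whose complement/mirror/complement-mirror representative remains in $D$. The extra remark that mirror and complement are commuting involutions is a harmless refinement of the same argument.
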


As noted at the end of Subsection~\ref{subsec:pattern-split}, all the families are equivalent to a family of full patterns. Thus after line~1, $D$ has the property of the claim. Note that we do not consider the trivial family that contains no patterns and corresponds to the all-zero vector.
For line 2 to 4, note first that the \emph{exchange} operation, is the equivalent on vectors of a mirror operation on patterns (because Pattern~1 (resp. 3) is the mirror of Pattern~4 (resp.~6) and that the other full patterns are stable by mirror). 
Then note that because of our numbering of the patterns in Figure~\ref{fig:27-patterns}, the fact of reversing a vector (on 8 bits) is equivalent of complementing each pattern of the family.
Then the removal of line~4 is safe: the only vectors that are removed are the ones that represent a family that has a complement, a mirror or a complement mirror equivalent whose vector stays in~$D$.

\begin{claim}
All the triplets $(a,b,c)$ listed in Figure~\ref{fig:triplets} are such that the pattern $c$ can be split into $a$ and $b$ by the pattern split rule. 
\end{claim}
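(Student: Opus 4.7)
The plan is to verify the claim by direct inspection of Figures~\ref{fig:triplets} and~\ref{fig:27-patterns}: for each triplet $(a,b,c)$ I will pinpoint one undecided edge of pattern $c$, then check that filling that edge as a plain edge yields pattern $a$ and filling it as a non-edge yields pattern $b$. By Lemma~\ref{lem:pattern-split}, this shows that pattern $c$ splits into $a$ and $b$.

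I would first observe that the three lists are organized by the number of undecided edges in $c$. In List~A, pattern $c$ (an index in $\{8,\dots,19\}$) has exactly one undecided edge, so the split must produce two full patterns, necessarily with indices in $\{0,\dots,7\}$. In List~B, pattern $c$ (an index in $\{20,\dots,25\}$) has two undecided edges, and splitting one produces two patterns each with one undecided edge, so the outputs lie in $\{8,\dots,19\}$. In List~C, pattern $c$ is \pnograph{} (pattern~26), and splitting any of its three undecided edges gives two patterns with two undecided edges, lying in $\{20,\dots,25\}$. This organization explains the sizes $12$, $12$, and $3$ of the three lists.

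To illustrate the check, take the triplet $(0,4,8)$ of List~A: pattern~8 is \pforest{}, whose only undecided edge is the one between vertices $1$ and $2$; filling this edge as plain yields the triangle \ptriangle{} (pattern~0), while filling it as dashed yields \pchordal{} (pattern~4). Similarly, for the triplet $(20,21,26)$ of List~C: replacing the undecided edge between vertices $2$ and $3$ of \pnograph{} by a plain edge gives mirror-\pstar{} (pattern~20), and by a dashed edge gives mirror-co-\pstar{} (pattern~21).

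The main obstacle is purely bookkeeping: there are $27$ triplets, each requiring identification of one specific edge of $c$ and comparison with the corresponding two patterns. To guard against transcription errors, I would encode each pattern as a vector in $\{E,N,U\}^3$ (one entry per pair of vertices) and check, for each triplet, that $a$ and $b$ differ only at the position where $c$ has a $U$, with $a$ holding $E$ and $b$ holding $N$ at that position. This reduces the verification to a routine, purely mechanical check.
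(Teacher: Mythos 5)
Your proposal is correct and matches the paper's approach, which simply states that the claim "can be checked triplet by triplet"; your extra structuring by the number of undecided edges and the worked examples (e.g., $(0,4,8)$ and $(20,21,26)$) are accurate refinements of the same mechanical verification.
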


This can be checked triplet by triplet. As stated in Lemma~\ref{lem:pattern-split}, the class defined by a family is stable by the split (or here the reverse split) of a pattern. Thus we still have captured all the families at the end of the algorithm. Also because we apply the simplification with patterns of 0, then 1 and then 2 undecided edges, we saturate the set with the simplification rule, and all the families are split minimal by the end of the algorithm.

\subsection{Enumeration}

\label{subsec:big-proof}

\paragraph*{Proof strategy and notations.}
The program described in Subsection~\ref{subsec:program}, provides a list of 87 families to investigate in order to have a list of all the classes characterized by a family of patterns on three nodes. 
The proof of Theorem~\ref{thm:characterization} consists in going through this list and finding the class associated with each family.
We use the following notation: [X]\textsubscript{\textcolor{gray}{$i$}} denotes the family composed of the patterns in the list $X$, described by their numbers from Figure~\ref{fig:27-patterns}; and $i$ is just an index to keep track of the items in the list of 87 families. 
For example [2,5]\textsubscript{\textcolor{gray}{2}} is the second family in our list, and it is composed of the patterns 2 and 5, which are respectively \pcomparability{} and co-\pcomparability{}.\\
Note that for one class, there may be several families, for example through mirror operation; in order to make the comparison with the output of the program easier, we do not change the family to make it ``more canonical''. For example, the program outputs [1] and not [4], which means that we consider the pattern mirror-\pchordal{} and not the pattern \pchordal{}.

\paragraph*{Isolated nodes and connectivity.}

In the majority of the classes, additional isolated nodes are allowed. In our orderings they often appear on the left of the other nodes. But some pattern forbid such isolated nodes, except when the graph is reduced to an independent set, in which case we indicate it in parenthesis. 
Also for some classes, some combination of patterns allow several non-trivial connected component, and some allow only one connected component. Again this is denoted in parenthesis.

\paragraph*{Enumeration.}
\begin{description}
\item[\family{2}{1}{Comparability graphs}] See Theorem~\ref{thm:Damaschke}.
\item[\family{2,5}{2}{Permutation graphs}] See Theorem~\ref{thm:complement}.
\item[\family{1}{3}{Chordal graphs}] See Theorem~\ref{thm:Damaschke}, in addition to the mirror property (Item~\ref{item:mirror} in Property~\ref{prop:basic}).
\item[\family{1,6}{4}{Threshold graphs}] The mirror property (Item~\ref{item:mirror} in Property~\ref{prop:basic}) implies that this family is equivalent to [4,3], that is  \pchordal{} and co-\pchordal{}. Then the result follows from Theorem~\ref{thm:complement}.
\item[\family{9}{5}{Interval graphs}] See Theorem~\ref{thm:Damaschke} (and mirror property).
\item[\family{1,4}{6}{Proper interval graphs}]
See Theorem~\ref{thm:mirror}.
\item[\family{4,9}{7}{Proper interval graphs}]
We can note that Pattern~1 extends Pattern~9, and Pattern~4 extends Pattern~18. Thus we know that the class defined by [4,9] is contained in the one of [1,4] and contains the one of [9,18]. 
From Theorem~\ref{thm:mirror}, we know that [1,4] and [9,18] both define proper interval graphs, thus the result. 
\item[\family{13}{8}{Split graphs}] See Theorem~\ref{thm:Damaschke}.
\item[\family{4,13}{9}{Augmented cliques}]

Let us first give an ordering that avoids the patterns: first the isolated nodes, then the special node, then the nodes of the clique that are neighbor to this node, and then the rest of the clique. It is easy to check that this ordering avoids both patterns. 

Now, for the reverse direction. Thanks to the proof of Theorem~\ref{thm:Damaschke} for the split graphs, we know that a graph that avoids Pattern~13 can be divided into first an independent set and then a clique. The pattern~4 forbids that a node of the clique is adjacent to two nodes of the independent set. Thus every node of the clique has at most one neighbor in the independent set, and we claim that this neighbor is the same for every such node. Suppose there is the following scenario in an ordered graph: two nodes $x$ and $y$ from the independent set, and then two nodes $u$ and $v$ from the clique such that $(x,u)$ and $(y,v)$ are edges. One can check that for any ordering of $x$ and $y$, Pattern~13 appears. This proves the claim. 
And this finishes the proof: the graph must be a clique, with possibly some nodes linked to one additional node (plus isolated vertices).

\item[\family{13,14}{10}{1-Split}]
See Theorem~\ref{thm:complement}.
\item[\family{1,2}{11}{Trivially perfect graphs}]
Thanks to Figures~\ref{fig:P4} and~\ref{fig:C4}, one can check that no ordering of $C_4$ or $P_4$ can avoid both patterns.
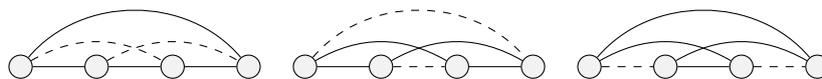
\begin{figure}[!h]
\begin{center}
\begin{tabular}{ccc}
\begin{tikzpicture}
	[scale=1,auto=left,every node/.style=		
	{circle,draw,fill=black!5}]
	\node (a) at (0,0) {};
	\node (b) at (1,0) {};
	\node (c) at (2,0) {};
	\node(d) at (3,0) {};
	\draw (a) to  (b);
	\draw (b) to (c);
	\draw (c) to (d);
	\draw[dashed] (a) to [bend left=30] (c);
	\draw[dashed] (b) to [bend left=30] (d);
	\draw (a) to [bend left=50] (d);
\end{tikzpicture}
&
\begin{tikzpicture}
	[scale=1,auto=left,every node/.style=		
	{circle,draw,fill=black!5}]
	\node (a) at (0,0) {};
	\node (b) at (1,0) {};
	\node (c) at (2,0) {};
	\node(d) at (3,0) {};
	\draw (a) to  (b);
	\draw[dashed] (b) to (c);
	\draw (c) to (d);
	\draw (a) to [bend left=30] (c);
	\draw (b) to [bend left=30] (d);
	\draw[dashed] (a) to [bend left=50] (d);
\end{tikzpicture}
&
\begin{tikzpicture}
	[scale=1,auto=left,every node/.style=		
	{circle,draw,fill=black!5}]
	\node (a) at (0,0) {};
	\node (b) at (1,0) {};
	\node (c) at (2,0) {};
	\node(d) at (3,0) {};
	\draw[dashed] (a) to  (b);
	\draw (b) to (c);
	\draw[dashed] (c) to (d);
	\draw (a) to [bend left=30] (c);
	\draw (b) to [bend left=30] (d);
	\draw (a) to [bend left=50] (d);
\end{tikzpicture}
\end{tabular}
\end{center}
\caption{\label{fig:C4} The three different orderings of a $C_4$.}
\end{figure}
Therefore, by Item~\ref{item:p4-c4} of Definition~\ref{def:graph-classes}, this class is included in the class of trivially perfect graphs.

For the other direction, we give two proofs. First, as said in Item~\ref{item:comparability-of-tree} of Definition~\ref{def:graph-classes}, a trivially perfect graph is the comparability graph of an arborescence. 
Let us orient each connected component of this arborescence from the leaves to the root. Consider an ordering given by the following rule: take a source delete it and update the graph. It yields a linear extension of the graph and avoids Pattern~2 (\pcomparability{}).
This ordering is also a perfect elimination scheme and avoids Pattern~1 (mirror-\pchordal{}).

Now a second proof is based on Item~\ref{item:quasi-threshold} in Definition~\ref{def:graph-classes}: the class of trivially perfect graph, is the class that contains the graph with one node, and is stable by disjoint union and addition of a universal vertex.
We build the ordering following this incremental construction: in the the disjoint union we put the different component one after the other, and we add the universal vertices on the right-most position in the ordering. It is easy to check that this construction avoids the patterns. 

\item[\family{1,10}{12}{Threshold graphs}]
Using the Pattern Split rule (Lemma~\ref{lem:pattern-split}), this family is equivalent to [1,2,6]. 
Syntactically, we deduce that it is included in the class of $[1,6]_4$, and this family is mirror of $\pchordal{}\cup \co\pchordal$, which defines threshold graphs (Theorem~\ref{thm:complement}). 
Also Pattern~1 extends Pattern~9, thus $[1,10]$ contains $[9,10]$, which is the mirror of $\pinterval\cup \co\pinterval$, which also defines threshold graphs (Theorem~\ref{thm:complement}). Thus [1,10] defines threshold graphs.

\item[\family{2,9}{13}{Trivially perfect graphs}]

As Pattern~1 extends Pattern~9, $[1,2]_{11}$ extends $[2,9]$, thus the class is included into trivially perfect graphs. Also, the ordering used for the proof  of $[1,2]_{11}$ avoids the Patterns of $[2,9]$ thus the result.

\item[\family{9,10}{14}{Threshold graphs}] See Theorem~\ref{thm:complement}-4 (along with the mirror property of Item~\ref{item:mirror} in Property~\ref{prop:basic}).
\item[\family{1,2,4}{15}{Cliques}]
We claim that no ordering of a $P_3$ avoids the three patterns. Indeed, if the middle node of the path is respectively first, second or third in the ordering, then the pattern respectively 1, 2 or 4, is present. 
Thus the class is included in the union of cliques, and any ordering where the cliques appear one after the other avoids the three patterns.
\item[\family{2,4,9}{16}{Cliques}]
This case is similar to the previous one, $[1,2,4]_{15}$, where 9 replaces 1. The same ordering works.
\item[\family{2,13}{17}{Threshold graphs}]

Because of Pattern~13, the class is included into the class of split graph, and using Figure~\ref{fig:P4}, one can check that no ordering of a $P_4$ avoids both patterns. Thus the class is included into threshold graphs, thanks to Item~\ref{item-def:threshold-P4} in Definition~\ref{def:graph-classes}.
For the other inclusion, using the notations of Item~\ref{item-def:threshold-inclusion} in Definition~\ref{def:graph-classes}, the ordering $i_1,...,i_p,k_q,...k_1$ avoids both patterns.  
\item[\family{10,13}{18}{Co-augmented-cliques}]

Note that the complement-mirror family of [10,13] is [18,13], which is extended by $[4,13]_9$, which characterize augmented cliques. Also note that ordering given in $[4,13]_9$ also avoids Pattern~18. 
Thus the result. 

\item[\family{2,5,13}{19}{Threshold graphs}]
This class is included into the one of $[2,13]_{17}$ thus it is included into threshold graphs. 
For the other inclusion, using again the notations of Item~\ref{item-def:threshold-inclusion} in Definition~\ref{def:graph-classes}, the ordering $i_1,...,i_p,k_q,...k_1$ avoids the three patterns.  
\item[\family{2,4,13}{20}{Connected cliques}]
Similarly to the case of $[1,2,4]_{15}$, the patterns forbid an induced $P_3$, thus every connected component is a clique. If there are two component with one or more edges, then Pattern~13 appears, thus the graphs of the class are composed of one clique, plus possibly isolated vertices. 
In the other direction, having the isolated vertices first, and then the clique avoids the pattern. 

\item[\family{4,10,13}{21}{Connected cliques (without isolated nodes or of size < 3)}]
As Pattern~2 extends Pattern~10, the class of [4,10,13] is included into the one of $[2,4,13]_{20}$, that is in the connected cliques. Moreover, if there is an isolated node, it cannot be on the left or on right of an edge because of respectively Pattern~10 and~13. As a consequence, isolated nodes can appear only if the clique has strictly less than three nodes.
An ordering avoiding the patterns has first the possible isolated nodes and then the clique.
\item[\family{2,13,18}{22}{Connected cliques}]
As Pattern~4 extends Pattern~18, the class of [2,13,18] is included into the one of $[2,4,13]_{20}$, that is in the connected cliques. And the same ordering works.
\item[\family{10,13,18}{23}{Connected cliques}]
As Pattern~2 extends Pattern~10, the class is included in the class of $[2,13,18]_{22}$,  that is in the connected cliques. And the same ordering works.

\item[\family{0}{24}{Triangle-free graphs}] See Theorem~\ref{thm:Damaschke}.
\item[\family{0,7}{25}{Trivial}]
See Theorem~\ref{thm:complement}
\item[\family{0,5}{26}{Bipartite permutation}]
First we show that the graphs of the class are bipartite. It is well known  that co-comparability graphs do not contain induced odd cycles of length $\geq 5$ as induced subgraphs~\cite{Gallai67}. 

Thus if there is an odd cycle, it is a triangle, and this is forbidden by Pattern~0. 
Then the graphs at hand are contained in bipartite cocomparability graphs. Moreover, bipartite graphs are comparability graphs, and comparability co-comparability graphs are permutation graphs. Therefore the class is contained in the class of bipartite permutation graphs. 
And any cocomparability ordering avoids both patterns, as there is no triangle.

Note that characterizations of bipartite permutation graphs related to orderings are known~\cite{SpinradBS87}, but they are different. 

\item[\family{0,3}{27}{Triangle-free $\cap$ co-chordal}]
by Fact~\ref{fact:order-invariant}.
\item[\family{0,3,6}{28}{Bipartite chain graph}]
Patterns~3 and~6 imply that  this class is contained into co-proper-interval graphs (thanks to the complement property and Item~\ref{item:interval-mirror-interval} of Theorem~\ref{thm:mirror}). 
In addition, it is known that co-interval graphs are comparability graphs of proper interval orders~\cite{Roberts69}.
If in this order, there are three elements $x<y<z$, then there would be a triangle in the comparability graph, which is forbidden by Pattern~0. 
Thus the graphs considered are bipartite. 
Furthermore co-interval graphs do not contain $2K_2$ (because intervals do not contain induced $C_4$\cite{LekkeikerkerB62}), thus these graphs are included into bipartite chain graphs (by Item~\ref{item:bip-chain-forbidden} of Definition~\ref{def:graph-classes}).

Now consider a bipartite chain graph, and the notations of Item~\ref{item:bip-chain-neighborhood} of Definition~\ref{def:graph-classes}.
The vertex ordering $a_1, \dots,  a_{|A|}, b_1, \dots,  b_{|B|}$  avoids the three patterns. 
\item[\family{0,3,5}{29}{Complete bipartite graphs}]

The class is contained into the class of $[0,5]_{26}$, that is in the class of bipartite permutation graphs. 
Also because of Pattern~3, we know that there is at most one connected component that is not an isolated node (otherwise the right-most vertex, and two vertices of another non-trivial component would make it appear).  
If the non-trivial component is not a complete bipartite graph, then the following situation should appear: two nodes $a,b$ on one side, two nodes $c,d$, on the other side, $(b,c)$ \emph{not} being an edge (because the bipartite graph is not complete), $(a,c)$ and $(b,d)$ being edges (to insure the connectivity), and $(a,d)$ being arbitrary. 
Now note that because of Patterns~3 and~5, if a node has two non-neighbors that are linked by an edge, then this node should appear before the two other nodes.  
This means that $b$ is before $a$ and $c$, and $c$ is before $b$ and $d$ which is a contradiction. Thus the completeness of the bipartite graph.

The ordering with isolated nodes and then one class after the other avoids the three patterns.
\item[\family{0,3,5,6}{30}{Complete bipartite graphs (without isolated nodes)}]
As a subclass of case $[0,3,5]_{29}$, it is included into complete bipartite graphs, and the patterns~3, 5 and~6, forbid isolated nodes respectively, on the right, middle and left of an edge. For the other direction, the same ordering as for Pattern~6 works (without the isolated nodes).

\item[\family{12}{31}{Bipartite graphs}] See Theorem~\ref{thm:Damaschke}.
\item[\family{7,12}{32}{Trivial class}]
co-\ptriangle{} \& \pbipartite{} defines a trivial class, since such a graph can have at most 4 vertices.
\item[\family{5,12}{33}{Bipartite permutation graphs}]
By the pattern split rule, this class is equivalent to the one of $[0,2,5]$. 
Remember  that $[2,5]_2$ characterizes permutation graphs (by Item~\ref{item:comp-co-comp} of Theorem~\ref{thm:complement}) and that Pattern~12 characterizes bipartite graphs (Theorem~\ref{thm:Damaschke}), thus the class is included into bipartite permutation graphs.
For the other inclusion, the ordering given for Item~\ref{item:comp-co-comp} of Theorem~\ref{thm:complement} avoids Pattern~2 and~5, and it also avoids Pattern~0, as the graph is bipartite.  
\item[\family{12,15}{34}{Trivial class}] See Theorem~\ref{thm:complement}.
\item[\family{3,12}{35}{Bipartite chain graphs}]
The class contains only bipartite graphs because of Pattern~12. A graph of the class cannot contain a $2K_2$, as the right-most node of a $2K_2$ in the ordering would form Pattern~3 with the nodes of the other edge. Thus the class is included in bipartite chain graph by Item~\ref{item:bip-chain-forbidden} of Definition~\ref{def:graph-classes}.

As for $[0,3,6]_{28}$, the vertex ordering with first the isolated nodes and then $a_1, \dots,  a_{|A|}$, $ b_1, \dots,  b_{|B|}$ (with the notations of Item~\ref{item:bip-chain-neighborhood} of Definition~\ref{def:graph-classes}) avoids the patterns .
\item[\family{3,6,12}{36}{Bipartite chain graphs}]

Because of case $[3,12]_{35}$, this class is included in bipartite chain graphs. Also the ordering $a_1, \dots,  a_{|A|}$, the isolated nodes and then $b_1, \dots,  b_{|B|}$ (with the same notation as above) works. 
\item[\family{3,5,12}{37}{Complete bipartite graphs}]
As Pattern~0 extends Pattern~12, the class is included into the class of $[0,3,5]_{29}$, that is in the complete bipartite graphs.
The ordering with isolated nodes and then one class after the other avoids the three patterns.
\item[\family{3,5,6,12}{38}{Bipartite chain graphs (without isolated nodes)}]
Again the inclusion follows from $[3,12]_{35}$, and from the fact that Patterns~3, 5 and~6 forbid isolated nodes. The ordering $a_1, \dots,  a_{|A|}, b_1, \dots,  b_{|B|}$ avoids the three patterns
\item[\family{16}{39}{Forest}] See Theorem~\ref{thm:Damaschke} (and mirror property). 
\item[\family{7,16}{40}{Trivial}]
This class is included into the class of forests with maximum independent set of size at most two. 
Such trees have degree at most 2 and diameter at most~4. 
There are only a finite number of such trees.
\item[\family{6,16}{41}{Stars}]
Pattern~16 implies that the graph of the class are forests. Also one can check on Figure~\ref{fig:P4} that no ordering of $P_4$ avoids both patterns, thus the maximum diameter of the trees of the forest is at most 2. Finally Pattern~6 implies that there cannot be several non-trivial components. 
Thus the class is included into the stars. 
Any ordering of a star with first the leaves, then the center, and then the isolated nodes avoids both patterns.
\item[\family{16,19}{42}{Trivial class}] See Theorem~\ref{thm:complement}, and the mirror property.
\item[\family{5,16}{43}{Caterpillars}]
This class is included into forests $\cap$ co-comparability. As any connected component of a cocomparability graphs admit a dominating path \cite{CorneilOS97}, the class is included in the caterpillars. 
For the other inclusion, first note that using the mirror property, the two families [5,16] and [5,8] define the same graph class. Then it is known that for any caterpillar, two BFS provide an ordering avoiding Patterns~5 and~8 \cite{CharbitHMN17}.	
\item[\family{15,16}{44}{Trivial}]
The class is included into co-bipartite forests.
We recall that a graph is co-bipartite if its the vertices can be partitioned into two cliques, with some edges in between. Since we have to intersect with forests, the two cliques have at most two vertices. Therefore each connected component has at most four vertices.
\item[\family{5,6,16}{45}{Stars}]
Using the family $[6,16]_{41}$, this class is included in the stars. And the vertex ordering used for $[6,16]_{41}$ also avoids Pattern~5.
\item[\family{4,16}{46}{Linear forests}]
From Pattern~16, this class is included into forests. 
Now suppose that the graph contains a star with 3 leaves and a center $x$. Pattern~16 forbids $x$ to have 2 leaves on its right and Pattern~4 forbids $x$ to have 2 leaves on its left. Since in any ordering $x$ has at least 2 leaves either on its left or its right, such stars with three leaves cannot appear in graphs in the class.
 Therefore each connected component is a path. Hence this class corresponds to linear forests
\item[\family{4,7,16}{47}{Trivial}]
As a subclass of $[7,16]_{40}$.
\item[\family{22}{48}{Linear forests}] See Theorem~\ref{thm:Damaschke}. 
\item[\family{7,22}{49}{Trivial}]
This class is included in paths with maximum independent set of size at most 2, that is paths of length at most 4.
\item[\family{3,16}{50}{2-Stars}]
This class is included into co-chordal $\cap$ forests. 
As a $P_5$ has a $C_5$ in its complement, we know that co-chordal graphs graphs have diameter at most 3. 
Thus the class is included into the 2-stars. 
Now, let $x$, $y$ be the two centers of a 2-star, and $x_1, ...x_p$ be the other neighbors of $x$, and $y_1, ..., y_q$ be the other neighbors of $y$.
The vertex ordering with first the isolated nodes, and then $x_1, ...x_p,  y_1, ..., y_q, x, y$ avoids the two patterns.
\item[\family{11,16}{51}{Trivial}] 
This class is included in forest $\cap$ co-forest  which is trivial as we already argued in the proof of Theorem~\ref{thm:complement}.
\item[\family{3,6,16}{52}{Stars}]
This class is included in the one of $[6,16]_{41}$, which is the stars. 
For the other inclusion, an ordering first the leaves, then the isolated nodes, and then the center avoids the three patterns.
\item[\family{3,5,16}{53}{Stars}]
Using Figure~\ref{fig:P4}, no ordering of a $P_4$ avoids the three patterns. We are left with trees of diameter at most 2, that is, stars, and the patterns allow only one non-trivial component. The ordering with first the isolated nodes, then the leaves then the center avoids the patterns.  
\item[\family{3,5,6,16}{54}{Stars (without isolated nodes)}]
The inclusion into stars follows from the cases $[3,6,16]_{52}$, and the isolated nodes are forbidden by Patterns~3, 5 and~6. The ordering with the leaves first avoid the patterns.
\item[\family{3,4,16}{55}{Trivial}]
As a subclass of case $[3,16]_{50}$, this class is included into 2-stars, and as a subclass of $[4,16]_{46}$ it is also in linear forests.
Then only a star with less than 2 leaves satisfies the condition. Therefore the class  is trivial.
\item[\family{4,11,16}{56}{Trivial}]
As a subclass of case $[11,16]_{51}$.
\item[\family{3,14,16}{57}{Trivial}]
A graph which is a forest and split graph is necessarily a 2-star. But as can be seen in  Figure~\ref{fig:P4} there is no ordering of a $P_4$ avoiding the three patterns. Therefore only stars are allowed. If we suppose that the graph contains a star with 3 leaves $a_1, a_2, a_3$ and a center $x$. Pattern~16 forbids $x$ to have 2 $a_i's$ in its right. Pattern~14 forbids $x$ to have 2 $a_i's$ in its left. Since in any ordering $x$ has at least 2 $a_i's$ in its left or its right, therefore there is no ordering avoiding the three patterns. So only a star with less than 2 leaves satisfies the condition. Therefore the class  is trivial.

\item[\family{11,14,16}{58}{Trivial}]
As a subclass of case $[11,16]_{51}$.
\item[\family{3,22}{59}{Trivial}]
This class is included in co-chordal paths. 
But a path with 5 vertices has a 4-cycle in its complement. Therefore only $P_i$ with $i\leq 4$  are allowed. 
\item[\family{3,6,22}{60}{Trivial}]
As a subclass of case $[3,22]_{59}$.
\item[\family{2,16}{61}{Stars}]
Checking with Figure~\ref{fig:P4}, the diameter has to be bounded by two, and trees of diameter 2 are stars. 
 The ordering with the leaves, then the centre and then the isolated nodes avoids both patterns.
\item[\family{2,7,16}{62}{Trivial}]
As subclass of case $[7,16]_{40}$.
\item[\family{10,16}{63}{Stars}]
As Pattern~2 extends Pattern~10, and as $[2,16]_{61}$ characterizes the stars, this classes is also included in stars. The same ordering as for $[2,16]_{61}$ works. 
\item[\family{7,10,16}{64}{Trivial}]
As subclass of case $[7,16]_{40}$.
\item[\family{2,5,16}{65}{Stars}]
The inclusion and ordering follow from $[2,16]_{61}$.
\item[\family{2,15,16}{66}{Trivial}]
The class is included in stars because of $[2,16]_{61}$. Suppose the star has three non-center nodes, then for any ordering, Pattern~15 appears. Thus there the star has degree at most two.
\item[\family{5,10,16}{67}{Stars}]
This class is included into stars using case $[10,16]_{63}$, and the same ordering as for $[2,16]_{61}$ works.
\item[\family{10,15,16}{68}{Trivial}]
As a subclass of case $[15,16]_{44}$.
\item[\family{2,4,16}{69}{Trivial}]
The class is included in stars because of $[2,16]_{61}$, but the patterns forbid a $P_3$, thus only one edge is possible.
\item[\family{2,4,7,16}{70}{Trivial}]
As subclass of case $[7,16]_{40}$.
\item[\family{2,22}{71}{Trivial}]
The class only contains paths because of Pattern~22, but no ordering of a $P_3$ can avoid both patterns, thus there is at most one edge.
\item[\family{2,7,22}{72}{Trivial}]
As subclass of case $[2,22]_{71}$.
\item[\family{24}{73}{Stars}] 
See Theorem~\ref{thm:Damaschke}.
\item[\family{7,24}{74}{Trivial}]
Because Pattern~16 extends Pattern~24, and $[7,16]_{40}$ characterizes a trivial class.
\item[\family{6,24}{75}{Stars}]
Included in stars, because of Pattern~24.
The ordering with the leaves, then the isolated nodes and then the center avoids the patterns.
\item[\family{19,24}{76}{Trivial}]
Pattern~16 extends Pattern~24, and $[16,19]_{42}$ defines a trivial class.
\item[\family{5,24}{77}{Stars}]
Included in stars, because of Pattern~24.
The ordering with the isolated nodes, the leaves and the center avoids both patterns.
\item[\family{15,24}{78}{Trivial}]
Pattern~7 extends Pattern~15, and $[7,24]_{74}$ defines a trivial class.
\item[\family{5,6,24}{79}{Stars (without isolated nodes)}]
The inclusion follows from cases $[6,24]_{75}$ and from the fact that the patterns forbid isolated nodes.
Having the center at the end is enough to avoid the patterns.

\item[\family{4,24}{80}{Trivial}]
Since Pattern~24 forces that only the last node can be the right-hand of an edge, and then Pattern~4 prevents this node to have more than one neighbor.
\item[\family{4,7,24}{81}{Trivial}]
As Pattern~16 extends Pattern~24, and the class of $[7,16]_{40}$ is trivial.
\item[\family{14,24}{82}{Trivial}]
Pattern~4 extends Pattern~14, and the class of $[4,24]_{80}$ is trivial. 
\item[\family{7,14,24}{83}{Trivial}]
As a subclass of case $[14,24]_{82}$.
\item[\family{18,24}{84}{Trivial}]
Pattern~4 extends Pattern~18, and the class of $[4,24]_{80}$ is trivial.
\item[\family{7,18,24}{85}{Trivial}]
As a subclass of case of $[18,24]_{84}$.
\item[\family{6,18,24}{86}{Trivial}]
As a subclass of case of $[18,24]_{84}$.
\item[\family{26}{87}{Trivial}]
See Theorem~\ref{thm:Damaschke}.
\end{description}

\subsection{Corollary of the proof}

From the proof we can extract the following results.

\begin{Coro}The following characterizations hold:
\begin{enumerate}
\item
Caterpillars can be defined via $\pforest$ \& co-$\pcomparability$. 
\item
Bipartite chain graphs via co-$\pchordal$ \& $\pbipartite$.
\item
Bipartite permutation graphs via $\ptriangle$ \& co-$\pcomparability$\\
or via $\pbipartite$ \& co-$\pcomparability$.
\item
Trivially perfect graphs via $\pchordal$ \& $\pcomparability$\\
or via $\pinterval$ \& $\pcomparability$.
\item
Threshold graphs via $\pchordal$ \& co-$\pinterval$\\
or via $\pcomparability$ \& $\psplit$
\end{enumerate}
\end{Coro}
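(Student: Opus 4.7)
The plan is to derive each of the five items as a bookkeeping consequence of the case analysis of Subsection~\ref{subsec:big-proof}, combined with the mirror invariance of Item~\ref{item:mirror} of Property~\ref{prop:basic}: for every claimed pair of patterns I will locate an entry of the enumeration that characterises the stated class, and whenever that entry is written using mirror variants, I will apply the mirror operation to the whole family to obtain the form used in the corollary. No new graph-theoretic argument is required, since the hereditary class on the right-hand side of each claim has already been identified in the enumeration.

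Three items come for free and require no mirroring: bipartite chain graphs are entry $[3,12]_{35}$; the first formulation of bipartite permutation graphs is entry $[0,5]_{26}$ and the second is entry $[5,12]_{33}$; and the second characterisation of threshold graphs is literally entry $[2,13]_{17}$.

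The remaining four characterisations follow from a single application of the mirror property. For caterpillars, entry $[5,16]_{43}$ pairs the self-mirror pattern co-\pcomparability{} with mirror-\pforest{}, and mirroring the family replaces the latter by \pforest{}. For the first formulation of trivially perfect graphs, entry $[1,2]_{11}$ pairs mirror-\pchordal{} with the self-mirror \pcomparability{}, so the mirror family is \{\pchordal{}, \pcomparability{}\}; entry $[2,9]_{13}$ is processed identically to give \{\pcomparability{}, \pinterval{}\}. Finally, for the first characterisation of threshold graphs, entry $[1,10]_{12}$ has joint mirror \{\pchordal{}, co-\pinterval{}\}.

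The one thing to watch is that Item~\ref{item:mirror} of Property~\ref{prop:basic} applies to the whole family rather than to one pattern at a time, so in each case I must check that mirroring both patterns simultaneously produces exactly the pair stated in the corollary. Among the patterns involved, \pcomparability{}, co-\pcomparability{} and \pbipartite{} are self-mirror, while \pchordal{}/mirror-\pchordal{}, \pinterval{}/mirror-\pinterval{}, co-\pinterval{}/mirror-co-\pinterval{} and \pforest{}/mirror-\pforest{} form mirror pairs; with this dictionary in hand the translation is immediate and mechanical, and there is no substantive obstacle to overcome.
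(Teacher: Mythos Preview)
Your proposal is correct and follows the same approach as the paper: the paper's proof simply lists the relevant case numbers $[5,16]_{43}$, $[3,12]_{35}$, $[0,5]_{26}$, $[5,12]_{33}$, $[1,2]_{11}$, $[2,9]_{13}$, $[1,10]_{12}$, $[2,13]_{17}$ without further comment. You have in fact been more careful than the paper by making explicit which entries require an application of the mirror property (Item~\ref{item:mirror} of Property~\ref{prop:basic}) and checking that the self-mirror patterns line up correctly, which the paper leaves entirely to the reader.
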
 

These results follow respectively from the cases
$[5,16]_{43}$, 
$[13,12]_{35}$, 
$[0,5]_{26}$,
$[5,12]_{33}$,
$[1,2]_{11}$, 
$[2,9]_{13}$,
$[1,10]_{12}$,
$[2,13]_{17}$
of the proof above.

\section{Algorithmic aspects}
\label{sec:algorithms}

We now consider the algorithmic implications of Theorem~\ref{thm:characterization}.
From a complexity point of view, we first note that for any pattern family (on any number of nodes), given the ordering it is possible to check in polynomial time if all the patterns are avoided. 
This means that the ordering is a polynomial certificate that a graph is in a class characterized by patterns, and thus the recognition of any such class is a problem in~NP.

For the case of patterns on three nodes, we first show that graph searches are especially well-suited for recognition of the classes, and then give a more exhaustive result about the complexity of the problem.
 
\subsection{Recognition via graph searches}

The ordering in which the nodes are visited during various graph searches has been studied in the literature, and is characterized using the so-called 4 points conditions on 3 vertices~\cite{CorneilK08}.
Such searches are essential tools for recognition of well-structured graph classes.
Therefore it is interesting to see how far we can go with graph searches to recognize the classes listed in Theorem~\ref{thm:characterization}. 
To our knowledge here is the state of the art in this direction.

\begin{Theo}\label{thm:graph-searches}
Generic search and BFS, DFS, LBFS, MNS and  Maximal Degree Search (MDS) can be used to obtain very simple algorithms for the recognition of the graph classes cited in Theorem~\ref{thm:characterization}.
\end{Theo}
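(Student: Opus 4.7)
The plan is to exhibit, for each of the $22$ graph classes listed in Theorem~\ref{thm:characterization}, a concrete graph search (or a short sequence of searches) whose output ordering provably avoids the associated forbidden patterns. The main technical tool will be the $4$-point characterization of graph searches from~\cite{CorneilK08}: each of BFS, DFS, LBFS, MNS and MDS is uniquely determined by a local condition on any three vertices $a<_\tau b<_\tau c$, and these local conditions have exactly the same ``shape'' as our forbidden patterns on three nodes, so they can be compared directly.

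First I would organize the classes by the search most naturally suited to them. For chordal graphs, the reverse of an LBFS ordering is a simplicial elimination ordering, hence avoids \pchordal{}~\cite{TarjanY84}. For forests, Theorem~\ref{thm:Damaschke} already shows that \emph{any} generic search produces an ordering avoiding \pforest{}, so BFS, DFS, LBFS, MNS and MDS all work. For bipartite graphs, any BFS layering avoids \pbipartite{}. For interval and proper interval graphs, the known multi-sweep LBFS algorithms~\cite{CorneilOS09, Corneil04} produce orderings avoiding \pinterval{} (respectively, both \pinterval{} and its mirror). For threshold and split graphs, ordering vertices by non-increasing degree (the output of MDS) suffices, thanks to the degree-sequence characterizations of~\cite{ChvatalH77}. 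The remaining classes (caterpillars, 2-stars, 1-split, augmented cliques, bipartite chain, trivially perfect, bipartite permutation, etc.) are all subclasses of classes already treated; the search computed for the super-class, possibly after restricting to the non-trivial connected component and listing isolated nodes first, will already avoid the additional patterns.

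The main obstacle is not any individual class but the systematic verification: for every class one must check, pattern by pattern, that the local rule enforced by the chosen search precludes the forbidden configuration. For the ``simple'' cases (forests, bipartite, chordal, interval, comparability) this is either explicit in the literature or a direct application of the $4$-point condition. The harder cases are those defined by two or more patterns, such as threshold graphs, trivially perfect graphs, and bipartite chain graphs, where a single ordering must simultaneously avoid several patterns. For these I would rely on the alternative constructive characterizations collected in Definition~\ref{def:graph-classes} (the incremental constructions of threshold and trivially perfect graphs, the neighborhood-inclusion ordering of bipartite chain graphs, and so on) and show that these hand-built orderings coincide, up to reversal, with the output of an appropriate LBFS, MNS or MDS. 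Concretely, I would present the proof as a compact table listing, for each of the $22$ classes, (i) the chosen search, (ii) a one-line description of the derived ordering, and (iii) a short justification that each forbidden pattern is avoided, in the style of Theorem~\ref{thm:Damaschke}; this keeps the case analysis manageable while covering all classes.
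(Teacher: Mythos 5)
Your proposal follows essentially the same strategy as the paper: a class-by-class catalogue assigning to each class a search (generic search for forests, multi-sweep LBFS for proper interval, interval and cocomparability, MDS/degree arguments for threshold, BFS layers for bipartite, etc.) and invoking the known literature for the multi-sweep results. Two details would fail as literally stated, though. First, ``any BFS layering avoids \pbipartite{}'' is false for the raw BFS order: on the path $a$--$b$--$c$ with BFS from $a$ the order is $a,b,c$ and $b$ has a neighbour on each side, which is exactly the pattern \pbipartite{}. The paper's fix is to output the BFS layers rearranged as $x_0, L_2, L_4,\dots$ followed by $L_1, L_3,\dots$, so that every vertex has all its neighbours on one side; your write-up needs this rearrangement. (A similar orientation issue occurs for split graphs: a \emph{non-increasing} degree order puts the clique first and realizes \psplit{} as soon as some clique vertex has a non-neighbour in the independent set; you want the non-decreasing order, or equivalently to reverse and use the mirror pattern.)

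Second, the blanket claim that for the remaining classes ``the search computed for the super-class \dots will already avoid the additional patterns'' is not automatic and is precisely where the work lies: an ordering avoiding one pattern of a two-pattern family need not avoid the other. The paper does not get these for free either --- it invokes a dedicated result that a \emph{single} LBFS certifies trivially perfect graphs, a specific two-BFS argument for caterpillars and $2$-stars, and the explicit MDS ordering of Corollary~\ref{coro:order-threshold} for threshold graphs. Your fallback of matching the hand-built orderings of Definition~\ref{def:graph-classes} to a search output is the right idea and is what the paper effectively does, but each such case requires its own verification rather than an appeal to the super-class.
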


\begin{proof}
\begin{description}
\item[Chordal.] If  $G$ is a chordal graph, any  LBFS provides a simplicial elimination ordering, i.e. an ordering avoiding the characteristic pattern  of chordal graphs \cite{RoseTL76}.
\item \textbf{Trivially perfect graphs.} As shown in \cite{Chu08} one LBFS is enough to recognize and certify trivially perfect graphs. Furthermore it produces  a characteristic vertex ordering.
 \item \textbf{Linear forest.} For each connected component just apply a BFS starting in a vertex $x$ and ending at $y$, then apply a second BFS starting at $y$.
If $G$ is a path, clearly the second BFS ordering avoids the forbidden pattern.
\item \textbf{Forest.} We have already observed in Theorem \ref{thm:Damaschke}, that if $G$ is a tree any generic search ordering provides an ordering avoiding the characteristic pattern.
\item \textbf{Star.} Perform a BFS starting at a vertex $x$ and ending at $y$.  Let $x_0$ the unique neighbour of $y$ (if $y$ has more neighbours then $G$ is not a star).
If $G$ is a star, any ordering of the vertices finishing with $x_0$ avoids the forbidden pattern.
\item
\textbf{Caterpillars.} Let us now consider a caterpillar $C$, using a result in \cite{CharbitHMN17} 2 consecutives BFS, the second one starting at the end vertex of the first one, provide on a caterpillar an ordering  avoiding  the forbidden patterns.  But they also provide a  diametral path, since it has been proved that for trees it yields a diametral path \cite{Handler73}. 
\item \textbf{2-star.} Same as for caterpillars with an extra test on the length of the diametral path that must be exactly 3.
\item \textbf{Bipartite.} Let us apply any Layered search on $G$, for example a BFS. Starting from $x_0$ the layers are $L_1, \dots L_k$.
Consider the ordering $\tau= x_0, L_2, \dots  L_{2p}, L_1, \dots L_{2p+1}$  when $k=2p+1$.
Clearly if $G$ is bipartite $\tau$ avoids the forbidden pattern.
\item \textbf{Proper interval.} If $G$ is a proper interval graph, a series of 3 consecutive LBFS$^+$  always produces an ordering avoiding the characteristic first 2 patterns of Theorem  \ref{thm:Damaschke} of proper interval graph \cite{Corneil04}.
\item \textbf{Interval.} If $G$ is an interval graph,  a series of 5 consecutive LBFS$^+$ followed by a special LBFS$^*$  always produces and ordering avoiding the characteristic pattern of interval graphs \cite{CorneilOS09}. It should also be noticed that a similar result can be obtained using another graph search, namely Maximal Neighbourhood Search, see \cite{LiW14}.
\item \textbf{Split.} Compute a maximal clique tree via a LBFS on $G$ (as explained in \cite{GalinierHP95}) and identify the clique of maximal size $C$. Do the same procedure on $\overline{G}$ (without building $\overline{G}$  just using $G$) and identify $I$ the clique of maximum size in $\overline{G}$. If $C$ and $I$ partition $V(G)$ then $G$ is a split graph. So with 2 LBFS we can construct an ordering that avoids the forbidden pattern.
\item \textbf{Threshold graphs.} A maximal degree search (MDS) which is a generic graph search that starts at a vertex of maximal degree and then break ties with the degrees, i.e., at each step the selected vertex is eligible and has a maximum degree in the remaining graph. 
If the graph $G$ is a threshold graph, this search generates an ordering $\tau$ of the vertices as described in Corollary \ref{coro:order-threshold}. Let $x_0$ the first vertex in $\tau$. To check if $G$ is a threshold graph it suffices to check if $N(x_0)$ is a clique and $\overline{N(x_0)}$ an independent set and that the neighbourhoods $\overline{N(x_0)}$ of are totally ordered with respect to their $\tau$ ordering. All of this can be done in linear time.
\item \textbf{Triangle-free.} Any ordering works, so we can use any graph search.
\item
\textbf{Triangle-free $\cap$ co-chordal.} A LBFS applied in the complement of the graph will provide an ordering of the vertices avoiding the 2 patterns.
\item
\textbf{Co-comparability.} If $G$ is a co-comparability  graph, a series of  $n$ consecutive LBFS$^+$ always produces an ordering avoiding the characteristic pattern  of  cocomparability  graphs \cite{DusartH17}.
\item \textbf{Permutation.}  In \cite{CorneilDHK16} a permutation recognition algorithm is presented which works as follows. First compute a cocomp ordering $\tau$ for $G$ and a cocomp ordering $\sigma$ for $\overline{G}$. Then transitively orient $G$ (resp. $\overline{G}$) using $\sigma $(resp. using $\tau$).
Then use a depth first search to compute the two orderings that represent the permutation graph, both of them avoids the patterns.
Using the above result for cocomparability graphs, thus in this case an ordering avoiding the patterns can be obtained via $2n+2$ consecutive graph searches.
\item
\textbf{Permutation Bipartite graphs.} Add to the recognition of permutation graphs a BFS to check if the graph is bipartite. 
\end{description}
\end{proof}

Therefore for each of the classes of Theorem~\ref{thm:characterization}, we can produce the characteristic ordering via a series of graph searches. Then using a brute force algorithm we can check if this ordering avoids the patterns in $O(n^3)$. Within this complexity we can also recognize complement classes.

\subsection{Complexity results}
As a consequence using Theorems~\ref{thm:characterization} and~\ref{thm:graph-searches}
we can recover the result of~\cite{HellMR14}.

\begin{Coro}(\cite{HellMR14})
All classes defined with sets of patterns on 3 vertices can be recognized in  $O(n^3)$.
\end{Coro}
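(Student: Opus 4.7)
The plan is to string together Theorem~\ref{thm:characterization} and Theorem~\ref{thm:graph-searches} with a generic brute-force verification step. First I would reduce to the non-trivial classes listed in Theorem~\ref{thm:characterization}: a trivial class is recognized in linear time by checking that every connected component belongs to a finite list of graphs of bounded size, and the basic operations (adding isolated vertices, restricting to a connected component) and complementation preserve the $O(n^3)$ bound, since $\overline{G}$ can be built explicitly in $O(n^2)$ time and then treated as the input graph. So it suffices to handle each of the $22$ named non-trivial classes of the theorem.

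For each such class $\CC_\FF$, Theorem~\ref{thm:graph-searches} gives a procedure consisting of a constant number of graph searches (BFS, LBFS, LBFS$^+$, MNS, MDS, or a short combination thereof), each of which runs in $O(n+m) \subseteq O(n^2)$, that outputs a candidate vertex ordering $\tau$. The key property is that whenever the input $G$ actually lies in $\CC_\FF$, this $\tau$ is guaranteed to avoid every pattern in $\FF$. This is exactly the content of Theorem~\ref{thm:graph-searches}, and I would simply invoke it as a black box.

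It then remains to verify the candidate. Given $G$ and the ordering $\tau$, I would enumerate all ordered triples $a <_\tau b <_\tau c$ and, for each such triple and each pattern $P \in \FF$, check in $O(1)$ whether the induced trigraph of $\{a,b,c\}$ is a realization of $P$. If any triple realizes some $P \in \FF$, we reject; otherwise we accept. Since $|\FF|$ is bounded by the constant $27$ and the number of triples is $\binom{n}{3} = O(n^3)$, this check runs in $O(n^3)$ time. Correctness follows from the definition of $\CC_\FF$: if $G \in \CC_\FF$, then by Theorem~\ref{thm:graph-searches} the search-generated $\tau$ avoids $\FF$ and we accept; conversely, if the check accepts, $\tau$ is a witnessing ordering and $G \in \CC_\FF$.

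The only potentially delicate point is ensuring that the case analysis is truly exhaustive — that Theorem~\ref{thm:characterization} leaves no class unaccounted for and that Theorem~\ref{thm:graph-searches} covers each of the $22$ entries. Both are already proved in the paper, so the obstacle disappears and the corollary follows by the three-step template above (reduce to a listed class via complement and basic operations, produce $\tau$ by the prescribed graph searches in $O(n^2)$, verify $\tau$ against $\FF$ in $O(n^3)$).
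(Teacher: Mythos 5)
Your proposal is correct and follows essentially the same route as the paper: invoke Theorem~\ref{thm:characterization} to reduce to the listed classes (handling trivial classes, complements and basic operations separately), produce a candidate ordering via the graph searches of Theorem~\ref{thm:graph-searches}, and verify it by brute force over all $O(n^3)$ triples. One small inaccuracy: for cocomparability and permutation graphs the paper needs $n$ (resp.\ $2n+2$) consecutive LBFS$^+$ sweeps rather than a constant number, so producing $\tau$ costs $O(n(n+m))$ rather than $O(n^2)$ there, but this still fits within the $O(n^3)$ bound.
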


\noindent Theorem~\ref{thm:characterization} actually allows us to get the following more fine-grain result.

\begin{Theo}\label{thm:linear}
All classes defined with sets of patterns on 3 vertices and their complements can be recognized in linear time, except triangle-free and comparability graphs.
\end{Theo}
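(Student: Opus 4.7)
}

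The plan is to go through the 22 non-trivial classes listed in Theorem~\ref{thm:characterization} one by one, and to argue that for each such class $\CC$ different from triangle-free and comparability, both $\CC$ and its complement class $\overline{\CC}$ admit a linear-time recognition algorithm in the size $n+m$ of the input graph $G$. Since the classes are hereditary and most come with very well-studied linear-time algorithms in the literature, the overall proof is a (long) collection of pointers rather than a single uniform construction. The structure will mirror the organization of Theorem~\ref{thm:characterization}: first the classical classes (forests, linear forests, stars, interval, split, bipartite, chordal, permutation, threshold, proper interval, caterpillar, trivially perfect, bipartite chain, bipartite permutation), then the auxiliary ones (2-star, 1-split, augmented clique, clique, complete bipartite, triangle-free $\cap$ co-chordal).

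For the classical classes, I will invoke the existing linear-time recognition algorithms: Booth--Lueker or Habib--McConnell--Paul--Viennot for interval, Hammer--Simeone for split, BFS for bipartite, LexBFS-based algorithms (Rose--Tarjan--Lueker) for chordal, the multi-sweep LexBFS algorithm of \cite{Corneil04} for proper interval and \cite{CorneilOS09} for interval, McConnell--Spinrad for permutation, and the corresponding linear-time recognitions for threshold, bipartite chain, trivially perfect, caterpillar, and bipartite permutation graphs. For the auxiliary classes (1-split, augmented clique, 2-star, clique, complete bipartite, triangle-free $\cap$ co-chordal), linear-time recognition is either immediate by the structural characterizations given in Section~\ref{sec:definitions}, or follows from the single- or double-search procedures used in Theorem~\ref{thm:graph-searches} together with a direct $O(n+m)$ check of the output ordering against the finite forbidden pattern family.

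For complement classes, the main subtlety is that explicitly constructing $\overline{G}$ may take $\Theta(n^2)$ time, which is prohibitive when $m = o(n^2)$. Thus for each $\CC$ I will argue either that a linear-time algorithm for $\overline{\CC}$ is already known (this is the case for co-chordal via co-LexBFS, co-interval, co-comparability, co-bipartite chain, etc., all available in the literature), or that the characterization by forbidden patterns on $G$ itself (obtained by complementing the family as in Item~4 of Property~\ref{prop:basic}) yields a linear-time recognition by the same graph-search approach used for~$\CC$. In particular, for triangle-free $\cap$ co-chordal we can apply LexBFS on $G$ in a way that simulates the co-LexBFS ordering without materializing~$\overline{G}$, as is standard in the co-chordal recognition literature. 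Finally, for each produced ordering, verifying that all patterns of the (finite) family are avoided amounts to a constant number of constant-arity local checks per vertex, which can be implemented in $O(n+m)$ total time once we have direct access to the neighborhood structure, and symmetrically in $O(n + \overline{m})$ for patterns involving many non-edges — but in those cases the class characterizations above already bound the size of the graph or its complement.

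The hardest part of the proof will be verifying the verification step in linear time uniformly across all 22 classes: pattern-avoidance checks on three vertices naively cost $O(n^3)$, and reducing each to $O(n+m)$ requires exploiting the specific structure of the candidate ordering (e.g.\ that it comes from a LexBFS, or that the graph already has bounded structure like being bipartite chain). The two exceptions, triangle-free and comparability, are unavoidable with current techniques: triangle-free recognition is equivalent to triangle detection, for which no $O(n+m)$ algorithm is known, and comparability graph recognition is similarly conjectured not to be achievable in linear time (the best bounds involve matrix multiplication or $O(n \cdot m)$ algorithms). This explains why these two classes must be excluded from the statement.
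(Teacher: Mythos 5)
Your overall strategy is the same as the paper's: enumerate the classes of Theorem~\ref{thm:characterization}, cite the known linear-time recognition algorithms for the classical ones, handle complements via partition refinement rather than building $\overline{G}$, and dispose of the bounded-structure auxiliary classes directly. That part is fine, and your closing explanation of why triangle-free and comparability must be excluded matches the paper's.

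There is, however, one concrete gap: the class triangle-free $\cap$ co-chordal. You file it under the auxiliary classes and claim it is handled either ``immediately by the structural characterizations'' or by a graph search ``together with a direct $O(n+m)$ check of the output ordering against the finite forbidden pattern family.'' Neither branch works. The family here is $\{\ptriangle{}, \co\pchordal{}\}$, and checking that \emph{any} ordering avoids the \ptriangle{} pattern is exactly triangle detection, which is precisely the problem you (correctly) declare not to be solvable in $O(n+m)$ at the end of your proposal; your fallback remark that ``the class characterizations above already bound the size of the graph or its complement'' also fails, since this class contains all complete bipartite graphs and is therefore unbounded in both $m$ and $\overline{m}$. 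The missing idea, which the paper supplies, is structural rather than pattern-based: work with the complement class (chordal $\cap$ co-triangle-free), first test chordality in linear time, and then observe that a graph is co-triangle-free iff its maximum independent set has size at most $2$ --- and a maximum independent set of a \emph{chordal} graph can be computed in linear time \cite{RoseTL76}. In other words, the triangle/co-triangle condition becomes tractable only after the chordality of the graph (or of its complement) has been established; without that reduction your recipe leaves this one class unproved. A secondary remark: for the other classes you do not actually need the ``verification of the ordering'' step you flag as the hardest part --- the cited recognition algorithms are already complete yes/no deciders, so insisting on re-checking the output ordering against the patterns is what artificially creates the difficulty in the first place.
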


\begin{proof}
Of course some of the classes of graphs in Theorem \ref{thm:characterization} such as stars, 2-stars, cliques, 1-splits, augmented cliques and complete bipartite graphs are quasi-trivial and therefore they are linear-time recognizable.

As detailed  in Theorem \ref{thm:graph-searches} the "classic" classes of graphs such as bipartite graphs, forests, linear forests, caterpillars, chordal graphs, interval graphs, proper interval graphs,  split graphs, permutation graphs can be recognized in linear time via a graph search (for example LexBFS). Therefore their complement classes can also be recognized in linear time, using the usual technique of partition refinement. 
Similar results also hold for trivially perfect graphs graphs \cite{Chu08} or threshold graphs and  bipartite chain graphs and their complement  \cite{HeggernesK07}.

The co-triangle-free $\cap$ chordal graph class can also be recognizable in linear time in the following way. First check if the graph is chordal and then compute its maximum independent set and check whether it has strictly more than 2 vertices. Both operations can be done in linear time see \cite{RoseTL76}. Similarly for their complement.

Their exists a linear time for the recognition of permutation graphs \cite{McConnellS99}. This algorithm produces two comparability orderings for the graph itself and its complement. It also gives a permutation representation of the graph which can be also tested in linear time. Then it suffices to check for the bipartitness, which can also be done in linear time. Similarly co-bipartitness can be checked in linear time and therefore complement can be also recognized  in linear time.
\end{proof}

Let us now discuss what is known about the complexity of the recognition of the two  remaining classes and their complement. For these classes the problem is not the computation of a good ordering but its certification. Up to our knowledge, no graph search can produce an ordering and its certification in linear time.

Let $\omega$ be the best exponent of the complexity of an algorithm for $n\times n$ boolean matrix multiplication. Using algorithm in \cite{Williams12} $\omega= 2,3727$.

First, triangle-free graphs (resp. co-triangle-free) can be recognized in $O(m^{1.41})$ (resp. $O(m^{1.186})$) for sparse graphs and in
$O(n^{2,3727})$ for dense ones. Indeed, the best known algorithm \cite{AlonYZ97} for recognition of triangle-free graphs is in $O(m^{\frac{2 \omega}{\omega+1}})= O(m^{1.41})$   if $\omega= 2,3727$. For dense graphs one may use boolean matrix multiplication
in $O(n^{2,3727})$. 

Recognition of triangle-free graphs is still an active area of research and nowadays some lower bounds under complexity hypothesis are discussed see \cite{WilliamsW18}.

For the recognition of  co-triangle-free graphs using a similar technique, it can be done in $O(m^{\frac{1}{2}\omega})=O(m^{1.186})$  \cite{DurandH11} for sparse graphs and using matrix multiplication for denses graphs.

Second for  comparability graphs, unfortunately the recognition algorithm LexBFS based presented in Theorem \ref{thm:graph-searches} has a worst-case complexity of $O(n \cdot m)$.Similalry for comparability graphs a vertex ordering that avoids its pattern \emph{if it is a comparability graph} can be computed in linear time \cite{McConnellS99}, but it is still not known if one can check if this ordering avoids the comparability pattern in linear time. 

But comparability graphs and their complement classes can be recognized in 
$O(n^{2,3727})$ using matrix multiplication.

The best known algorithm
is in $O(m^{\frac{2 \omega}{\omega+1}})= O(m^{1.41})$ \cite{HabibG18}. For dense graphs one can also use matrix multiplication. For dense cocomparability graphs up to our knowledge only the matrix multiplication has been proposed so far.

\begin{Coro}
All classes of graphs defined with sets of patterns on 3 vertices can be recognized in $O(n^{2,3727})$.
\end{Coro}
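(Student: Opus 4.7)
The plan is to combine the exhaustive enumeration of Theorem~\ref{thm:characterization} with the complexity bounds assembled in Theorem~\ref{thm:linear} and the discussion immediately above this corollary. Every class appearing in Theorem~\ref{thm:characterization} falls into one of two buckets, and I would handle them separately. Classes in the first bucket are those covered by Theorem~\ref{thm:linear}: they are recognizable in linear time, which is trivially within $O(n^{2.3727})$. The second bucket consists of triangle-free graphs and comparability graphs (together with their complements, obtained by the complement basic operation, which are the two exceptional cases singled out by Theorem~\ref{thm:linear}).

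For the triangle-free bucket, I would cite the Alon--Yuster--Zwick algorithm~\cite{AlonYZ97}, which detects a triangle via $O(n^\omega)$-time operations on the adjacency matrix; plugging in the current best bound $\omega \leq 2.3727$ from~\cite{Williams12} yields the stated complexity. Recognizing co-triangle-free graphs reduces to the triangle-free case on $\overline{G}$, and building $\overline{G}$ takes $O(n^2)$ time, still within budget. For comparability graphs I would invoke the transitive orientation approach: use the algorithm of~\cite{McConnellS99} to produce a candidate orientation (or detect that none exists), and then verify transitivity by squaring the resulting Boolean adjacency matrix in $O(n^\omega)$. The co-comparability case is handled analogously on the complement.

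The main (minor) obstacle is bookkeeping rather than mathematics: one must verify that every entry of Theorem~\ref{thm:characterization} is indeed covered by one of the two buckets, including the ``up to complement and basic operations'' qualification, and in particular check that the somewhat unusual entry triangle-free $\cap$ co-chordal is not an additional exception (it is handled within linear time in Theorem~\ref{thm:linear} by a chordality test followed by an independence-number computation on the chordal graph). Since the enumeration is exhaustive and Theorem~\ref{thm:linear} is already explicit about the two non-linear classes, this verification reduces to a short inspection of the list rather than genuinely new algorithmic work, so the corollary follows.
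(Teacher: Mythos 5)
Your proposal is correct and follows essentially the same route as the paper: everything except triangle-free and comparability graphs (and their complements) is dispatched by Theorem~\ref{thm:linear}, and the two exceptional classes are handled via Boolean matrix multiplication in $O(n^{\omega})$ with $\omega\leq 2.3727$ (triangle detection by matrix squaring, and transitivity verification of the candidate orientation from \cite{McConnellS99}). The only cosmetic difference is that the paper also records the sparse-graph bounds $O(m^{2\omega/(\omega+1)})$ and $O(m^{\omega/2})$, which are not needed for the stated $O(n^{2.3727})$ bound.
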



\section{Discussions and open problems}
\label{sec:discussions}

In this section, we discuss related topics, such as larger patterns, and review some open problems.

\subsection{Larger patterns}
In this paper we have focused on patterns on three nodes, and we have now good picture the the associated classes. 
An obvious next step is to look at larger patterns. 
We review a few topics in this direction.

\sparagraph*{Straight line patterns and colorings} \label{subsec:on-a-line}

To illustrate the expressivity of the pattern characterizations, we show here how to express colorability notions in terms of forbidden patterns. 

Let us define a notion of a colorability and a type of patterns.
First, a graph is \emph{$(a,b)$-colorable} \cite{DemangeEW05}, if one can partition the vertices into $a$ independent sets and $b$ cliques. 
In particular, classic $k$-colorability corresponds to $(k,0)$-colorability. 
Second, a \emph{straight line pattern} is a pattern where the decided edges are exactly the ones between consecutive vertices. 
For example, on three nodes, \pbipartite{} and \psplit{} are straight line patterns, because they are formed of an edge followed by respectively another edges and a non-edge, with the top edge being undecided. Also, on two nodes, the two patterns with an edge and a non-edge are straight line patterns. As noted in Subsection~\ref{subsec:def}, these define the independent sets and the cliques.

Note that on the four examples above there is a link between colorability and straight line patterns: the bipartite graphs, split graphs, independent sets and cliques are respectively the (2,0), (1,1), (1,0) and (0,1)-colorable graphs. This is actually a general phenomenon, as stated in the following theorem. 

\begin{Theo}\label{thm:coloring}
If $P$ is a straight line pattern with $a$ edges and $b$ non-edges, then the class~$\CC_{P}$ is the class of $(a,b)$-colorable graphs.
\end{Theo}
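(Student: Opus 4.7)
The plan is to prove the set equality $\CC_P = \{(a,b)\text{-colorable graphs}\}$ by showing both inclusions. Throughout, I would encode the straight-line pattern $P$ on its $a+b+1$ vertices $v_1 < \ldots < v_{a+b+1}$ as the sequence $(e_1, \ldots, e_{a+b})$, where $e_i$ records the type (edge or non-edge) of the consecutive pair $(v_i, v_{i+1})$. By the definition of avoidance, an ordered graph avoids $P$ exactly when it contains no ordered tuple $u_1 < \ldots < u_{a+b+1}$ such that $(u_j, u_{j+1})$ has type $e_j$ for every $j$.

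For the direction ``$(a,b)$-colorable $\subseteq \CC_P$'', I would fix an $(a,b)$-coloring and relabel its parts as $A_1, \ldots, A_{a+b}$ so that $A_i$ is an independent set when $e_i$ is an edge and a clique when $e_i$ is a non-edge; this uses exactly $a$ independent parts and $b$ clique parts. Take the ordering obtained by concatenating $A_1, A_2, \ldots, A_{a+b}$. Assume for contradiction that some tuple $u_1 < \ldots < u_{a+b+1}$ realises $P$, and let $f(j)$ denote the part index of $u_j$, so $f$ is a non-decreasing map from $\{1, \ldots, a+b+1\}$ to $\{1, \ldots, a+b\}$. The key combinatorial step is to produce a $j$ with $f(j) = f(j+1) = j$: setting $g(j) := f(j) - j$, one has $g(1) \ge 0$, $g(a+b+1) \le -1$, and $g(j+1) \ge g(j) - 1$, so some $j$ satisfies $g(j) = 0$ and $g(j+1) = -1$. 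For that $j$, both $u_j$ and $u_{j+1}$ lie inside the common part $A_j$, forcing their adjacency to match the internal type of $A_j$, which by construction is the opposite of $e_j$; this contradicts the assumption that the pattern is realized.

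For the reverse inclusion, given any ordering $\tau$ of $V(G)$ that avoids $P$, I would define $h(v)$ as the maximum $k$ for which there exist $u_1 <_\tau \ldots <_\tau u_k = v$ with $(u_i, u_{i+1})$ of type $e_i$ for each $i < k$. Avoidance of $P$ forces $h(v) \le a+b$, and trivially $h(v) \ge 1$, so $h$ maps into $\{1, \ldots, a+b\}$. Let $V_i := h^{-1}(i)$. For any $u <_\tau v$ with both in $V_i$, take a witness $u_1 <_\tau \ldots <_\tau u_i = u$ for $h(u) = i$: if $(u,v)$ were of type $e_i$, appending $v$ would yield a matching prefix of length $i+1$ ending at $v$, contradicting $h(v) = i$. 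Hence $(u,v)$ has the opposite type, so $V_i$ is an independent set when $e_i$ is an edge and a clique when $e_i$ is a non-edge. This gives the desired partition into $a$ independent sets and $b$ cliques.

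The main obstacle is pinning down the forward direction: a straightforward pigeonhole only says that some pair $u_j, u_{j+1}$ lies in a common part, but to get a contradiction one needs that pair to sit in the ``diagonal'' part $A_j$ whose internal type directly conflicts with $e_j$. The discrete intermediate-value style argument on $g(j) = f(j) - j$ is precisely what upgrades plain pigeonhole into this diagonal witness. The backward direction is less delicate once one hits on ``longest matched prefix ending at $v$'' as the correct potential function.
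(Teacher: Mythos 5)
Your proof is correct, but it follows a genuinely different route from the paper. The paper argues by induction on $a+b$: it assumes (up to complementation) that the last decided pair of $P$ is an edge, peels off the set $X$ of vertices with no neighbour to their right (an independent set), shows that the remaining ordered graph avoids the truncated pattern $P'$, and invokes the inductive hypothesis; the converse direction appends an independent colour class at the end of an ordering avoiding $P'$. Your argument is direct and non-inductive. For ``$(a,b)$-colorable $\Rightarrow$ avoids $P$'' you concatenate the parts $A_1,\dots,A_{a+b}$ in the order dictated by the pattern and use the discrete intermediate-value argument on $g(j)=f(j)-j$ to force a ``diagonal'' pair $u_j,u_{j+1}\in A_j$, whose internal type clashes with $e_j$; this is a genuine strengthening of plain pigeonhole and is exactly what is needed. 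For the converse you use the level sets of the ``longest matched prefix ending at $v$'' function, which is the Gallai--Roy/Mirsky potential-function argument that the paper only alludes to in Corollary~\ref{cor:k-colorable}. Your version is self-contained (it does not rely on the base cases for cliques, independent sets, bipartite and split graphs from Theorem~\ref{thm:Damaschke}), it makes the connection to Mirsky's theorem structurally explicit rather than as an afterthought, and both directions are effectively algorithmic: the function $h$ can be computed greedily from an avoiding order, and the concatenation order is immediate from a colouring. The paper's induction is shorter on the page because it reuses earlier results, but the two proofs establish exactly the same statement, including the observation that only the multiset of edge/non-edge types matters, not their order along the pattern.
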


Before proving this theorem, let us highlight a remarkable property: the ordering of the edges and non-edges in the pattern does not matter. In other words, two straight line patterns with the same number of edges and non-edges define the exact same class.

\begin{proof}
We prove the result by induction on $a+b$.
For $a+b\leq 2$, we have already mentioned that the property is true (the case of $(0,2)$ follows from  $(2,0)$ by symmetry).
 
Now, let $P$ be an arbitrary straight line pattern with $a$ edges and $b$ non-edges, with $a+b>2$. 
It is sufficient to prove the result for the case when two last nodes of $P$ are linked by an edge. 
Indeed, if it is a non-edge, then we can consider the complement, and the result follows.
Let $P'$ be the same as $P$, but without the last vertex (and thus without the last edge). 
By induction, we assume that $P'$ satisfies the property.

We first show that the graphs of $\CC_P$ are $(a,b)$-colorable. 
Consider an ordered graph $G=(V,E)$ that avoids $P$. 
Let $X$ be the set of vertices of $G$ that have no neighbor to the right in the ordering. 
Note that $X$ is an independent set. 
We claim that the ordered subgraph $G'$ of $G$ induced by $V\setminus X$ avoids $P'$.
Indeed, if there is an occurrence of $P'$ in this subgraph then there is an occurrence of $P$ in $G$. This is because the last vertex of the occurrence has a neighbor on its right, and the pattern finishes by an edge.
Then as $G'$ avoids $P'$, by induction it is $(a-1,b)$-colorable. The fact that $X$ is an independent sets, leads to $G$ being $(a,b)$-colorable.

Conversely, let $G=(V,E)$ be $(a,b)$-colorable. 
Consider an independent set $X$ of the $(a,b)$-coloring.
Let $G'$ be the subgraph of $G$ induced by $V \setminus X$.
This subgraph is $(a-1,b)$-colorable. 
Thus by induction, it has an ordering that avoids $P'$.
Now, consider the ordering of $G$, made by concatenating this ordering of $G'$ and the nodes of $X$.
This ordering avoids~$P$. Indeed an occurrence of $P$ would imply an occurrence of $P'$ in the part of the ordering corresponding to $G'$.
\end{proof}

In the case $b=0$, Theorem~\ref{thm:coloring} corresponds to the classical Mirsky's Theorem, that states that the chromatic number of a graph $G$ is the minimum over all acyclic orientations of $G$ of the maximum length of a directed path \cite{Mirsky71}.

\begin{Coro}[Mirsky's Theorem]\label{cor:k-colorable}
A graph $G$ is $k$-colorable if and only if there exists an order on its vertices that avoids the straight line pattern with $k$ edges.
\end{Coro}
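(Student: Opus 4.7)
The plan is to obtain this statement as an immediate specialization of Theorem~\ref{thm:coloring} to the case $b = 0$; essentially all the work has already been done, and only a matching of terminology remains.

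First, I will unwind the two sides of the equivalence. On the coloring side, a $(k,0)$-coloring, as defined in Subsection~\ref{subsec:on-a-line}, is a partition of $V(G)$ into $k$ independent sets and $0$ cliques, which is exactly a proper $k$-coloring. (This agrees with the sample cases noted in the text: $(2,0)$-colorable graphs are the bipartite graphs, and $(1,0)$-colorable graphs are the independent sets.) On the pattern side, the straight line pattern with $k$ edges and $0$ non-edges is, by the definition of a straight line pattern, the pattern on $k+1$ ordered vertices $v_1 < v_2 < \cdots < v_{k+1}$ whose decided edges are exactly the consecutive pairs $\{v_i, v_{i+1}\}$, all of them being (plain) edges, while every non-consecutive pair is undecided. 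An occurrence of this pattern in an ordered graph is therefore an increasing sequence of $k+1$ vertices along which all consecutive pairs are edges.

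With these identifications in place, applying Theorem~\ref{thm:coloring} with $a = k$ and $b = 0$ directly gives that $\CC_P$ is the class of $k$-colorable graphs, which is precisely the statement of the corollary. For completeness, I will mention the connection to the classical Gallai--Hasse--Roy--Vitaver formulation: any vertex ordering avoiding the pattern induces an acyclic orientation of $G$ (orient each edge from the smaller to the larger endpoint) in which every directed path uses at most $k$ vertices, so the corollary recovers the fact that $\chi(G)$ equals the minimum, over acyclic orientations of $G$, of the number of vertices on a longest directed path.

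There is no genuine obstacle to carrying out this plan: the content of the corollary is already packaged inside Theorem~\ref{thm:coloring}, and once the vocabulary is lined up it becomes a one-line substitution. The only boundary case worth a parenthetical remark is $k = 1$, where the pattern is a single edge and the corollary correctly reduces to the statement that $G$ is $1$-colorable iff it has no edges.
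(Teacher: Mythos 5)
Your proposal is correct and matches the paper exactly: the corollary is obtained as the immediate specialization $a=k$, $b=0$ of Theorem~\ref{thm:coloring}, with only the identification of $(k,0)$-colorability with proper $k$-colorability needed, which is what the paper does (it gives no separate proof). The added remark connecting the ordering to an acyclic orientation with bounded directed path length is consistent with the paper's own comment relating the $b=0$ case to Mirsky's theorem.
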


\sparagraph*{Complexity of recognition for larger patterns}

The previous paragraph has an important consequence: unlike patterns on three nodes, patterns on four nodes can define classes that are NP-hard to recognize. 
Indeed the straight line pattern with three edges defines the class of 3-colorable graphs, and 3-colorability is an NP-hard property~\cite{Karp72}.

On the other hand, it is not true that all patterns on four nodes define a class that is NP-hard to recognize. 
For example, outerplanar graphs can be recognized in linear time~\cite{Mitchell79}, and are defined by the pattern on four nodes of Figure~\ref{fig:outerplanar}.

\begin{figure}[!h]
\begin{center}
\begin{tikzpicture}   
[scale=1.0,auto=left,every node/.style={circle,draw,fill=black!5}]   
\node (n1) at (0,0) {1};   
\node (n2) at (1,0) {2};   
\node (n3) at (2,0)  {3};   
\node (n4) at (3,0)  {4};

\draw (n1) to[bend left=50] (n3); 
\draw (n2) to[bend left=50] (n4);
 
\end{tikzpicture}
\end{center}
\caption{\label{fig:outerplanar} The pattern that characterizes outerplanar graphs.}
\end{figure}
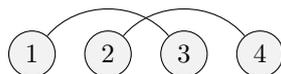

Finally, to illustrate the difficulty of having an intuition of the complexity of recognition, based only on the shape of the pattern, let us mention one more class.
Graphs of queue number 1~\cite{HeathR92}, are the ones that avoid the pattern of Figure~\ref{fig:queue-number-one}. 
These graphs are NP-hard to recognize~\cite{HeathLR92}, although the pattern is just a shuffling of the pattern of outerplanar graphs.

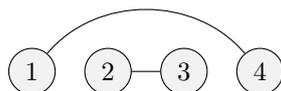
\begin{figure}[!h]
\begin{center}
\begin{tikzpicture}   
[scale=1.0,auto=left,every node/.style={circle,draw,fill=black!5}]   
\node (n1) at (0,0) {1};   
\node (n2) at (1,0) {2};   
\node (n3) at (2,0)  {3};   
\node (n4) at (3,0)  {4};

\draw (n1) to[bend left=50] (n4); 
\draw (n2) to[bend left=0] (n3);
 
\end{tikzpicture}
\end{center}
\caption{\label{fig:queue-number-one} The pattern that characterizes graphs of queue number 1.}
\end{figure}

Understanding the P/NP dichotomy for arbitrary patterns is clearly an essential problem here.
In this direction, Duffus et al \cite{DuffusGR95}  conjectured that most  classes characterized by  a 2-connected pattern are NP-complete to recognize.

\sparagraph*{Patterns on four nodes}

Before tackling arbitrary patterns, it seems reasonable to gather knowledge about patterns on four nodes. 
We already mentioned a few patterns on four nodes, for which we know the associated class.
Unfortunately, we do not know much more, although there are $3^6=729$ such patterns. 

\paragraph*{From patterns on three nodes}

Some patterns on four nodes are directly related to patterns on three nodes. 
For a pattern $Q$ on three nodes, one can consider the pattern $P$ on four nodes, made by adding a node on the right-end with only undecided edges to the nodes of~$Q$. Then the class $\CC_P$ is exactly the set of graphs that can be formed by taking a graph of $\CC_{Q}$ and (possibly) adding a vertex with an arbitrary adjacency. This follows from a proof similar to the one of Theorem~\ref{thm:coloring}.

\paragraph*{From forbidden subgraph characterizations}
Another way to reuse things we know is to exploit the remark at the end Subsection~\ref{subsec:basic}: a forbidden induced subgraph characterization can always be turned into a forbidden pattern characterization by taking all the orderings of this subgraph.
For example, the cographs as introduced in \cite{Seinsche74} are the $P_4$-free graphs, thus they are exactly the graphs that admit an ordering of the vertices avoiding the 12 patterns described in Figure~\ref{fig:P4}.
For cographs, there actually exists a more compact characterization using 2 patterns of size 3 and one of the $P_4$ orderings~\cite{Damaschke90}.
Other classes related to orderings of $P_4$ have been studied, like the perfectly orderable graphs~\cite{Chvatal84}, see~\cite{Brandstadt1999}.

\paragraph*{Intersection graphs}

There is one more specific class with a pattern on four nodes that we are aware of. This is the class of p-box graphs~\cite{SotoC15}, also known under other names in~\cite{CardinalFMTV18} and~\cite{CorreaFPS15}. 
Roughly, a graph is p-box if it is the intersection graph of rectangles having a corner on a line.
These are characterized by the pattern of Figure~\ref{fig:pbox}

\begin{figure}[h!]
\begin{center}
\begin{tikzpicture}   
[scale=1.0,auto=left,every node/.style={circle,draw,fill=black!5}]   
\node (n1) at (0,0) {1};   
\node (n2) at (1,0) {2};   
\node (n3) at (2,0)  {3};   
\node (n4) at (3,0)  {4};

\draw (n1) to[bend left=50] (n3); 
\draw (n2) to[bend left=50] (n4);
\draw[dashed] (n2) to (n3);
 
\end{tikzpicture}
\end{center}
\caption{\label{fig:pbox} The pattern that characterizes p-Box graphs.}
\end{figure}
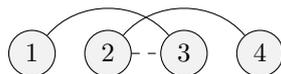

Note that this pattern is an extension of the pattern for outerplanar graphs, and therefore the p-box graphs contain the outerplanar graphs.
More generally a geometric approach is fruitful when looking at the extensions of the outerplanar pattern~\cite{FeuilloleyH19}.

\subsection{Graph parameters based on patterns}

For each pattern $P$, it is possible to define a graph parameter, that we call \emph{$P$-number}. 
One could also define the $\FF$-number  for a family $\FF$. Consider an ordered graph whose edges are colored.
We say that the coloring avoids the pattern, if for each color, the ordered graph induced by the edges of this color avoids the pattern. The $P$-number of a graph is the minimum number $k$ such that there exists an ordering of the graph and a coloring of its edges into $k$ colors, such that the coloring avoids the pattern~$P$.

The $P$-number has already been considered in the literature for some specific patterns $P$. 
For example, when $P$~is the pattern of outerplanar graphs (Figure~\ref{fig:outerplanar}), the $P$-number is known as the \emph{book thickness}~\cite{BernhartK79} or \emph{stack-number} \cite{DVW05}. 
Also for the pattern of Figure~\ref{fig:queue-number-one}, the parameter is the \emph{queue number}~\cite{HeathR92}.

Finally, if one fixes a number $k$ of colors, one can use the number of occurrences of the pattern as  another parameter. For the pattern of Figure~\ref{fig:outerplanar}, this is the \emph{$k$-page book crossing number}~\cite{ShahrokhiSSV96}.

One can view our main result as follows:

Recognition of  graph having $\FF$-number equal to 1, is polynomial (mostly linear) for every family of patterns on 3 vertices.
Therefore a natural question is: 
What is the complexity of recognizing $\FF$-number equal to 2, for the same families ?

\subsection{Ordering as a distributed certificate}

As noted in Section~\ref{sec:algorithms}, the recognition of the classes defined by patterns is always a problem in NP, as a correct ordering is a certificate that can be checked in polynomial time. 
This is notable, as it means that for all these classes there is a `standard' certificate. 

Certification mechanisms also appears in distributed computing. 
Namely in distributed decision~\cite{FeuilloleyF16, Feuilloley19}, one aims at deciding locally whether the network has some property. 
This often requires to help the nodes by giving them some piece of global information, that can be checked locally. This takes the form of a label for each node.
It has been noted that when the problem is to decide whether the network belongs to a class defined by a pattern (for example that the network is acyclic), providing each node with its rank in ordering is the optimal way to certify the property~\cite{Feuilloley18}.      

\subsection{Structural questions}

There are several interesting questions about the relations between classes defined by patterns.
For example, what can be said about the intersection of two classes defined by patterns?
The example of interval $\cap$ permutation shows that having two classes corresponding to patterns on three nodes does not imply that the intersection can be described by patterns on three nodes. 
On a related note, we emphasized in the text that in many cases the union-intersection property holds, but not always. 
It would be interesting to understand more precisely  in which case it does.
Another question is about pattern extension. When a patten is strictly included into another one, is it the case that the classes are also strictly included? It is the case when we restrict our attention to some specific pattern shapes~\cite{FeuilloleyH19}, but we have no general proof. Also note that when considering families, it is not true that adding a pattern reduces the family defined, as many cases Section~\ref{sec:main-proof} illustrate.

\vspace{1cm}
\vfill

\textbf{Acknowledgements:} We wish to thank the website \href{http://graphclasses.org/}{graphclasses.org}~\cite{graphclasses}, and its associated book~\cite{Brandstadt1999} that helped us to navigate through the jungle of graph classes. Furthermore  the authors wish to thank Yacine Boufkad, Pierre Charbit and Flavia Bonomo for fruitful discussions on this subject.
The first author is thankful to José Correa, Flavio Guinez and Mauricio Soto, for the very first discussions on this topic back in 2013.
Finally, we thanks the reviewers  for their thorough reading and their helpful comments.

\newpage

\DeclareUrlCommand{\Doi}{\urlstyle{same}}
\renewcommand{\doi}[1]{\href{https://doi.org/#1}{\footnotesize\sf doi:\Doi{#1}}}

\newcommand{\arxiv}[1]{{\footnotesize\sf arxiv: \href{http://arxiv.org/abs/#1}{#1}}}

\newcommand{\biburl}[1]{{\footnotesize\sf url: \href{#1}{#1}}}

\bibliographystyle{plainnat}
\bibliography{biblio-patterns}

\end{document}